\documentclass[12pt,a4paper,twoside]{report}
\usepackage[vmargin=20mm,hmargin=27mm]{geometry}
\usepackage{graphicx}
\usepackage{parskip}
\usepackage{setspace}
\usepackage{refcount}
\usepackage{upquote}
\usepackage{amsmath, amssymb, amsthm}
\usepackage{url}
\usepackage[dvipsnames]{xcolor}

\usepackage{graphicx}
\usepackage{tikz}
\usepackage{svg}
\usepackage{mdframed}
\usepackage[colorlinks=true,
linkcolor=webgreen,
filecolor=webbrown,
citecolor=webgreen, pdfborder={0 0 0}]{hyperref}
\usepackage[ruled,vlined]{algorithm2e}
\usepackage[T1]{fontenc}
\usepackage{libertine}
\usepackage[libertine]{newtxmath} % Use Libertine for math
\usepackage[scaled=1]{inconsolata}
\usepackage{comment}

\usepackage{longtable}
\usepackage[tableposition=below]{caption}
\definecolor{webgreen}{rgb}{0,.5,0}
\definecolor{webbrown}{rgb}{.6,0,0}

\newtheorem{theorem}{Theorem}[section]
\newtheorem{lemma}[theorem]{Lemma}
\newtheorem{conjecture}[theorem]{Conjecture}
\newtheorem{corollary}[theorem]{Corollary}
\newtheorem{example}[theorem]{Example}
\newtheorem{definition}{Definition}
\newtheorem{observation}[theorem]{Observation}
\newtheorem{problem}[theorem]{Problem}

\newif\ifsubmission
\title{Approaching the Conway-99 problem using SAT solvers}
\author{Ali Keramatipour}
\date{June 2023}
\newcommand{\candidatenumber}{9634J}
\newcommand{\college}{Churchill College}
\newcommand{\course}{Master of Philosophy in Advanced Computer Science}

\newcommand{\red}[1]{\texttt{\textcolor{red!80!black}{#1}}}
\newcommand{\syntactic}[1]{\texttt{#1}}

\begin{document}

\begin{sffamily} % use a sans-serif font for the pro-forma cover sheet

\begin{titlepage}
\makeatletter

% University logo with shield hanging in left margin
\hspace*{-14mm}\includegraphics[width=65mm]{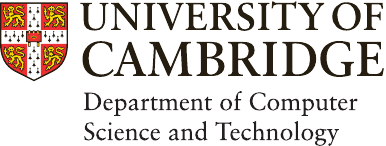}

\ifsubmission

% submission proforma cover page for blind marking
\begin{Large}
\vspace{20mm}
Research project report title page

\vspace{35mm}
Candidate \candidatenumber

\vspace{42mm}
\textsl{``\@title''}

\end{Large}

\else

% regular cover page
\begin{center}
\Huge
\vspace{\fill}

\@title
\vspace{\fill}

\@author
\vspace{10mm}

\Large
\college
\vspace{\fill}

\@date
\vspace{\fill}

\end{center}

\fi

\vspace{\fill}
\begin{center}
Submitted in partial fulfillment of the requirements for the\\
\course
\end{center}

\makeatother
\end{titlepage}

\newpage

Total page count: \pageref{lastpage}

% calculate number of pages from
% \label{firstcontentpage} to \label{lastcontentpage} inclusive
\makeatletter
\@tempcnta=\getpagerefnumber{lastcontentpage}\relax%
\advance\@tempcnta by -\getpagerefnumber{firstcontentpage}%
\advance\@tempcnta by 1%
\xdef\contentpages{\the\@tempcnta}%
\makeatother

Main chapters (excluding front-matter, references and appendix):
\contentpages~pages
(pp~\pageref{firstcontentpage}--\pageref{lastcontentpage})

Main chapters word count: 13675

Methodology used to generate that word count: Overleaf

\end{sffamily}

\vspace{\fill}
% \onehalfspacing
\ifsubmission\else\makeatletter
\textbf{\Huge Declaration}
\vspace{40pt}

I, \@author\ of \college, being a candidate for the \course, hereby
declare that this report and the work described in it are my own work,
unaided except as may be specified below, and that the report does not
contain material that has already been used to any substantial extent
for a comparable purpose.

% Add here things like: Figure X is the work of Y, etc.

\bigskip 
\textbf{Signed:} Ali Keramatipour

\bigskip
\textbf{Date:} 2023/06/18
\vspace{\fill}
\makeatother\fi

\chapter*{Abstract}

The Conway-99 problem questions the existence of a strongly regular graph with 99 vertices and specific parameters. A \textit{strongly} regular graph is a regular graph that exhibits two additional properties: vertices must share a fixed number of neighbours, depending on whether they are adjacent or not, given by two parameters. Despite the search space for this graph being finite, the computational power needed to traverse it is substantial. Therefore, better strategies are required in order to find this graph or prove its non-existence. SAT solvers, designed to solve instances of boolean satisfiability formulas, have been developed and optimised significantly due to the simplicity of SAT problems. Based on Cook-Levin's theorem, computer scientists have been focusing on developing efficient SAT solvers as many problems can be reduced to a SAT problem instance. Hence, we decided to approach the Conway-99 problem using SAT solvers. To do this, we study strongly regular graphs' properties and SAT solvers' capabilities. By encoding the problem of finding strongly regular graphs into SAT instances and running experimental tests, we shall see the incapability of SAT solvers facing this problem in a reasonable time. We will then explore the underlying mathematical reasons for these limitations.

% \ifsubmission\else
% % not included in submission for blind marking:

% \chapter*{Acknowledgements}

% This project would not have been possible without the wonderful
% support of \ldots [optional]

% \fi
\cleardoublepage % preserve page numbers after missing acknowledgements

\tableofcontents
%\listoffigures
%\listoftables

\chapter{Introduction}
\label{firstcontentpage}

The Conway-99 problem \cite{conway5}, an open math problem in the field of graph theory, asks about the existence of a strongly regular graph with 99 vertices and specific parameters. Despite being named after John Horton Conway, J. Seidel was the one who initially showed interest in discovering this graph, as mentioned in a dedication to him \cite{dedicatedToSeidel}. In 1969, Norman L. Biggs suggested the possible existence of this graph with the said parameters \cite{biggs1971finite}.

\section{Strongly regular graphs}
A regular graph is a graph where each vertex has the same degree (number of neighbours).
A strongly regular graph is a regular graph with the property that the number of common
neighbours of two vertices can be determined specifically by their adjacency. The notion of strongly regular graphs was first introduced by Raj Chandra Bose \cite{bose1963strongly}, a mathematician known for his work in the theory of error-correcting codes.

Mathematicians have conducted extensive research on the existence of various strongly regular graphs. Their symmetrical properties have proved to be valuable and applicable in numerous fields, including statistics, Euclidean geometry, group theory, coding theory, cryptography, and more. For example, these symmetrical structures meet coding theory when building efficient and easy-to-decode codes. Crnković et al. \cite{crnkovic2012ternary} have developed various ternary codes using the structure of strongly regular graphs. Alternatively, in statistics, one can use these ideally structured graphs for experiments \cite{Bro22}. One distinctive example is \cite{janmark2014global}, where Janmark et al. made use of strongly regular graphs' local symmetrical structure in quantum search.

\begin{figure}[htbp]
  \centering
  \begin{minipage}[t]{0.4\textwidth}
    \centering
  \includegraphics[width=\linewidth]{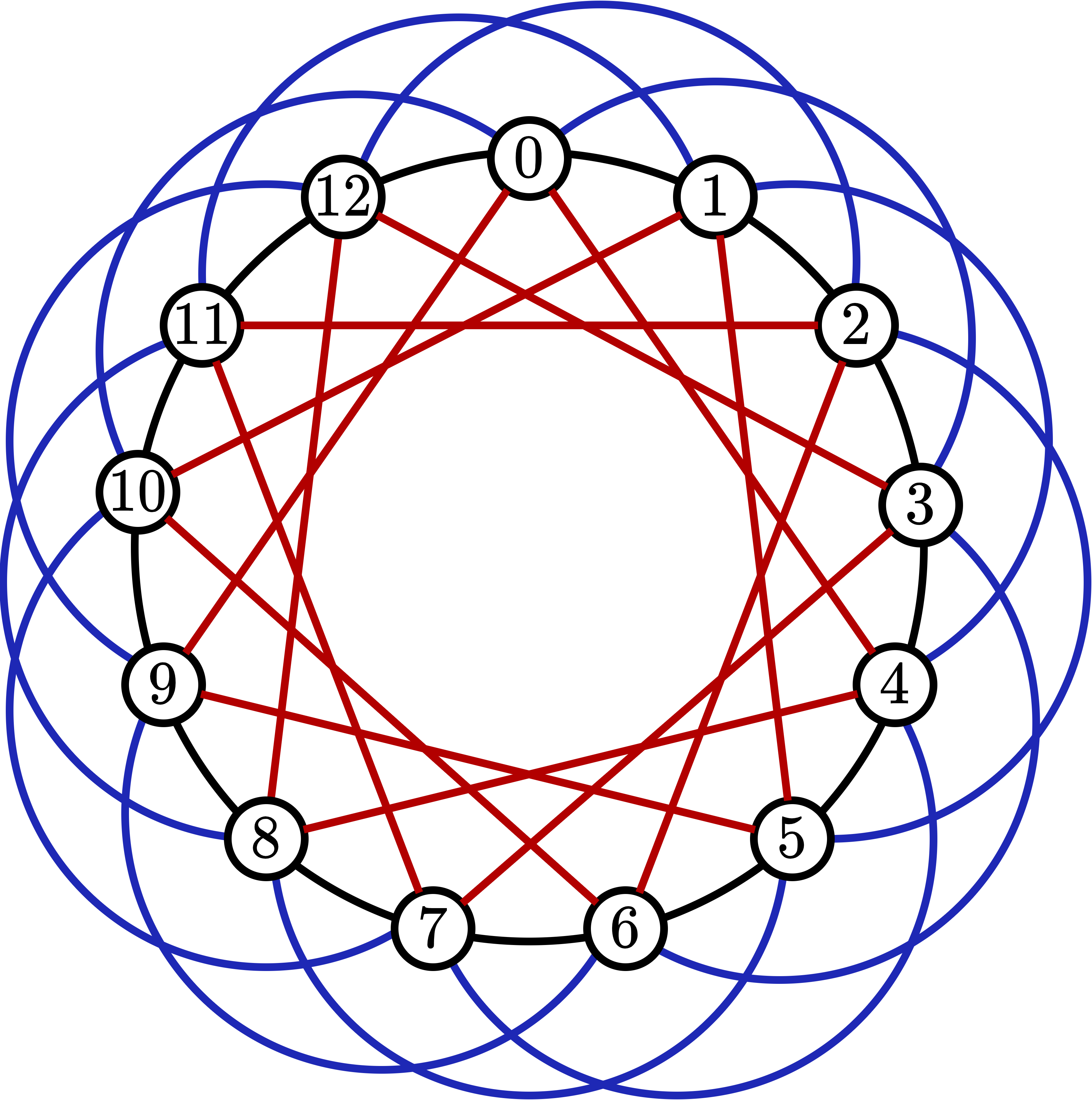}
  \caption{A strongly regular graph, Paley(13).}
  \label{paley13}
  \end{minipage}\hfill
  \begin{minipage}[t]{0.4\textwidth}
  \centering
  \includegraphics[width=\linewidth]{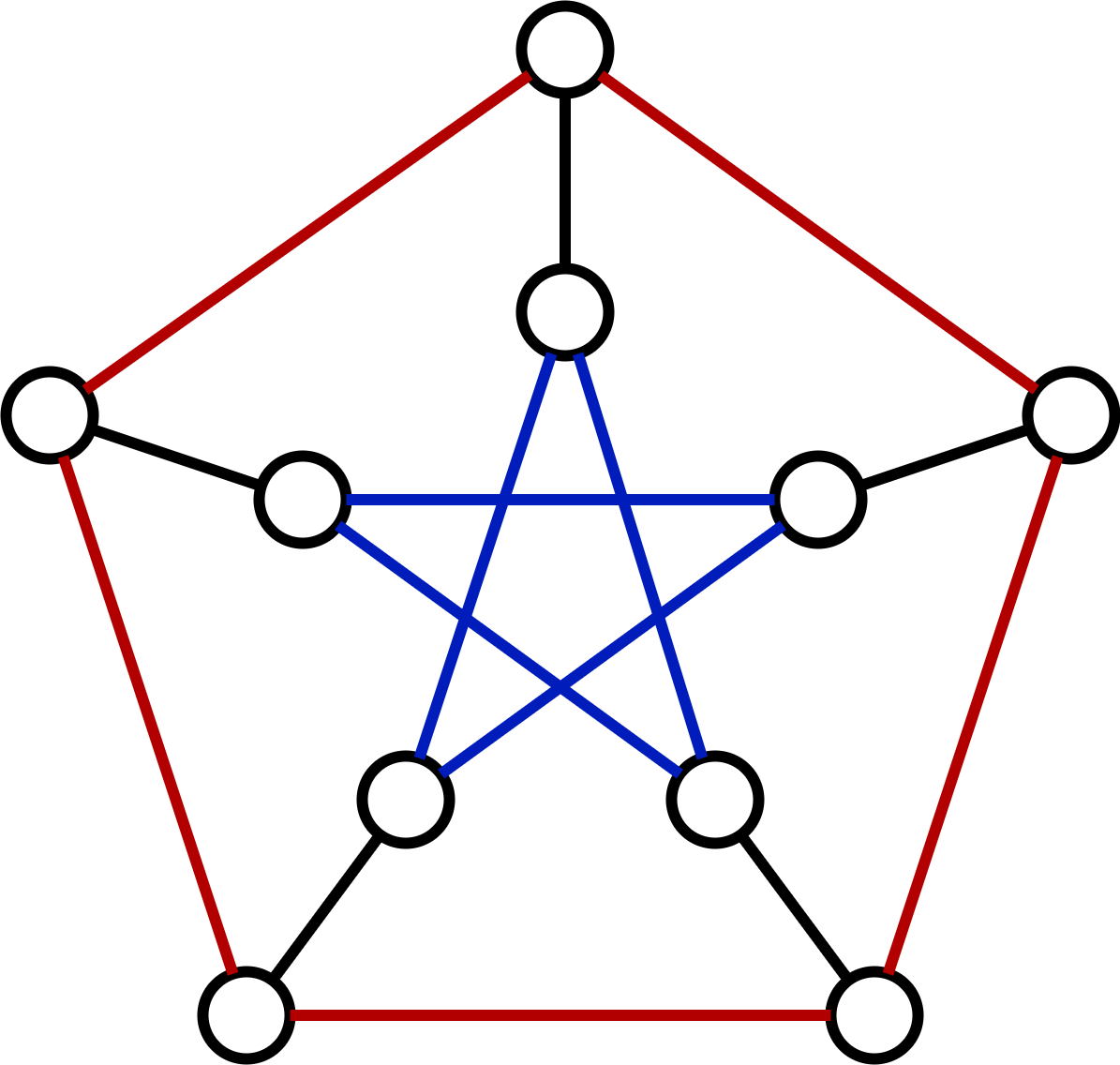}
  \caption{Petersen graph}
  \label{fig:p1}
  \end{minipage}
\end{figure}

These highly structured graphs are so interesting because they also show random behaviours \cite{cameron2004strongly}. In fact, this randomness makes it hard to prove the existence or non-existence of such graphs. Based on both random and structured properties of strongly regular graphs, many applications can be discovered.

One famous and small strongly regular graph, with ten vertices, is the Petersen graph shown in Figure \ref{fig:p1}. In \cite{knuth1997art}, Donald Knuth describes the Petersen graph as \textit{``a remarkable configuration that serves as a counterexample to many optimistic predictions about what might be true for graphs in general"}. This graph is a perfect example of strongly regular graphs' strange and valuable nature. An entire book, authored by Holton and Sheehan \cite{holton1993petersen}, has been dedicated to it because of its usefulness.

It can be proven that at most five strongly regular graphs exist, such that every edge belongs to a unique triangle and every non-edge belongs to a unique quadrilateral. In other words, this means that when two vertices are adjacent, they share only one common neighbour (forming a triangle), and when two vertices are not adjacent, they share two common neighbours (forming a quadrilateral); therefore, the number of their common neighbours is determined by their adjacency. The Conway-99 problem asks for the existence of one such graph with 99 vertices and 14 regularity.

Brouwer and Haemers \cite{brouwer2011spectra} provide an equivalent definition of strongly regular graphs based on the field of spectral graph theory and linear algebra. A graph is strongly regular, if its adjacency matrix has exactly three eigenvalues. We will go through these in more detail in Sections \ref{preliminaries} and \ref{srgGroups}.

\section{Boolean satisfiability problem}
The boolean satisfiability problem (SAT) is one fundamental problem in mathematics and computer science. It involves determining the possibility of assigning truth values to boolean variables (0/\syntactic{False}) or (1/\syntactic{True}) so that the formula evaluates to \syntactic{True}. The Cook-Levin \cite{Cook71}\cite{Levin73} theorem concludes that SAT problem is NP-Complete, which means any NP problem can be reduced to an instance of the SAT problem. The simplicity of the SAT problem's form has motivated many to develop efficient algorithms to overcome this problem.

This is where SAT solver models come into play. They are designed to solve instances of the SAT problem. As an input, they receive a boolean formula and determine the existence of an assignment of truth values to the variables, such that the whole formula evaluates to \syntactic{True}. SAT solvers use a variety of algorithms and heuristics to reach and converge on a solution. 

It is trivial that the search space to find the Conway-99 graph is limited. However, an exhaustive search attack to this problem is beyond the capabilities of computing power. This led us to consider using SAT solvers to tackle the problem. SAT search is a widely used approach to go through the search space more efficiently.

A commonly used method in SAT solvers is \textit{symmetry-breaking}, which is particularly relevant when dealing with strongly regular graphs that have \textit{highly symmetrical} structures. This technique allows SAT solvers to avoid searching for equivalent cases, making them valuable tools in these scenarios. Even though strongly regular graphs exhibit symmetry, it is their distinctive asymmetrical characteristics that make them difficult to identify. More techniques and algorithms are commonly used, which we will discuss later in Section \ref{SATSearch}. Then, in Section \ref{experiments}, we will examine the experiments we had and analyse them mathematically.

\section{Work completed}
Our research involves converting the Conway-99 problem into a SAT problem to determine if it can be solved within a reasonable timeframe using SAT solvers. We generalised this using different SAT methods to search for any strongly regular graph with its specific properties. On this path, we studied the theoretical aspects of strongly regular graphs, their symmetries and patterns. We proved some of these patterns must not exist in any possible answers to the Conway-99.

Overall, the research combined two significant and intriguing areas of study that have the ability to provide meaningful insights into both fields.

\chapter{Preliminaries} \label{preliminaries}

\section{Basics}

In this work, for a graph $G = (V, E)$, the vertex set used is $\{1,2,\cdots,n\}$, while $n$ represents the number of vertices in a graph, or $n = |V|$. For edges, notations $\{u,v\}$ or $uv$ are used, while $u,v \in V$.  It is worth noting that all graphs under consideration are connected and simple, that is, undirected and without any double-edges or self-loops. Thus, when we refer to a graph, we specifically mean a simple graph. The neighbourhood of a vertex $v$ is denoted as $N(v)$, and its closed neighbourhood, that is, $N(v) \cup \{v\}$, is denoted as $N[v]$.

We begin by mathematically defining our main concept, strongly regular graphs:
\begin{definition}[Strongly regular graphs] \label{srgDefinition}
A graph $G = (V, E)$ is said to be strongly regular with parameters $(n, k, \lambda, \mu)$ if it satisfies the following conditions:
\begin{enumerate}
\item $G$ is regular of degree $k$ and is neither complete nor empty.
\item Every pair of adjacent vertices in $G$ has exactly $\lambda$ common neighbors.
\item Every pair of non-adjacent vertices in $G$ has exactly $\mu$ common neighbors.
\end{enumerate}
\end{definition}

In our context, when we state that a graph has a $\lambda$ ($\mu$) parameter, we mean every pair of adjacent (non-adjacent) vertices share $\lambda$ ($\mu$) neighbours. This is independent of the underlying graph being strongly regular or not.

While searching for strongly regular graphs, it can be readily proved that for a $(n, k, \lambda, \mu)$ strongly regular graph $G$, its complement $\overline{G}$ is a strongly regular graph with parameters $(n, \overline{k}, \overline{\lambda}, \overline{\mu})$:
\begin{align*}
    \overline{k} &= n - k - 1\\
    \overline{\lambda} &= n - 2 - 2k + \mu \\
    \overline{\mu} &= n - 2k - \lambda
\end{align*}

A graph $G$ is called a \textit{trivial} strongly regular, and is \textbf{not} considered as a strongly regular graph, if either itself or its complement $\overline{G}$ is disconnected. These graphs satisfy strong regularity conditions trivially. As an example, the complete $K_n$ and the empty graphs trivially satisfy these conditions.

Our focus in this section will be to limit the possible parameters further and examine the specific requirements and conditions that must be met by a set of parameters $(n,k,\lambda,\mu)$ so that there could exist an $(n,k,\lambda,\mu)$ strongly regular graph.

\begin{theorem} \label{basicDoubleCounting}
    In a strongly regular graph with parameters $(n,k,\lambda,\mu)$, the following equation holds:
    \begin{equation*}
        (n - k - 1)\mu = k(k - \lambda - 1).
    \end{equation*}
\end{theorem}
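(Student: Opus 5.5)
We prove the identity by double counting. Fix an arbitrary vertex $v \in V$ and count the ordered pairs $(u,w)$ such that $u \in N(v)$, $w \notin N[v]$, and $uw \in E$. Equivalently, we count the edges running between the neighbourhood $N(v)$ and the set $V \setminus N[v]$ of vertices at distance two or more from $v$. Both sides of the claimed equation will turn out to be this same number, counted from opposite ends of those edges.

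\textbf{Counting from $N(v)$.} Take any $u \in N(v)$. By regularity (condition 1 of Definition \ref{srgDefinition}), $u$ has exactly $k$ neighbours. One of them is $v$ itself. Since $u$ and $v$ are adjacent, condition 2 says exactly $\lambda$ of the remaining neighbours of $u$ lie in $N(v)$. Because the graph is simple, $u$ is not adjacent to itself. So the number of neighbours of $u$ outside $N[v]$ is $k - 1 - \lambda$. Summing over the $k$ choices of $u$ gives $k(k-\lambda-1)$ pairs.

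\textbf{Counting from $V \setminus N[v]$.} This set has $n - k - 1$ vertices. Take any $w$ in it; then $w$ is not adjacent to $v$, and $w \neq v$. By condition 3, $w$ and $v$ have exactly $\mu$ common neighbours. These common neighbours are precisely the $u \in N(v)$ with $uw \in E$. Summing over $w$ gives $(n-k-1)\mu$ pairs.

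Equating the two counts yields $(n-k-1)\mu = k(k-\lambda-1)$. There is no real obstacle in this argument. The only care needed is the bookkeeping in the first count: excluding $v$ itself and the $\lambda$ neighbours inside $N(v)$, and relying on simplicity so that no vertex is its own neighbour. Note also that the argument works even if $V\setminus N[v]$ is empty, since then both sides vanish; in particular, connectedness is not used.
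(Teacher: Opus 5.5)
Your proposal is correct and is essentially the paper's own proof: both double count the edges between $N(v)$ and $V - N[v]$, obtaining $k(k-\lambda-1)$ from the $N(v)$ side and $(n-k-1)\mu$ from the non-neighbour side. Your explicit bookkeeping (excluding $v$ itself and relying on simplicity) is just a slightly more careful version of the paper's count.
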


\begin{proof}
    To prove this, we will employ a double-counting approach and count the number of edges between two sets: $N(v)$ and $V - N[v]$. The size of the set $V - N[v]$ equals $n - k - 1$. Each vertex $u \in V - N[v]$ is connected to exactly $\mu$ vertices in $N(v)$. As a result, there are $(n - k - 1)\mu$ edges between these two sets.

    On the other hand, there exist $k$ vertices in $N(v)$, such that each of them has $\lambda$ neighbours within $N(v)$ and one neighbour $v$ itself. Therefore, $k - \lambda - 1$ edges remain for each vertex, connecting it to $v$'s non-neighbours. This gives us a total of $k(k - \lambda - 1)$ edges between the sets.

    By counting the same thing with two different expressions, we can infer that the equation
    \begin{equation} \label{basicSRGEquation}
        (n - k - 1)\mu = k(k - \lambda - 1)
    \end{equation}
    holds.
\end{proof}

\section{Algebraic graph theory}
The algebraic implications of strongly regular graph parameters are of significant interest to mathematicians and researchers in the field. The study of such parameters can provide valuable insights into the structural properties of graphs and their relationships with other mathematical concepts.

To study these properties, let $A$ be the graph $G$'s adjacency matrix. The identity matrix is denoted by $I$, while the all-one matrix is $J$.

\begin{theorem}\label{srgAlgTh}
    A graph $G$ is an $(n,k,\lambda,\mu)$ strongly regular graph if, and only if, its adjacency matrix satisfies the equation \begin{equation} \label{srgAlgEq}
        A^2 = kI + \lambda A + \mu(J - I - A).
    \end{equation}
\end{theorem}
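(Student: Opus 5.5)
The plan is to compare the two sides of \eqref{srgAlgEq} entry by entry, using the combinatorial meaning of $A^2$. Since $A$ is the symmetric $0/1$ adjacency matrix of a simple graph, we have $(A^2)_{uv} = \sum_{w} A_{uw}A_{wv}$. This counts the vertices $w$ adjacent to both $u$ and $v$. So $(A^2)_{uu} = \deg(u)$, and for $u \neq v$, $(A^2)_{uv} = |N(u)\cap N(v)|$. On the right-hand side, the three matrices $I$, $A$ and $J-I-A$ have disjoint supports that together cover every entry. $I$ marks the diagonal, $A$ marks adjacent pairs, and $J-I-A$ marks distinct non-adjacent pairs. Hence the $(u,v)$ entry of the right-hand side is $k$ if $u=v$, $\lambda$ if $uv\in E$, and $\mu$ if $u\neq v$ and $uv\notin E$.

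For the forward direction, assume $G$ is $(n,k,\lambda,\mu)$ strongly regular in the sense of Definition \ref{srgDefinition}. We check the three kinds of entries in turn. Diagonal entries equal $\deg(u)=k$ by regularity. Entries for adjacent pairs equal $\lambda$ by condition 2. Entries for non-adjacent distinct pairs equal $\mu$ by condition 3. Thus the two matrices agree everywhere.

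For the converse, assume \eqref{srgAlgEq} holds and read the same three cases backwards. The diagonal gives $\deg(u)=k$ for every $u$, so $G$ is $k$-regular. The off-diagonal entries then give exactly the $\lambda$ and $\mu$ common-neighbour conditions.

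The only step needing care is the non-degeneracy clause in condition 1 of Definition \ref{srgDefinition}, namely that $G$ is neither complete nor empty. The matrix equation alone does not force this. For example, $K_n$ satisfies it with $k=n-1$, $\lambda=n-2$ and arbitrary $\mu$, because the $J-I-A$ term then vanishes. I would handle this by reading the theorem, as is standard, as a statement about graphs that are already assumed not complete or empty, with parameters that are then meaningful. Under that reading the proof is just the entrywise comparison above. I would state this convention explicitly at the start of the proof rather than try to derive it from \eqref{srgAlgEq}.
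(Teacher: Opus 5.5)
Your proposal is correct and follows essentially the same route as the paper. The paper also reads $(A^2)_{uv}$ as the number of length-two walks, i.e. common neighbours, via its walk-counting lemma. It then matches the diagonal, adjacent and non-adjacent entries against $kI$, $\lambda A$ and $\mu(J-I-A)$ in both directions. Your remark that the equation alone cannot rule out the complete or empty graph is accurate, and the paper's proof passes over it silently, so stating that convention explicitly is a sensible addition.
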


There is a more generalised result for regular graphs that states a graph is connected and regular if and only if the matrix $J$ is a linear combination of powers of $A$. The proof of it is similar to the proof of Theorem \ref{srgAlgTh}. Their proofs rely on the following lemma regarding the powers of an adjacency matrix:

\begin{lemma}
    Let $G$ be a graph with an adjacency matrix $A$. For a pair of vertices $\{i,j\}$, and for any non-negative integer $p$, the $ij$-th entry of the matrix $A^p$ is the number of walks of length $p$ from vertex $i$ to vertex $j$ in $G$.
\end{lemma}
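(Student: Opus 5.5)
We will prove the lemma by induction on $p$. A walk of length $p$ from $i$ to $j$ means a sequence of vertices $i = v_0, v_1, \ldots, v_p = j$ with $v_{t-1}v_t \in E$ for every $1 \le t \le p$. Repeated vertices are allowed.

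For the base case $p = 0$, we have $A^0 = I$. The only walk of length $0$ from $i$ to $j$ is the trivial sequence $(i)$, and it exists exactly when $i = j$. So the count is $1$ if $i=j$ and $0$ otherwise, which is $(A^0)_{ij}$. We may also record $p=1$ for intuition: $A_{ij}$ is $1$ exactly when $ij \in E$, which is exactly when there is a single-edge walk from $i$ to $j$.

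For the inductive step, assume the claim holds for $p$ and every pair of vertices. We expand the matrix product:
\begin{equation*}
(A^{p+1})_{ij} = (A^p A)_{ij} = \sum_{l=1}^{n} (A^p)_{il} A_{lj}.
\end{equation*}
The key step is a bijection. We partition the walks of length $p+1$ from $i$ to $j$ by their penultimate vertex $v_p = l$. Each such walk corresponds uniquely to a walk of length $p$ from $i$ to $l$, followed by the edge $lj$. This edge must exist, which is recorded by the factor $A_{lj} \in \{0,1\}$. By the induction hypothesis, there are $(A^p)_{il}$ walks of length $p$ from $i$ to $l$. So the number of walks of length $p+1$ from $i$ to $j$ with penultimate vertex $l$ is $(A^p)_{il}A_{lj}$. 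Summing over $l$ gives the displayed sum, which completes the induction.

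The main point requiring care is checking that this decomposition really is a bijection. Distinct walks either differ in their first $p+1$ vertices or in their penultimate vertex, so no walk is counted twice. Every walk is counted once, since every walk has some penultimate vertex. We also use that $A$ is a $0/1$ matrix, which holds because the graph is simple (no multi-edges, and no loops, so $A_{ll}=0$). Beyond this, the argument is routine.
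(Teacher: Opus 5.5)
Your proof is correct and follows the paper's argument essentially step for step: induction on $p$ with base case $A^0 = I$, then expanding $(A^{p+1})_{ij} = \sum_{l} (A^p)_{il} A_{lj}$ and classifying walks of length $p+1$ by their penultimate vertex. Your explicit bijection (a walk of length $p$ from $i$ to $l$ followed by the edge $lj$) states the ``each walk is counted exactly once'' step more cleanly than the paper does, but the route is the same.
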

\begin{proof}
An induction on $p$ will be used.

\textbf{Base case}: For $p = 0$, the $ij$-th entry of $A^0 = I$, which is equal to the identity matrix, is 1 on its diagonal. This is in line with the fact that there is one walk of length 0 from any vertex to itself and no walks of length 0 between different vertices.

\textbf{Induction step}:
We assume the lemma holds for $k$. We prove it for $k + 1$.

For the $ij$-th entry of matrix $A^{k + 1}$ the equation
\begin{equation*}
    a^{k + 1}_{ij} = \sum_{v = 1}^n a^k_{iv}a_{vj}
\end{equation*} holds. This is the sum over the products of the $iv$-th entry of $A^k$ and the $vj$-th entry of $A$.

By induction's hypothesis, the $iv$-th entry of $A^k$ is the number of walks of length $k$ between $i$ and $v$. Therefore, if vertex $v$ is adjacent to $j$, extending all these walks with the edge $\{v,j\}$ would result in new walks that are of length $k + 1$. As a result, all walks are counted. Also, since walks of length $k - 1$ are counted only once, and the newly counted walks are different as they all end in different vertices $v$, each walk gets counted exactly once, and the lemma holds.
\end{proof}

\begin{proof}[Proof of theorem \ref{srgAlgTh}]
    The number of walks of length two between a vertex and itself equals its degree; hence, in a $k$-regular graph, matrix $A^2$'s diagonal equals $k$. On the other hand, if Equation \ref{srgAlgEq} holds, the graph must be $k$-regular.
    
    A pair of adjacent vertices ${u,v}$ in a strongly connected graph share $\lambda$ neighbours. We can infer that there are $\lambda$ walks of length two between $u$ and $v$, if, and only if, the equality $a^2_{u,v} = \lambda$ holds for all $uv$ such that $a_{u,v} = 1$.
    
    For non-adjacent vertices, the number of shared neighbours, and thus, the walks of length two in an $(n, k, \lambda, \mu)$ strongly regular graph, are $\mu$. The matrix $J - I - A$ contains $1$ only on disconnected pairs of vertices. Thus, the matrix $A^2$ contains $\mu$ where on indices where $J - I - A$ is $1$, if and only if, each two non-neighbours share $\mu$ neighbours.
    
    The summation of the multiplications of matrix $I$ by the regularity parameter $k$, matrix $A$ by $\lambda$ parameter, and matrix $J - I - A$ by $\mu$ parameter will generate the matrix $A^2$ if and only if $G$ is an $(n,k,\lambda,\mu)$ strongly regular graph.
\end{proof}

We examined the algebraic representation of the Definition \ref{srgDefinition} through the equation \ref{srgAlgEq}.

\begin{definition} [Eigenvalues and eigenvectors]
For an adjacency matrix $A$, a scalar $\tau$ is called an eigenvalue of $A$ if there exists a non-zero vector $x$, such that $Ax = \tau x$. The vector $x$ is called an eigenvector of the corresponding matrix. The multiplicity of eigenvalue $\tau$ is the number of linearly independent eigenvectors corresponding to it.
\end{definition}

Consider vector $\textbf{1} = ( 1, 1, \cdots, 1)^T$ of length $n$. If $G$ is a $k$-regular graph, the sum of each row and column of its adjacency matrix $A$ would be $k$. Consequently, the equation
\begin{equation} \label{kIsAnEigenvalue}
    A\textbf{1} = k\textbf{1}
\end{equation} holds. This implies that $k$ is an eigenvalue and $\textbf{1}$ is an eigenvector of the adjacency matrix $A$.

\begin{theorem} \label{kRegularEigenvalueMultiplicity}
    The maximum absolute eigenvalue of a connected graph $G$'s adjacency matrix $A$ is $\Delta(G)$. This maximum value is attained if, and only if, the graph is $\Delta(G)$-regular. The multiplicity of $\Delta(G)$ as an eigenvalue is one.
\end{theorem}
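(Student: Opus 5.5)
The plan is to work directly with an eigenvector and the row sums of $A$, using a maximal-entry argument rather than Perron--Frobenius. Let $\tau$ be any eigenvalue of $A$ with eigenvector $x \neq 0$, and choose an index $i$ with $|x_i| = \max_j |x_j| > 0$. Reading off the $i$-th row of $Ax = \tau x$ gives
\begin{equation*}
    |\tau|\,|x_i| = \Bigl|\sum_{j \in N(i)} x_j\Bigr| \le \sum_{j \in N(i)} |x_j| \le \deg(i)\,|x_i| \le \Delta(G)\,|x_i|,
\end{equation*}
so $|\tau| \le \Delta(G)$. Since $A$ is real symmetric, all eigenvalues are real, so this bounds the largest absolute eigenvalue.

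\textbf{Step 1: regularity gives the value $\Delta(G)$.} If $G$ is $\Delta(G)$-regular, then Equation \ref{kIsAnEigenvalue} shows that $\Delta(G)$ is attained, with eigenvector $\textbf{1}$.

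\textbf{Step 2: attaining $\Delta(G)$ forces regularity.} This is the main obstacle. I first reduce to an eigenvalue equal to $+\Delta(G)$. If instead $\tau = -\Delta(G)$, I replace $x$ by $|x|$, the entrywise absolute value. The Rayleigh quotient satisfies
\begin{equation*}
    \frac{|x|^T A |x|}{|x|^T|x|} \ge \frac{|x^T A x|}{x^T x} = \Delta(G),
\end{equation*}
so $\Delta(G)$ is the largest eigenvalue and $|x|$ is an eigenvector for it. With $\tau = \Delta(G)$, I fix $i$ maximising $x_i$ and may assume $x_i > 0$ after a sign change. In the displayed chain every inequality must then be an equality. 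This forces $\deg(i) = \Delta(G)$ and $x_j = x_i$ for every neighbour $j$ of $i$. Each such neighbour is again a maximising index, so the argument propagates along edges. Connectivity then gives that $x$ is constant on $V$ and every vertex has degree $\Delta(G)$.

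\textbf{Step 3: multiplicity one.} The propagation in Step 2 applies to any eigenvector for $\Delta(G)$, so every such eigenvector is a multiple of $\textbf{1}$. Hence the eigenspace is one-dimensional. By the paper's definition of multiplicity as the number of linearly independent eigenvectors, the multiplicity is one. Note that for a bipartite regular graph $-\Delta(G)$ is also an eigenvalue, so the multiplicity claim should be read for the eigenvalue $+\Delta(G)$.
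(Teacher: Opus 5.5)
Your proof is correct. Its core matches the paper's argument: take an entry of maximal modulus in an eigenvector, read off the $i$-th row of $Ax=\tau x$ to get $|\tau|\le\Delta(G)$, and propagate the equality case along edges using connectivity.

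You depart from the paper in separating the sign of $\tau$. For $\tau=-\Delta(G)$ you pass to $|x|$ and use the Rayleigh-quotient characterisation of the largest eigenvalue (the spectral theorem, a tool the paper does not use). For $\tau=+\Delta(G)$ you use a positive maximal entry to get $x_j=x_i$ rather than merely $|x_j|=|x_i|$. This separation is what makes your constancy and multiplicity conclusion sound. The paper treats both signs at once and asserts that $x$ is constant, which is false for $\tau=-\Delta(G)$ on a connected bipartite regular graph, exactly the case your closing remark isolates.

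The Rayleigh detour is not needed for the regularity half. Equality in the absolute-value chain already gives $\deg(i)=\Delta(G)$ and $|x_j|=|x_i|$ for every $j\in N(i)$. So each neighbour is again an index of maximal modulus, and the same propagation yields $\Delta(G)$-regularity for either sign. Then $+\Delta(G)$ is an eigenvalue via $\textbf{1}$, and your Step 3 handles multiplicity.

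One small wording point in Step 2: choose $i$ maximising $|x_i|$ and then flip the sign of $x$, which preserves maximality. Alternatively, if you mean the signed maximum with $x_i>0$, use the one-sided chain $\Delta(G)\,x_i=\sum_{j\in N(i)}x_j\le\deg(i)\,x_i\le\Delta(G)\,x_i$ rather than the displayed absolute-value chain. That chain's step $\sum_{j}|x_j|\le\deg(i)\,|x_i|$ needs $i$ to maximise the modulus.
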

\begin{proof}
    Let $x$ be the eigenvector of matrix $A$ with the largest absolute value at coordinate $i$, and let $\tau$ be its eigenvalue. The inequality
    \begin{equation} \label{maxDeltaEigenvalue}
        |\tau||x_i| = |(Ax)_i| = \left|\sum a_{ij} x_j\right| \leq deg(v_i)|x_i| \leq \Delta(G) |x_i|
    \end{equation}
    holds, as there are $deg(v_i)$ non-zero entries on the $i$-th row of matrix A, with an absolute value less than $|x_i|$. Therefore, for each eigenvalue $\tau$, inequality $|\tau| \leq \Delta(G)$ holds.

    For all inequalities in Equation \ref{maxDeltaEigenvalue} to be equalities, all neighbours $v_j$ of vertex $v_i$, that is, $v_j \in N(v_i)$ and $a_{ij} = 1$, need $x_i = |x_j|$. The degree of $v_i$ must also be $\Delta(G)$.

    Since the graph is connected, and the neighbours of vertex $v_i$ all have the largest value in their respective coordinate in vector $x$, by iterating these equalities for more vertices, we infer that $x$ is a constant vector, and the graph is $\Delta(G)$-regular. Since vector $x$ is a constant vector, that is, a multiplication of vector \textbf{1}, eigenvalue $\Delta(G)$ has multiplicity one.
\end{proof}

The following lemma is useful, as an adjacency matrix is symmetric and has real values. Note that for a vector $v$ or a matrix $A$, its transpose is denoted as $v^T$ and $A^T$, respectively.

\begin{lemma} \label{orthogonalEigenvectors}
Let $A$ be a real symmetric matrix. If vectors $u$ and $v$ are eigenvectors of $A$, each corresponding to a distinct eigenvalue, then $u$ and $v$ are orthogonal.
\end{lemma}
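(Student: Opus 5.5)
The plan is to evaluate the scalar $u^T A v$ in two ways and compare. Suppose $Au = \tau_1 u$ and $Av = \tau_2 v$ with $\tau_1 \neq \tau_2$, where $u, v$ are non-zero real vectors. (If complex eigenvectors were allowed, one would use conjugate transposes throughout. Since $A$ is real symmetric, its eigenvalues are real and one may take real eigenvectors, so I will work over $\mathbb{R}$.)

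First, substituting $Av = \tau_2 v$ directly gives
\begin{equation*}
u^T A v = u^T(\tau_2 v) = \tau_2\, u^T v.
\end{equation*}
Second, I transpose the scalar, which leaves it unchanged, and then use the symmetry $A^T = A$:
\begin{equation*}
u^T A v = (u^T A v)^T = v^T A^T u = v^T A u = \tau_1\, v^T u = \tau_1\, u^T v .
\end{equation*}
Subtracting the two expressions yields $(\tau_1 - \tau_2)\, u^T v = 0$. Because $\tau_1 \neq \tau_2$, it follows that $u^T v = 0$, which is exactly the statement that $u$ and $v$ are orthogonal.

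The only step needing any care is the appeal to $A^T = A$ when moving $A$ across the transpose, since that is where symmetry enters. For a complex setting, the analogous step would use Hermitian-ness, and I would first establish that the eigenvalues are real via $\bar{x}^T A x = \tau \bar{x}^T x$. Because the paper's matrices are real adjacency matrices, the real version above suffices.
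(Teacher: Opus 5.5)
Your proof is correct and takes essentially the same approach as the paper: the paper writes $(Au)^T = u^TA$ and multiplies on the right by $v$ to get $\tau\, u^Tv = \theta\, u^Tv$, which is the same as your evaluating $u^TAv$ in two ways using $A^T = A$. Your remark about working with real eigenvectors is a harmless and slightly more careful addition.
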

\begin{proof}
The two eigenvectors must satisfy the following equations: $Au = \tau u$ and $ Av = \theta v$.
Since $A$ is symmetric, we can infer:
\begin{equation*}
    (Au)^T = u^TA^T = u^TA.
\end{equation*} Thus, by multiplying this with $v$, we can see that:
\begin{equation*}
    (u^T\tau) v = (u^TA)v = u^T(Av) = u^T(\theta v)
\end{equation*} holds; since $u^Tv(\tau - \theta) = 0$, and $\theta \neq \tau$, we can infer that $u$ and $v$ are orthogonal.
\end{proof}

\begin{theorem} \label{threeEigenvalues}
    The adjacency matrix $A$ of an $(n,k,\lambda,\mu)$ strongly regular graph has three distinct eigenvalues.
\end{theorem}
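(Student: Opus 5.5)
The plan is to apply Theorem \ref{srgAlgTh} to eigenvectors orthogonal to $\textbf{1}$, which turns Equation \ref{srgAlgEq} into a quadratic equation for each remaining eigenvalue. Rewrite Equation \ref{srgAlgEq} as
\begin{equation*}
    A^2 + (\mu - \lambda)A + (\mu - k)I = \mu J.
\end{equation*}
By Equation \ref{kIsAnEigenvalue}, $k$ is an eigenvalue with eigenvector $\textbf{1}$. Since $G$ is connected (non-trivial), Theorem \ref{kRegularEigenvalueMultiplicity} gives that $k$ has multiplicity one.

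Next, take any eigenvector $x$ with eigenvalue $\tau \neq k$. By Lemma \ref{orthogonalEigenvectors}, $x$ is orthogonal to $\textbf{1}$, so $Jx = 0$. Applying the rewritten equation to $x$ gives $(\tau^2 + (\mu-\lambda)\tau + (\mu - k))x = 0$, and therefore
\begin{equation*}
    \tau^2 + (\mu - \lambda)\tau + (\mu - k) = 0.
\end{equation*}
The discriminant is $(\mu-\lambda)^2 + 4(k-\mu)$. It is positive because $k > \mu$: indeed $k \geq \mu$ trivially, and $k = \mu$ would make the complement disconnected. So the quadratic has two distinct real roots $r > 0 > s$. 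Their product is $\mu - k < 0$, which also shows that neither root equals $k$, since $k$ is not a root unless $\lambda = k - 1$, which would make $G$ a union of cliques and hence disconnected. Thus $A$ has at most three distinct eigenvalues, namely $k$, $r$ and $s$.

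The main obstacle is showing that all three actually occur. Since $A$ is real symmetric, it is diagonalizable and its eigenvalues span $\mathbb{R}^n$. If no eigenvalue other than $k$ appeared, then $A$ would be a multiple of $J$, and $G$ would be complete or empty, which is excluded. Now suppose only one root, say $r$, appeared besides $k$. Then $(A - kI)(A - rI) = 0$, and a direct computation shows $A - rI$ must be a multiple of $J$. Because the diagonal of $A$ is zero, this forces $A = c(J - I)$ for some $c$, so $G$ is complete or empty again, a contradiction.

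An alternative route for this last step uses $\operatorname{tr} A = 0$. Since $k > 0$, some negative eigenvalue must appear, namely $s$. And $r$ appears because otherwise $G$ would have only two eigenvalues, forcing $G$ to be a disjoint union of complete graphs, which is disconnected.

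Either way, the conclusion is that $A$ has exactly three distinct eigenvalues: $k$, $r$ and $s$.
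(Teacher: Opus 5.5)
Your proof is correct and follows essentially the paper's route: apply Equation \ref{srgAlgEq} to an eigenvector orthogonal to $\textbf{1}$ (so $Jx=0$) and obtain the quadratic $\tau^2+(\mu-\lambda)\tau+(\mu-k)=0$. Your extra steps (positivity of the discriminant from $k>\mu$, the check that neither root equals $k$, and the argument that both roots actually occur as eigenvalues) fill in points the paper's proof leaves implicit when it concludes that there are exactly three distinct eigenvalues.
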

\begin{proof}
    Based on Theorem \ref{kRegularEigenvalueMultiplicity}, we know that $k$ is an eigenvalue of multiplicity one with eigenvector \textbf{1}.
    
    Consider Equation \ref{srgAlgEq}, and suppose $\theta$ is an eigenvalue of $A$, and its eigenvector is $x \neq \textbf{1}$. Rewrite the equation by multiplying it with the eigenvector $x$, as follows:
\begin{equation*}
    A^2x = kx + \lambda Ax + \mu(J - I - A)x.
\end{equation*}
By considering $Ax = \theta x$, 
\begin{equation*}
    \theta^2x = kx + \lambda \theta x + \mu Jx -\mu x - \mu Ax.
\end{equation*}
According to Lemma \ref{orthogonalEigenvectors}, columns of matrix $J$ are orthogonal with $x$; thus, $\mu Jx = 0$ and the equation can be rewritten as:
\begin{equation*}
    \theta^2x = kx + \lambda \theta x - \mu x - \mu \theta x.
\end{equation*}
Ergo, the eigenvalues of matrix $A$ must satisfy the equation
\begin{equation*}
    \theta^2 + (\mu - \lambda)\theta + (\mu - k) = 0.
\end{equation*}

Two more eigenvalues can be obtained by this quadratic equation:
\begin{equation} \label{eigenvaluesR}
r = \frac{1}{2} \Big[ (\lambda - \mu) + \sqrt{(\lambda - \mu)^2 + 4(k - \mu)}\Big],
\end{equation}
and
\begin{equation} \label{eigenvaluesS}
s = \frac{1}{2} \Big[ (\lambda - \mu) - \sqrt{(\lambda - \mu)^2 + 4(k - \mu)}\Big];
\end{equation} therefore, we have exactly three eigenvalues for the adjacency matrix of a strongly regular graph. Considering values $r$ and $s$, since 
\begin{equation*}
    \sqrt{(\lambda - \mu)^2 + 4(k - \mu)} > \lambda - \mu,
\end{equation*} 
eigenvalue $r$ is positive, and $s$ is negative.

\end{proof}

\begin{definition} [Characteristic polynomial]\label{characteristicPoly}
    The characteristic polynomial of a matrix $A$ is the polynomial 
    \begin{equation*}
        P(\tau) = \text{det}(\tau I - A).
    \end{equation*} 
    
\end{definition}
The eigenvalues of a matrix are the roots of the characteristic polynomial, and this polynomial is of degree $n$.

Consider an eigenvalue $\tau$ and an eigenvector $x$ for matrix $A$: $Ax = \tau x$. Rewriting the equation as $x(\tau I - A) = 0$, we can infer that $(\tau I - A)$ has no inverse. Consequently, det$(\tau I - A) = 0$, which provides a polynomial equation whose roots correspond to the eigenvalues of $A$. The number of times an eigenvalue appears as a root is equal to its multiplicity.

\begin{definition}
    The trace of an $n \times n$ square matrix $A$, denoted as $tr(A)$, is defined as:
    \begin{equation*}
        tr(A) = \sum_{i=1}^{n} a_{ii} = a_{11} + a_{22} + a_{33} + \ldots + a_{nn}.
    \end{equation*}
\end{definition}
\begin{lemma} \label{traceEigenvalue}
    The sum of diagonal entries of a matrix $A$, i.e. its trace $tr(A)$, is equal to the summation of $A$'s eigenvalues:
    \begin{equation*}
        tr(A) = \sum_{i=1}^{n} \tau_i.
    \end{equation*}
\end{lemma}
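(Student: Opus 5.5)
We would prove Lemma~\ref{traceEigenvalue} by comparing two expansions of the characteristic polynomial $P(\tau)=\det(\tau I - A)$ from Definition~\ref{characteristicPoly}. Since $P$ is a monic polynomial of degree $n$, over the complex numbers it factors as
\begin{equation*}
    P(\tau) = \prod_{i=1}^{n} (\tau - \tau_i),
\end{equation*}
where $\tau_1,\dots,\tau_n$ are the eigenvalues of $A$ listed with multiplicity. This is exactly the convention described after Definition~\ref{characteristicPoly}. Expanding the product, the coefficient of $\tau^{n-1}$ is $-\sum_{i=1}^n \tau_i$.

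Next we compute the same coefficient directly from the determinant. By the Leibniz expansion, $\det(\tau I - A)=\sum_{\sigma}\operatorname{sgn}(\sigma)\prod_{i}(\tau I - A)_{i\sigma(i)}$. A term with $\sigma\neq\mathrm{id}$ moves at least two indices, so it contains at most $n-2$ diagonal factors. Only diagonal factors carry $\tau$, so such a term has degree at most $n-2$ in $\tau$. Hence the $\tau^{n-1}$ coefficient comes solely from the identity permutation, $\prod_{i}(\tau - a_{ii})$, and equals $-\sum_i a_{ii} = -tr(A)$. Equating the two coefficients gives $tr(A)=\sum_i \tau_i$.

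The main point needing care is the meaning of ``multiplicity''. The paper defines it as the number of independent eigenvectors (geometric multiplicity), whereas the factorisation counts roots (algebraic multiplicity). In general these can differ, so we would restrict to the case the paper actually uses, where $A$ is a real symmetric adjacency matrix. There we invoke the spectral theorem: $A = QDQ^T$ with $Q$ orthogonal and $D$ diagonal, and this is consistent with Lemma~\ref{orthogonalEigenvectors}. For symmetric $A$ the two notions of multiplicity coincide, and cyclicity of the trace gives an independent check:
\begin{equation*}
    tr(A)=tr(QDQ^T)=tr(DQ^TQ)=tr(D)=\sum_i \tau_i.
\end{equation*}
We would present the coefficient argument as the main proof and the spectral computation as the justification that the multiplicities agree for adjacency matrices.
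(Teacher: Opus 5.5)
Your proof is correct and follows the paper's argument: both write $\det(\tau I - A)$ as $\prod_{i}(\tau-\tau_i)$ and read off the $\tau^{n-1}$ coefficient with the Leibniz formula, where only the identity permutation contributes. Your remark on algebraic versus geometric multiplicity is a worthwhile refinement the paper glosses over; note, though, that the coefficient argument already proves the lemma for every square matrix once eigenvalues are counted as roots of $P$, so the symmetric restriction is needed only to match the paper's eigenvector-based definition of multiplicity.
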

\begin{proof}
    The characteristic polynomial of $A$ is given by:
    \begin{equation*}
        p(\tau) = \text{det}(\tau I - A),
    \end{equation*}
    which can be rewritten using eigenvalues of $A$ as
    \begin{equation*}
        p(\tau) = (\tau - \tau_1)(\tau - \tau_2)\ldots(\tau - \tau_n).
    \end{equation*} Expanding this product 
    \begin{equation*}
        p(\tau) = \tau^n - (\tau_1 + \tau_2 + \ldots + \tau_n )\tau^{n - 1} + \ldots
    \end{equation*}
    shows us that the coefficient of $-\tau^{n-1}$ is the negation of the summation of eigenvalues. 

    The Leibniz formula is another way to calculate the determinants of matrices. For a matrix $Z$, the formula is defined as follows:
    \begin{equation*}
        det(Z) = \sum_{p \in S_n} sgn(p) \prod_{i=1}^{n} z_{i,p(i)},
    \end{equation*}
    where $z_{i,p(i)}$ is $Z$'s $i$-th row and $p(i)$-th column element; the group $S_n$ is the symmetric group defined over a finite set of $n$ symbols. The order of the group $S_n$ is $n!$ as it contains all permutations of $n$ elements. Ergo, each $p$ is a permutation. The sign function $sgn$ is defined to be +1 if the parity of a permutation is even, and -1 for odd parity. The parity of a permutation $p$ is the parity of its number of inversions, that is, two values $a$ and $b$ such that $a < b$ and $p(a) > p(b)$.

    Calculating the determinant of a matrix with this approach requires $O(n!)$ operations, which makes it impractical. However, it provides us with more insights into understanding the determinants of matrices.

    Considering the matrix $\tau I - A$, the only permutation capable of producing $\tau^{n - 1}$ is the identity, as otherwise, at least two non-diagonal elements would be chosen, and the maximum order of $\tau$ can be $n - 2$. For the identity permutation, the polynomial generated is 
    \begin{equation*}
        \prod_{i=1}^{n}(\tau - a_{ii}).
    \end{equation*}
    Hence, the coefficient of $\tau^{n - 1}$ is $-\sum_{i=1}^{n} a_{ii}$. Therefore, the equation
    \begin{equation*}
        tr(A) = \sum_{i=1}^{n} a_{ii} = \sum_{i=1}^{n} \tau_i
    \end{equation*} holds.
\end{proof}

Lemma \ref{traceEigenvalue} comes in handy when working with graphs and adjacency matrices, as the trace of simple graphs are 0. In Theorem \ref{threeEigenvalues}, we proved that exactly three distinct eigenvalues exist. Let $m_r$ and $m_s$ each represent the multiplicity (number of times the eigenvalue appears as a root of the characteristic polynomial) of their respective eigenvalue. According to Lemma \ref{traceEigenvalue}, we have
\begin{equation*}
    m_rr + m_ss + k = tr(A) = 0,
\end{equation*}
and given that we are dealing with a polynomial of degree $n$, it follows that:
\begin{equation*}
    m_r + m_s = n - 1.
\end{equation*}
Hence, we can calculate the multiplicities based on their eigenvalues:
\begin{align} \label{eigenvalueMultiplicities}
&m_s = \dfrac{(n-1)r + k}{r - s} , & m_r = \dfrac{(n-1)s + k}{s - r}.&
\end{align}
In Equations \ref{eigenvaluesR} and \ref{eigenvaluesS}, the eigenvalues were expressed in terms of the strongly regular graph's parameters, $(n,k,\lambda,\mu)$. Substituting the values based on it yields
\begin{equation} \label{rEigenvalueMultiplicity}
    m_r = \frac{1}{2}\Big[ (v - 1) - \frac{2k + (v-1)(\lambda - \mu)}{ \sqrt{(\lambda - \mu)^2 + 4(k - \mu)}} \Big],
\end{equation}
and
\begin{equation} \label{sEigenvalueMultiplicity}
    m_s = \frac{1}{2}\Big[ (v - 1) + \frac{2k + (v-1)(\lambda - \mu)}{ \sqrt{(\lambda - \mu)^2 + 4(k - \mu)}} \Big].
\end{equation}
A powerful condition, the \textit{integrality condition}, can be derived from the multiplicities, as these numbers must be natural numbers. Therefore, many strongly regular graph parameter sets can be ruled out, as they will not yield natural numbers for the multiplicities. We will use this further in Section \ref{srgGroups} to narrow down the parameter set possibilities based on our needs.

\begin{lemma} \label{linearCombinationPowerTwo}
Let $A$ denote a simple and connected graph's adjacency matrix with $n$ vertices. If $A^2$ is a linear combination of three matrices $A$, $I$, and $J$, then the graph it represents is strongly regular.
\end{lemma}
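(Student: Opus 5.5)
Suppose $A^2 = \alpha I + \beta A + \gamma J$ for some scalars $\alpha,\beta,\gamma$. The plan is to read the graph's parameters directly off the entries of this identity, using the lemma that $(A^2)_{ij}$ counts walks of length two from $i$ to $j$. Rewriting the identity as
\begin{equation*}
    A^2 = kI + \lambda A + \mu(J - I - A)
\end{equation*}
with $k = \alpha + \gamma$, $\lambda = \beta + \gamma$ and $\mu = \gamma$ brings it into the form of Equation \ref{srgAlgEq}. Theorem \ref{srgAlgTh} would then finish the argument, but we must first check that the graph is regular and not complete or empty, as Definition \ref{srgDefinition} requires.

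\textbf{Regularity.} The diagonal entry $(A^2)_{ii}$ is the number of closed walks of length two at $i$, which is $\deg(i)$. Since $a_{ii}=0$, $\deg(i) = \alpha + \gamma$ for every $i$, so the graph is $(\alpha+\gamma)$-regular. This step carries most of the content, and it needs no appeal to Theorem \ref{kRegularEigenvalueMultiplicity}.

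\textbf{Common neighbours.} For $i \neq j$ adjacent, $(A^2)_{ij} = \beta + \gamma$ counts the common neighbours of $i$ and $j$. For $i\neq j$ non-adjacent, it equals $\gamma$. Both quantities are independent of the pair, which gives the $\lambda$ and $\mu$ conditions.

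\textbf{Degenerate cases.} This is the main obstacle, since the statement as written does not exclude the complete graph $K_n$. For $K_n$ we have $A^2 = (n-1)I + (n-2)A$, so the hypothesis is satisfied but $K_n$ is excluded by Definition \ref{srgDefinition}. The empty graph is ruled out by connectedness when $n>1$. I would therefore treat the complete graph as a trivial exception: either assume the graph is not complete, or note that it then "trivially" satisfies the conditions in the sense the paper uses for trivial strongly regular graphs. One further point is that the coefficients are not uniquely determined if there are no non-adjacent pairs. When the graph is not complete, non-adjacent pairs exist, so $\gamma$ is forced, and adjacent pairs exist by connectedness, so $\beta$ is forced. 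The resulting parameters $(n,k,\lambda,\mu)$ are then well defined.
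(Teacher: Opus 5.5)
Your proposal is correct and follows the paper's proof: you rewrite $A^2=\alpha I+\beta A+\gamma J$ as $(\alpha+\gamma)I+(\beta+\gamma)A+\gamma(J-I-A)$ and read off the parameters $(n,\alpha+\gamma,\beta+\gamma,\gamma)$ via the walk-counting argument behind Theorem~\ref{srgAlgTh}, which you carry out inline instead of citing. Your observation that $K_n$ satisfies the hypothesis, with $A^2=(n-1)I+(n-2)A$, yet is excluded by Definition~\ref{srgDefinition} is a genuine point that the paper's one-line proof (and the ``if and only if'' of Theorem~\ref{srgAlgTh}) passes over silently, so the lemma does need a ``not complete'' assumption.
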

\begin{proof}
    Let $a$, $b$, and $c$ denote the coefficients in the linear representation. Then,
    \begin{equation*}
        A^2 = aI + bA + cJ,
    \end{equation*}
    so
    \begin{equation*}
        A^2 = (a + c)I + (b + c)A + c(J - I - A).
    \end{equation*}
    According to Theorem \ref{srgAlgTh}, it can be inferred that this represents a strongly regular graph with parameters $(n, a + c, b + c, c)$.
\end{proof}

\begin{theorem}
    A $k$-regular connected graph $G$ with an adjacency matrix $A$ that has exactly three distinct eigenvalues is strongly regular.
\end{theorem}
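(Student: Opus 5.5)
The plan is to reduce the statement to Lemma \ref{linearCombinationPowerTwo}: it suffices to show that $A^2$ is a linear combination of $I$, $A$ and $J$.

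First, fix the eigenvalues. Since $G$ is $k$-regular, Equation \ref{kIsAnEigenvalue} shows that $k$ is an eigenvalue with eigenvector $\textbf{1}$. Because $G$ is connected, Theorem \ref{kRegularEigenvalueMultiplicity} says this eigenvalue has multiplicity one. Denote the two remaining distinct eigenvalues by $r$ and $s$; both differ from $k$.

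Next, consider the matrix
\begin{equation*}
    M = (A - rI)(A - sI).
\end{equation*}
Since $A$ is real symmetric, it has an orthonormal eigenbasis, and we may take $\textbf{1}/\sqrt{n}$ as one of its members. On every eigenvector for $r$ or $s$, the matrix $M$ acts as zero. On $\textbf{1}$, it acts as multiplication by $(k-r)(k-s)\neq 0$. By Lemma \ref{orthogonalEigenvectors}, all other basis vectors are orthogonal to $\textbf{1}$. Hence $M$ agrees with
\begin{equation*}
    \frac{(k-r)(k-s)}{n}J
\end{equation*}
on the whole basis, because $J$ also kills every vector orthogonal to $\textbf{1}$ and sends $\textbf{1}$ to $n\textbf{1}$. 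Two linear maps that agree on a basis are equal, so
\begin{equation*}
    A^2 - (r+s)A + rsI = \frac{(k-r)(k-s)}{n}J.
\end{equation*}

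Rearranging gives
\begin{equation*}
    A^2 = -rs\,I + (r+s)A + \frac{(k-r)(k-s)}{n}J,
\end{equation*}
which is a linear combination of $I$, $A$ and $J$. Lemma \ref{linearCombinationPowerTwo} then says $G$ is strongly regular, with parameters
\begin{equation*}
    \mu = \frac{(k-r)(k-s)}{n}, \qquad \lambda = r+s+\mu, \qquad k = -rs+\mu.
\end{equation*}
The value of $k$ obtained this way should be consistent with the regularity degree, since the diagonal of $A^2$ equals $k$; this serves as a check.

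The main obstacle is the diagonalizability step. The paper has not proved the spectral theorem, so I would invoke it as a standard fact about real symmetric matrices: they have an orthonormal basis of eigenvectors. This is what justifies that $M$ vanishes on the entire orthogonal complement of $\textbf{1}$, rather than only on the particular eigenvectors chosen.

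The remaining conditions of Definition \ref{srgDefinition} are straightforward. The graph is not complete, because $K_n$ has only two distinct eigenvalues. It is not empty, because it is connected with $n>1$ (three distinct eigenvalues force $n \ge 3$).
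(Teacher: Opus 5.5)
Your proof is correct and follows the paper's route: you show $(A-rI)(A-sI)=\frac{(k-r)(k-s)}{n}J$, using that $\textbf{1}$ spans the $k$-eigenspace (by connectivity) and that the product kills every other eigenvector, and then you apply Lemma~\ref{linearCombinationPowerTwo}. Your explicit use of an orthonormal eigenbasis makes rigorous the paper's looser step (``only one linearly independent eigenvector, so all entries have the same value''), and your check that $G$ is neither complete nor empty covers a condition the paper leaves implicit.
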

\begin{proof}
    Since the graph is $k$-regular, one of these eigenvalues is $k$. Name the other two eigenvalues as $r$ and $s$. 

    Let $\tau$ denote an eigenvalue of matrix $(A - rI)(A - sI)$, and $x$ be an associated eigenvector:
    \begin{equation*}
        (A - rI)(A - sI) \cdot x = \tau x.
    \end{equation*}
    Since matrix $A$ has only three eigenvalues, and the sum of $r$ and $s$'s multiplicities is $n - 1$, matrix $(A - rI)(A - sI)$ has $0$ as an eigenvalue of multiplicity $n - 1$.

    The eigenvalues of matrix $A - rI$ are obtained by the equation
    \begin{equation*}
        (A - rI) x = \tau x.
    \end{equation*}
    By rewriting the equation as follows: 
    \begin{equation*}
        Ax = (r + \tau)x,
    \end{equation*}
    we can determine that $r + \tau$ is an eigenvalue of the initial matrix $A$. Moreover, we can deduce that $(k - r)$ is an eigenvalue of matrix $(A - rI)$. With the same approach, we can see $(k - s)$ is an eigenvalue for $(A - sI)$. In Equation \ref{kIsAnEigenvalue}, we saw that the eigenvector $\textbf{1}$ belongs to the eigenvalue $k$. Therefore, 
    \begin{equation*}
        (A - rI)(A - sI)\textbf{1} = (A - rI) (k-s)\textbf{1} = (k-r)(k-s)\textbf{1}
    \end{equation*}
    holds, and the only non-zero eigenvalue is $(k - r)(k - s)$ with multiplicity 1 and eigenvector $\textbf{1} = (1,1, \ldots, 1)^T$. Because there is only one linearly independent eigenvector, \textbf{1}, all the entries in the matrix have the same value. Thus,
    \begin{equation*}
        (A - rI)(A - sI) = \frac{(k-r)(k-s)}{n}J.
    \end{equation*}
    By expanding the left side of the equation, we obtain
    \begin{equation*}
         A^2 - (r+s)A + rsI = \frac{(k-r)(k-s)}{n}J.
    \end{equation*}
    Based on Lemma \ref{linearCombinationPowerTwo}, it is concluded that $A$ is the adjacency matrix of a strongly regular graph.
\end{proof}

\section{Automorphism groups}
We begin with two graph-related definitions:
\begin{definition}
    An \textbf{isomorphism} between two graphs $G$ and $H$ is defined by a bijective function
    \begin{equation*}
        f: V(G) \rightarrow V(H)
    \end{equation*}
    such that $\{u,v\} \in E(G)$ if and only if $\{f(u),f(v)\} \in E(H)$.
\end{definition}
\begin{definition}
    An \textbf{automorphism} refers to an isomorphism that maps a graph $G$ to itself. It is defined using a permutation $\varphi$, such that $\{u,v\} \in E(G)$ if and only if $\{\varphi(u),\varphi(v)\} \in E(G)$.
\end{definition}

In group theory, a group is defined as follows:
\begin{definition}
    A \textbf{group} $G$ is a non-empty set with a binary operation `.'
    \begin{equation*}
        \cdot: G \times G \rightarrow G
    \end{equation*}
    such that it satisfies three requirements:
    \begin{enumerate}
        \item Associativity: for all $a$,$b$,$c$ in $G$, equality $(a\cdot b)\cdot c = a\cdot (b\cdot c)$ holds.
        \item Identity: there exists $i \in G$, such that for all $a \in G$, equalities $a = a\cdot i = i\cdot a$ hold.
        \item Inversion: for all $a \in G$, there exists an element $a^{-1} \in G$, such that $a\cdot a^{-1} = a^{-1}\cdot a = i$.
    \end{enumerate}
\end{definition}

Consider all the permutations that form automorphisms in a graph $G$. The composition of any two automorphisms $\varphi_1$ and $\varphi_2$ yields an automorphism $\varphi_1\varphi_2$. All three group requirements are satisfied in this set:
\begin{enumerate}
    \item Associativity: for three automorphisms $\varphi_1$, $\varphi_2$, and $\varphi_3$, the equality $\varphi_1\cdot (\varphi_2\cdot \varphi_3(x)) = (\varphi_1\cdot \varphi_2)\cdot \varphi_3(x)$ holds.
    \item Identity: the identity permutation, denoted by $\varphi_I$ is an automorphism.
    \item Inversion: If $\varphi$ is an automorphism, then its inverse permutation $\varphi^{-1}$ is also an automorphism.
\end{enumerate}
Hence, the set of all automorphisms is a group, which is named the automorphism group and is denoted with $Aut(G)$.

Automorphism groups are meaningful in studying strongly regular graphs because their order indicates the level of \textit{symmetry} in graphs. The order of $Aut(K_n)$ for a complete graph $K_n$ is $n!$, making it highly symmetrical. On the other hand, a graph $G$ is considered asymmetric if its automorphism order, denoted as $|Aut(G)|$, is one. It is noteworthy that numerous strongly regular graphs do not possess high symmetries. We will review this later in Section \ref{experiments}.

\begin{definition}[Spectrum]
    The spectrum of a matrix is the set of its eigenvalues and their multiplicity. In algebraic graph theory, a \textbf{graph's spectrum} is the graph's adjacency matrix's spectrum.
\end{definition}
Consider the following as an example of a strongly regular graph's spectrum:
\begin{example}
    The Petersen graph, a $(10,3,0,1)$ strongly regular graph, has three eigenvalues: $k = 3$, $r = 1$, and $s = -2$. The multiplicity of the eigenvalues are $m_k = 1$, $m_r = 5$, and $m_s = 4$. To denote the spectrum, we write
    $$
    \tau(A) = \{3^1, 1^5, (-2)^4\}.
    $$
\end{example}
The Petersen graph is an \textbf{\textit{integral}} graph. A graph is integral if its spectrum consists of integers. When two graphs have the same spectrum, they are referred to as \textit{cospectral}. Although isomorphic graphs are cospectral, the reverse is not always true. This is evident even among the highly structured, strongly regular graphs, which we will explore in Section \ref{experiments}.

\begin{table}[htbp]
\centering
\renewcommand{\arraystretch}{1.2}
\begin{tabular}{rrrr|crcr}
\hline
$n$ & $k$ & $\lambda$ & $\mu$ & $r$ & $m_r$ & $s$ & $m_s$  \\
\hline
5 & 2 & 0 & 1 & $(-1 + \sqrt{5})/2$ & 2 & $(-1 - \sqrt{5})/2$ & 2\\
9 & 4 & 1 & 2 & 1 & 4 & -2 & 4\\
10 & 3 & 0 & 1 & 1 & 5 & -2 & 4\\
13 & 6 & 2 & 3 & $(-1 + \sqrt{13})/2$ & 6 & $(-1 - \sqrt{13})/2$ & 6\\
15 & 6 & 1 & 3 & 1 & 9 & –3 & 5\\
16 & 5 & 0 & 2 & 1 & 10 & -3 & 5\\
16 & 6 & 2 & 2 & 2 & 6 & -2 & 9\\
17 & 8 & 3 & 4 & $(-1 + \sqrt{17})/2$ & 8 & $(-1 + \sqrt{17})/2$ & 8\\
21 & 10 & 5 & 4 & 3 & 6 & -2 & 14\\
\hline
\end{tabular}
\caption{Parameters of multiple strongly regular graphs and their spectrums. The eigenvalue $k$ and its multiplicity 1 is omitted.}
\label{smallSRGs}
\end{table}

\chapter{Strongly regular graphs} \label{srgGroups}
In this chapter, we will be studying the theoretical aspects of several classes of strongly regular graphs and how they are constructed. Moreover, we apply the conditions we have learned in the Preliminaries chapter \ref{preliminaries}.

We begin this chapter by mathematically defining the Conway-99 open problem. Next, we will examine several similar structures to the potential Conway-99 graph. Finally, we will analyse these structures and their connection to the Conway-99 problem.

\section{The Conway-99 problem and similar graphs}
The Conway-99 is an open graph problem that asks the following question:
\begin{problem}[Conway-99]
    Does an instance of $(99,14,1,2)$ strongly regular graph exist?
    \label{conway99}
\end{problem}

The $\lambda = 1$ and $\mu = 2$ conditions state that every edge should be part of a \textit{unique triangle} and every non-edge should be part of a \textit{unique quadrilateral}. Only five possible parameter set exists for strongly regular Graphs with $\lambda = 1$ and $\mu = 2$. These parameters are shown in Table \ref{lambda1mu2Table}.

\begin{table}[htbp]
\centering
\begin{tabular}{p{0.05\textwidth}| p{0.10\textwidth} p{0.08\textwidth} p{0.05\textwidth} p{0.05\textwidth}}

\hline
$\exists$ & $n$ & $k$ & $\lambda$ & $\mu$ \\
\hline
+ & 9 & 4 & 1 & 2 \\
? & 99 & 14 & 1 & 2 \\
+ & 243 & 22 & 1 & 2 \\
? & 6273 & 112 & 1 & 2 \\
? & 494019 & 994 & 1 & 2 \\
\hline
\end{tabular}
\caption{Strongly regular graphs with $\lambda = 1$ and $\mu = 2$. Only the existence of two graphs has been verified.}
\label{lambda1mu2Table}
\end{table}

An instance of $(9,4,1,2)$ strongly regular graph is named the Paley(9) graph. Paley graphs form an infinite group of strongly regular graphs, which will be studied further in Section \ref{paleyGraphs}. 

In Section \ref{srg243}, we delve into the construction process of the Berlekamp-Van Lint-Seidel graph, a $(243,22,1,2)$ strongly regular graph, which was discovered in 1973.

To prove only five possible parameter sets exist, we use the multiplicity integrality condition and prove the following theorem:
\begin{theorem} \label{refRemoveParameter}
There are no strongly regular graphs with $\lambda = 1$ and $\mu = 2$ out of the parameter sets of Table \ref{lambda1mu2Table}.
\end{theorem}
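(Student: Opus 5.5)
We plan to combine the double-counting identity of Theorem \ref{basicDoubleCounting} with the multiplicity integrality condition from Equations \ref{rEigenvalueMultiplicity} and \ref{sEigenvalueMultiplicity}. With $\lambda = 1$ and $\mu = 2$, Theorem \ref{basicDoubleCounting} gives $2(n-k-1) = k(k-2)$. Hence
\begin{equation*}
n = 1 + k + \tfrac{k(k-2)}{2} = \tfrac{k^2 + 2}{2},
\end{equation*}
so $k$ must be even and $n$ is determined by $k$. The whole problem therefore reduces to finding which even $k$ are admissible.

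Next we compute the eigenvalues. Since $\lambda - \mu = -1$, the discriminant is $D = 1 + 4(k-2) = 4k - 7$, and $r,s = \tfrac{1}{2}(-1 \pm \sqrt{4k-7})$. Substituting into Equation \ref{sEigenvalueMultiplicity} gives
\begin{equation*}
m_s = \frac{1}{2}\Big[(n-1) + \frac{2k - (n-1)}{\sqrt{4k-7}}\Big], \qquad n - 1 = \frac{k^2}{2}.
\end{equation*}
The numerator $2k - k^2/2$ is nonzero except at $k = 4$. For $k = 4$ it vanishes, $m_r = m_s$, and this is the conference-type case $(9,4,1,2)$. For every other $k$, rationality of $m_s$ forces $\sqrt{4k-7}$ to be rational, hence an integer $t$. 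Since $4k - 7$ is odd, $t$ is odd, and $k = (t^2+7)/4$. Evenness of $k$ then imposes a condition on $t$ modulo $8$.

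The key step, and the main obstacle, is the divisibility analysis. We would substitute $k = (t^2+7)/4$ into the expression $\frac{2k - k^2/2}{t}$, after multiplying through, and require that $t$ divide a fixed polynomial in $t$. Expanding $k^2 = (t^4 + 14t^2 + 49)/16$ and $2k = (t^2+7)/2$, the numerator $4k - k^2$ becomes
\begin{equation*}
\frac{-t^4 - 10t^2 + 63}{16}.
\end{equation*}
Integrality of $m_s$ then requires $t \mid 63$, up to the bounded power-of-two factors, which we would track carefully using the parity of $t$. So $t \in \{1, 3, 7, 9, 21, 63\}$, giving $k \in \{2, 4, 14, 22, 112, 994\}$.

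Finally, we discard $k = 2$. It gives $n = 3$, the triangle $K_3$, which is complete and is excluded by Definition \ref{srgDefinition}; it is also inconsistent with $\mu = 2$, since $K_3$ has no non-adjacent pairs. We then check that the remaining values yield $n = (k^2+2)/2 \in \{9, 99, 243, 6273, 494019\}$ with integral multiplicities. These are exactly the rows of Table \ref{lambda1mu2Table}, so any other parameter set is ruled out.

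The delicate points are the $k = 4$ exception, where the irrational-eigenvalue case cannot be dismissed by the rationality argument and is handled separately, and the power-of-two and parity bookkeeping needed to turn ``$m_s$ is an integer'' cleanly into $t \mid 63$.
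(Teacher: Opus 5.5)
Your proposal is correct and follows the paper's proof essentially step for step. Double counting gives $n = k^2/2 + 1$, integrality of the multiplicity forces $4k-7 = t^2$, substituting $k = (t^2+7)/4$ yields $t \mid 63$, and $k=2$ is discarded as $K_3$. Your separate handling of the vanishing numerator at $k=4$ is a small bit of care the paper skips, though it is not an irrational-eigenvalue case: there $t=3$ and the eigenvalues are $1,-2$. One arithmetic slip: $4k-k^2 = \frac{(t^2+7)(9-t^2)}{16} = \frac{-t^4+2t^2+63}{16}$, not $-10t^2$. This is harmless, since only the constant term $63$ matters, and because $t$ is odd, $64m_s = 16k^2 - t^3 + 2t + 63/t$ gives $t \mid 63$ directly with no power-of-two bookkeeping.
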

\begin{proof}
    Let $(n,k,1,2)$ be a possible parameter set. By using the Equation \ref{basicSRGEquation} we obtain 
    \begin{equation*}
        (n - k - 1)2= k(k - 2),
    \end{equation*}
    so we can infer that
    \begin{equation*}
        n = \frac{k^2}{2} + 1
    \end{equation*}
    must hold.

    We can now rewrite the Equations \ref{rEigenvalueMultiplicity} and \ref{sEigenvalueMultiplicity}, the eigenvalues' multiplicities, only based on one parameter. We then obtain
    \begin{equation*}
        M = \frac{1}{2}\left[ \frac{k^2}{2} + \frac{2k - \frac{k^2}{2}}{\sqrt{4k - 7}}
        \right].
    \end{equation*}
    Since these multiplicities must be positive integers, we can rule out many possibilities for strongly regular graphs with $\lambda = 1$ and $\mu = 2$.

    To ensure the multiplicities are natural numbers, equality $4k - 7 = t^2$ must hold for a natural $t$. Replacing $k$ with $\frac{t^2 + 7}{4}$ we obtain
    \begin{equation*}
        M = \frac{1}{2}\left[ \frac{\left(\frac{t^2 + 7}{4}\right)^2}{2} + \frac{2\left(\frac{t^2 + 7}{4}\right) - \frac{\left(\frac{t^2 + 7}{4}\right)^2}{2}}{\sqrt{4\left(\frac{t^2 + 7}{4}\right) - 7}}
        \right].
    \end{equation*}

    To simplify the equation, we take these steps
    \begin{align*}
        M &= \frac{1}{2}\left[ \frac{\left(t^2 + 7\right)^2}{32} + \frac{\left(t^2 + 7\right) - \left(\frac{t^2 + 7}{4}\right)^2}{2t}
        \right]\\
        \Rightarrow M &= \frac{(7 + t^2) (9 - t^2 + 7 t + t^3)}{64t}\\
        \Rightarrow 64M &= \frac{(7 + t^2) (9 - t^2 + 7 t + t^3)}{t}\\
        \Rightarrow 64M &= \frac{63 - 7t^2 + 49t + 7t^3 + 9t^2 - t^4 + 7t^3 + t^5}{t}\\
        \Rightarrow 64M &= 49 + \frac{63}{t} + 2t + 14t^2 - t^3 + t^4
    \end{align*}
    The values that would make $\frac{63}{t}$ an integer would be $t \in \{\pm 1,\pm 3,\pm 7,\pm 9,\pm 21,\pm 63\}$. These would result in $k$ values $\{2, 4, 14, 22, 112, 994\}$. Since for $k = 2$, the resulting graph would be a complete graph $K_3$, and complete graphs are not considered as strongly regular graphs in their nature, this case would be omitted. The five final possibilities have regularity $k \in \{4, 14, 22, 112, 994\}$ and number of vertices $n \in \{9,99,243,6273,494019\}$, as shown in Table \ref{lambda1mu2Table}.
\end{proof}

\section{Paley graphs} \label{paleyGraphs}
Paley strongly regular graphs are constructed based on Galois fields. To begin, we must first define \textbf{\textit{field}}. A field is a set with four operations of arithmetic: addition, subtraction, multiplication, and division (excluding division by zero). For instance, the set of rational numbers $\mathbb{Q}$ is considered a field, while natural numbers do not form a field because not all divisions can be defined. For example, $\frac{3}{4} \notin \mathbb{N}$. The mathematical definition is as follows:
\begin{definition}[Field]
    A field is a set $F$ equipped with two binary operations, namely addition (+) and multiplication (+). These binary operations are mappings in the form $F \times F \rightarrow F$. The following axioms must hold in these operations:
    \begin{enumerate}
        \item Associativity: For all $a, b, c \in F$, we have $(a + b) + c = a + (b + c)$ and $(a \cdot b) \cdot c = a \cdot (b \cdot c)$.
        \item Commutativity: For all $a, b \in F$, we have $a + b = b + a$ and $a \cdot b = b \cdot a$.
        \item Identity Elements: There exist two distinct elements 0 and 1 in $F$ such that for all $a \in F$, we have $a + 0 = a$ and $a \cdot 1 = a$.
        \item Additive Inverses: For every $a \in F$, there exists an element $-a$ in $F$ such that $a + (-a) = 0$.
        \item Multiplicative Inverses: For every $a \neq 0$ in $F$, there exists an element $a^{-1}$ in $F$ such that $a \cdot a^{-1} = 1$.
        \item Distributivity: For all $a, b, c \in F$, we have $a \cdot (b + c) = a \cdot b + a \cdot c$.

    \end{enumerate}
\end{definition}

To better understand the concept, we consider this simple example. For the set $\{0,1,2\}$, by defining the aforementioned axioms, modulo the prime number 3, we obtain a field. The first two conditions arbitrarily hold. The identity elements are 0 and 1. The additive inverse of 1 is 2 and vice-versa. The multiplicative inverse of 2 is 2 itself, as $4 \equiv 1 \pmod 3$. This field is named $GF(3)$. Galois fields are defined as follows:
\begin{definition} [Galois field]
    A \textit{finite or Galois field} is a field that contains a finite number of elements.
\end{definition}
The order of a finite field must be a prime power, i.e. $p^n$. For each power, there exists, up to isomorphism, exactly one finite field, which we denote as $GF(p^n)$.

A \textit{quadratic residue} of a field $GF(p^n)$ is an element's square in $GF(p^n)$. For example, element 1 is the quadratic residue of 1 and 2 in $GF(3)$, since $1^2 \equiv 2^2 \equiv 1 \pmod{3}$, while 2 is not.

The set of quadratic residues of a field $GF(p^n)$ is denoted by $QR(p^n)$. That is,
\begin{equation*}
    a \in QR(p^n) \Leftrightarrow \exists b \in GF(p^n): b \cdot b = a.
\end{equation*}

The mathematical definition of Paley graphs' structure is described as follows:
\begin{definition}[Paley graphs]
    Let $q$ be a prime power of $p$, such that $q = p^n \equiv 1 \pmod{4}$. The graph $Paley(q) = (V,E)$ is defined over the set of vertices $V = GF(q)$.
    The set of edges $E$ is defined as
    \begin{equation*}
        E = \{\{u,v\} : u - v \in QR(q) \},
    \end{equation*} that is, two vertices form an edge if their values differ by a quadratic residue.
\end{definition}
 
We first build a simple example of Paley graphs, specifically, Paley(13). The field $GF(13)$ is defined as the usual arithmetic operations modulo 13, over the set of $\{0,1,2, \ldots, 12\}$. To obtain the quadratic residue of $GF(13)$, consider the squares of $\pm 1, \pm 2, \pm 3, \pm 4, \pm 5,$ and $ \pm 6$, which respectively give us $ +1, +4, -4, +3, -1$, and $-3$. In figure \ref{paley13}, an instance of Paley(13) is drawn. A vertex $v$ is adjacent to 6 more vertices $v \pm 1$, $v \pm 3$, and $v \pm 4$. 

To build the Paley(9) graph, which is a strongly regular graph with parameters $(9,4,1,2)$, we first need to build the field $GF(3^2)$. The process of building Galois fields $GF(p^n)$ for $n > 1$ is more complex. The process requires polynomials and, more specifically, irreducible polynomials. These polynomials are chosen so that the field's arithmetic structure remains solid. To start off, we need to find a polynomial of degree 2 that has no solutions in the field $GF(3)$. The polynomial $x^2 + 1$ has no zeroes in this field, and thus, it can be used here. We suppose there is an $x$, not in this field but in larger fields, that satisfies the equality $x^2 + 1 = 0$ for now. The set built around $x$ and $GF(3)$,
\begin{equation*}
    \{0, 1, 2, x, x + 1, x + 2, 2x, 2x + 1, 2x + 2\},
\end{equation*}
contains 9 elements; this set, based on the degree 2 polynomial $x^2 + 1$, satisfy the field's arithmetic conditions. For example, the additive inverse of $2x + 1$ is $x + 2$, as $3x + 3 = 0$; the multiplicative inverse of it is $2x + 2$ because
\begin{equation*}
    (2x + 1)(2x + 2) = 4x^2 + 6x + 2 = x^2 + 2 = 1.
\end{equation*}
The quadratic residue of $GF(9)$ is
\begin{equation*}
    \{1,2, x, 2x\}, 
\end{equation*} where 2 and $2x$ can be also written as $-1$ and $-x$. For an arbitrary vertex $v \in GF(9)$, by connecting it to four vertices $v \pm 1$ and $v \pm x$, modulo 3, we obtain the Paley(9) graph in Figure \ref{paley9xAnd1}. Note that we can also consider vertices as a pair $(i,j)$, where $i$ represents variable $x$'s coefficient.

\begin{figure}[h!]
  \centering
  \includegraphics[width=0.6\linewidth]{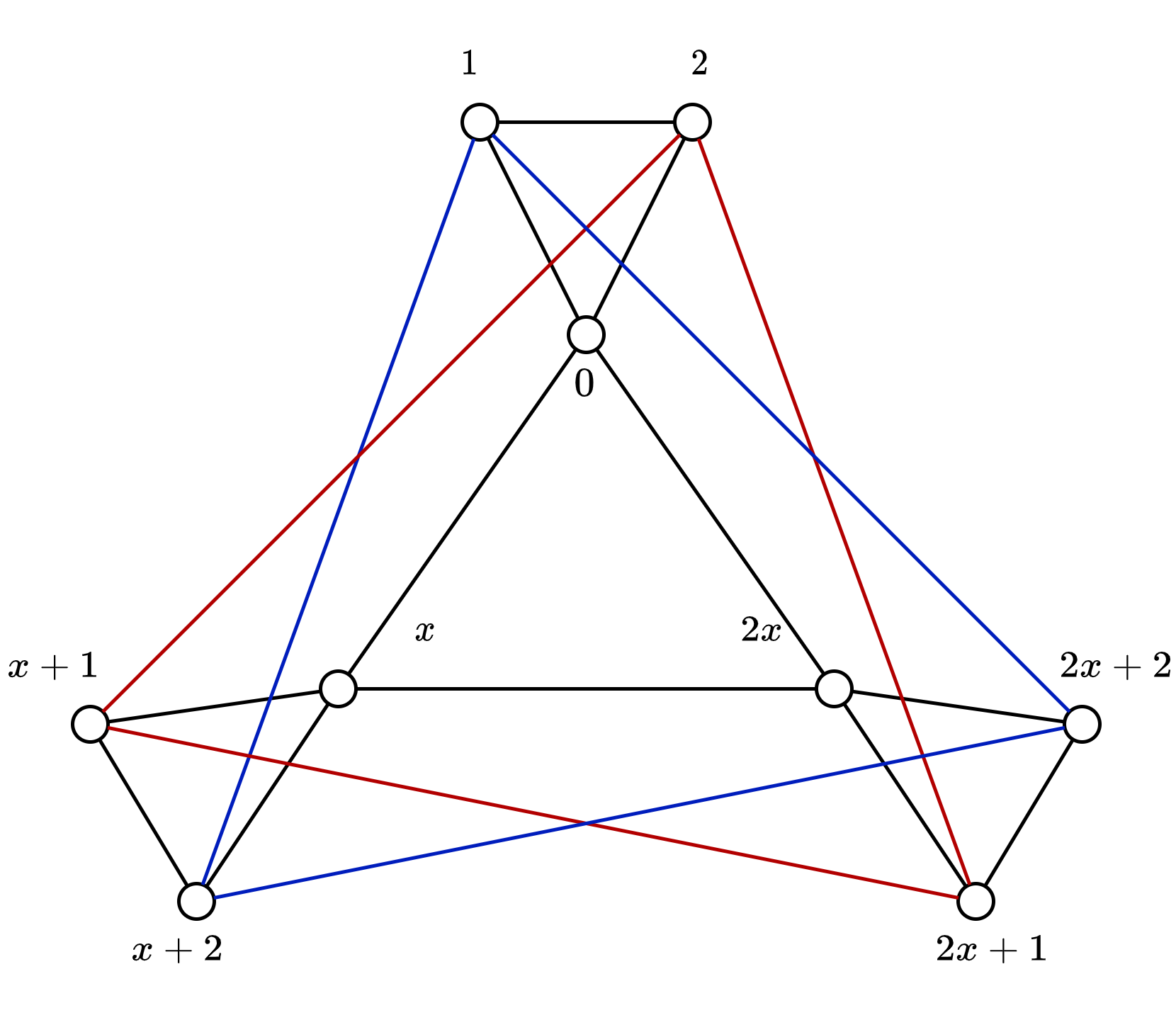}
  \caption{Paley(9) with $GF(9)$ vertices. This representation highlights the triangles.}
  \label{paley9xAnd1}
\end{figure}

In a Galois field of order $q$, the size of the quadratic residues' set $QR(q)$ is $\frac{1}{2}(q - 1)$. Every two elements of $GF(q)$ share $\frac{1}{4}(q - 1)$ neighbours unless the two elements differ by one quadratic residue. In that case, they share $\frac{1}{4}(q - 5)$ neighbours, which is one less because of the quadratic residue connecting them. Therefore, for every prime power $q \equiv 1 \pmod 4$, there exists a Paley(q) graph with the parameter set:
\begin{equation*}
    (q, \frac{1}{2}(q - 1), \frac{1}{4}(q - 5), \frac{1}{4}(q - 1)).
\end{equation*}
In the next section, we will use Paley(9) graph to build a larger, strongly regular graph with $\lambda = 1$ and $\mu = 2$.

\section{Berlekamp–Van Lint–Seidel graph} \label{srg243}
The Berlekamp-Van Lint–Seidel is the largest strongly regular graph discovered with $\lambda = 1$ and $\mu = 2$. This graph was constructed using perfect ternary Golay codes \cite{berlekamp1973strongly}, which have their roots in the field of data transmission and coding theory. This example highlights the overlap between the strongly regular graphs and coding theory and shows that graphs can be used to generate better coding schemes, depending on the need to transmit data. To gain a better understanding of how this graph was created, we must first complete some preliminary steps.

\subsection{Preliminaries}
While transmitting data, usually binary data containing bits or ternary data containing \textit{trits}, over a noisy channel susceptible to interference, the transmitted bit may experience alteration. The data $X$ being transmitted over a channel can be affected by a noise $N$ (note that the signal being sent is not a discrete value), and the result would be $Y = X + N$, which might be different than expected. Some coding schemes in coding theory are used to reduce the chance of data loss.

As an example, consider a 3-bit message code $(c_1,c_2,c_3)$ and a parity check matrix which we can use to generate

\[
\begin{bmatrix}
c_4 \\
c_5 \\
c_6 \\
\end{bmatrix} = 
\begin{bmatrix}
1 & 1 & 0 \\
1 & 0 & 1 \\
0 & 1 & 1 \\
\end{bmatrix}
\begin{bmatrix}
c_1 \\
c_2 \\
c_3 \\
\end{bmatrix}
\] three more parity check bits. After receiving the data, if the bits do not satisfy the parity conditions, we can infer an error has occurred. The word $(c_1,c_2,c_3,c_4,c_5,c_6)$ is considered as a codeword. A \textit{codeword} is a word that is generated by the transmitter; thus, it has not been affected by any noise and is correct.

Within the field of information theory, error-correcting codes serve a dual purpose: not only do they help the detection of errors, but they also enable their correction. To do this, they encode the data into a larger space, and by increasing the distance between the codewords, they offer a high chance of correcting the data \ref{error_correction}. The distance measure used is the Hamming distance between codewords: the number of positions (indices) in which they differ. The number of elements in a codeword that are not zero determines its weight.

\begin{figure}
    \centering
    \includegraphics[width=0.7\textwidth]{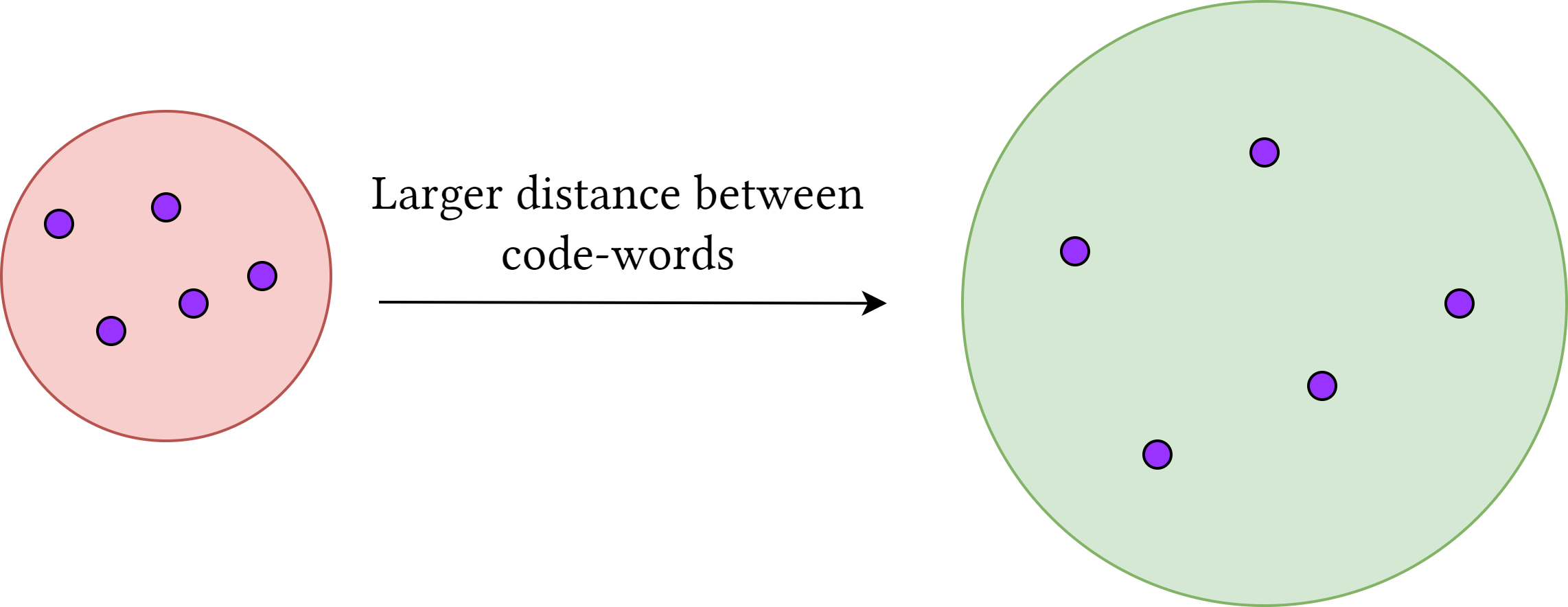}
    \caption{Correcting errors by having larger distances between code words}
    \label{error_correction}
\end{figure}

\subsection{Ternary Golay codes}

The ternary Golay code \cite{golay1949notes}, discovered by Marcel J. E. Golay, is one of the remarkable examples of \textit{perfect} codes. Perfect codes are the codes with maximum efficiency, that is, by utilising their redundancy efficiently, these codes enable maximum error correction. In mathematical terms, a code can be considered perfect if the spheres - consisting of words with a maximum Hamming distance that are centred around the codewords - cover the entire space, and every word belongs to the sphere of a unique codeword.

With the ternary Golay code, 6 trits of information can be encoded into 11 trits of data. This coding scheme allows for the correction of up to 2 errors. This can also mean that the distance between each pair of codewords is at least 5; otherwise, there would be a word that is at a distance of 2 or less of multiple codewords.

This code is defined over the finite field of $GF(3)$, which consists of the elements \{0,1,2\} or \{-1,0,1\}. The ternary Golay code can be generated using a generator matrix, denoted by the symbol $G$, which is equal to:
\begin{align*}
    G = \left[ I_6\ \ |\ \ P\right] = &\left[ 
    \begin{array}{cccccc|ccccc}
    1 & 0 & 0 & 0 & 0 & 0 & 1 & 1 & 1 & 1 & 1 \\
    0 & 1 & 0 & 0 & 0 & 0 & 1 & 1 & 2 & 2 & 0 \\
    0 & 0 & 1 & 0 & 0 & 0 & 1 & 2 & 1 & 0 & 2 \\
    0 & 0 & 0 & 1 & 0 & 0 & 2 & 1 & 0 & 1 & 2 \\
    0 & 0 & 0 & 0 & 1 & 0 & 2 & 0 & 1 & 2 & 1 \\
    0 & 0 & 0 & 0 & 0 & 1 & 0 & 2 & 2 & 1 & 1 \\
    \end{array}
    \right].
\end{align*}
Notice that the structure of $G$ allows a distance of 5 between codewords. A word $w$ can be represented as a vector with 6 dimensions, where each coordinate is 0,1 or 2.

A \textit{vector space} over a field (in this case, the Galois field GF(3)) is a set of vectors that must satisfy multiple axioms. These axioms 
include, for vectors $u$ and $v$ and scalar $a$, associativity ($(u + v) + w = u + (v + w)$), commutativity ($u + v = v + u$), identity element ($u + 0 = u$), inverse elements ($u + -u = 0$), and distributivity of scalar ($a(u + v) = au + av$)\footnote{We have omitted some axioms because of their simplicity, but we still believe that the remaining ones convey the general idea.}. We denote the vector space of dimension $l$ on $GF(q)$ with $V(l,q)$. As an example, consider vector $(0,0,2)$ of length three in $GF(3)$, so $(0,0,2) \in V(3,3)$.

It is worth noting that this code is classified as linear because of its generation process. That is, for a word $w \in V(6,3)$, the codeword $w\cdot G \in V(11,3)$ is generated by multiplication.

\subsection{Construction of the Berlekamp–Van Lint–Seidel graph}
We show two ways of constructing this graph: partitioning and parity.

\subsubsection{The partitioning approach}
The perfect ternary Golay code can be used to partition the vector space $V(11,3)$ into $3^5$ \textit{cosets} by adding $3^5$ different fixed words to all codewords. A subset of vectors $H$ from a vector space $V$ is a subspace if itself is a vector space; that is, it satisfies the axioms we defined previously under the same addition and scalar multiplications operators. Considering a vector $v \in V$, the set of vectors $H_v = \{v + h: h\in H\}$ is a \textit{coset}, with $v$ as its \textit{coset leader}. The coset leader refers to the word with the lowest weight within a coset.

Let $C$ represent the set of codewords. Considering how the codewords were linearly generated, it is not hard to see that this set forms a subspace of the vector space $V(11,3)$. Since the ternary Golay code encodes vectors of length 6, we have $3^6$ codewords.

Let $W_2$ be the set of vectors whose weight does not exceed 2. The size of $W_2$ is
\begin{equation*}
    1 + 2 \times 11 + 4 \times \binom{11}{2} = 243.
\end{equation*}
This count represents the exact number of vertices we seek for.

We can use the vectors in $W_2$ to generate the $3^5$ cosets we need. Add each vector in $W_2$ to all codewords in $C$. The cosets' leaders are the $W_2$ vectors that generated them. Name cosets based on their leader's non-zero indices. That is, the cosets $S_0$, $S_{(i,b)}$, and $S_{\{(i,b),(j,b')\}}$, where $b$ and $b'$ are 1 or 2, and $i$ and $j$ are the non-zero trits' indices. These cosets also cover the whole vector space $V(11,3)$ because each word is at Hamming distance 0, 1, or 2 of a unique codeword, and all possibilities of a 1 or 2 trit errors have been considered.

Since the Golay code is a linear code, by considering a linear combination between two vectors $l_1 = w_1\cdot G$ and $l_2 = w_2 \cdot G$, we can obtain
\begin{equation*}
    a\cdot l_1 + b \cdot l_2 = (a\cdot w_1 + b\cdot w_2) \cdot G,
\end{equation*}
which itself is a codeword.

In this partitioning, two vectors are classified as being in the same coset if a codeword is obtained by taking their difference. This fact is useful because it means that any two vectors in the same coset must have a Hamming distance of at least 5. This is because the codewords used to create them met this requirement, and they were both added to the same vector - which happens to be the leader vector of the coset in $W_2$.

Our graph-building strategy involves assigning each coset a vertex. We connect two vertices if their corresponding subsets ($S_x$ and $S_y$) have two vectors ($v_x \in S_x$ and $v_y \in S_y$) with a Hamming distance of 1 ($d_H(v_x,v_y) = 1$).

To prove that this generates a $(243,14,1,2)$ strongly regular graph, we take a few steps to show this graph satisfies the $k$, $\lambda$, and $\mu$ parameters' conditions.

First, without loss of generality, consider the all-zero vector $z = (0, 0, \cdots, 0)$. Vector $z$ connects $S_0$ to cosets containing vectors
\begin{align*}
    N_z = \{&(x,0,0, \cdots, 0, 0),\\
    &(0,x,0, \cdots,0, 0), \\
    \cdots, &(0,0,0, \cdots,0,x)\},
\end{align*}
for all $x \in \text{GF(3)} - {0}$, because these vectors are at a Hamming distance of one from $z$; This shows that the degree $deg(S_0)$ is at least 22. To prove no more neighbours exist, consider a vector $y'$ in coset $S_{\{(i,b),(j,b')\}}$. By reducing the leader from $y'$, we obtain a codeword $y$, which is at least at Hamming distance 5 of codeword $x \in S_0$ where $x \neq y$. Therefore, the Hamming distance $d_H(x,y') \geq 3$, and the two cosets have no vectors of distance one with each other. By generalising this approach for all other $W_2$ vectors, we can see that all cosets (vertices) have 22 neighbours. Hence, the graph is 22-regular.

Next, without loss of generality, consider vector $x_+ = (+x,0,0, \cdots, 0, 0)$, one vector of distance 1 from $z$. We can observe that since both $z$ and $x_+$ vectors are at distance 1 of vector $x_- = (-x,0,0, \cdots, 0, 0)$, they cause cosets $S_0$, $S_{(1,x)}$, and $S_{(1,-x)}$ to form a triangle. All other vectors in set $N_z$ have a Hamming distance of two from $x_+$ and $x_-$. Thus, their respective cosets are not adjacent, and consequently, a complete matching with 11 edges exists in $S_0$'s neighbourhood. As a result, the $\lambda = 1$ condition is satisfied.

Also, for a coset that's not $S_0$'s neighbour, its leader, vector $xy_+$, must have weight 2. Without loss of generality, consider these two non-zero elements at indices 1 and 2, which means $xy_+ = (x,y,\cdots,0) \in W_2$. This vector $xy_+$ is at distance 1 of vectors $x_+$ and $y_+$. Hence, the coset $S_{(1,x),(2,y)}$ is adjacent to $S_{(1,x)}$ and $S_{(2,y)}$. That being the case, the $\mu = 2$ condition is satisfied.

\subsubsection{The parity approach} \label{parityApproach243}
A code's parity check matrix describes the conditions a word must satisfy to be a valid codeword. The parity matrix of the Golay code is
\begin{align*}
    H = 
    \left[ 
    \begin{array}{cccccc|ccccc}
    2 & 2 & 2 & 1 & 1 & 0 & 1 & 0 & 0 & 0 & 0 \\
    2 & 2 & 1 & 2 & 0 & 1 & 0 & 1 & 0 & 0 & 0 \\
    2 & 1 & 2 & 0 & 2 & 1 & 0 & 0 & 1 & 0 & 0 \\
    2 & 1 & 0 & 2 & 1 & 2 & 0 & 0 & 0 & 1 & 0 \\
    2 & 0 & 1 & 1 & 2 & 2 & 0 & 0 & 0 & 0 & 1 \\
    \end{array}
    \right]. 
\end{align*} For a word $w$, to be a valid codeword, it needs to satisfy the equation $H\cdot w = (0, 0, 0, 0, 0)$.

Each column of matrix $H$ gives us a ternary vector of length 5 in vector space $V(5,3)$. Naming these vectors $x_1$, $x_2$, $\cdots$, $x_{11}$, we can obtain 11 more vectors by the negation of these vectors. Also, 220 more vectors can be obtained by $\pm x_i \pm x_j$ for all $i,j \in \{1,2,\cdots,11\}$ such that $i < j$. By proving that these 242 vectors are distinct and by adding $z$ to this set (name the set $\theta$), we can obtain the whole set of vectors available in vector space $V(5,3)$.

Since every codeword (but word $z$) has at least weight 5 (due to the distance two condition), it can be inferred every 4 columns of the $H$ matrix must be independent. To prove this assume that there exist four vectors $x_i, x_j, x_k, x_l$ (where \(i, j, k, l \in \{1, \cdots 11\}\) are distinct) such that \(a x_i + b x_j + c x_k + d x_l = 0\) for some \(a, b, c, d \in GF(3)\), but not all zero. Based on this, we can build a word $w$, which is $a, b, c, d$ at indices $i, j, k, l$, respectively. Now, the equation $H \cdot w = 0$  is satisfied, and $w$ becomes a codeword. However, we know each codeword has a weight of 5, leading to a contradiction. As a result, no two vectors in $\theta$ are the same, and $\theta = V(5,3)$.

Now consider vectors $V(5,3)$ as the vertices of a graph. Note that every two neighbours on this graph will have corresponding vectors that differ by a vector $\pm x_1, \pm x_2, \cdots, \pm x_{11}$.

We can simply infer each vertex has 22 neighbours, as we showed all these summations lead to different values. For any $v$, and vector $x_i$, three vertices $v$, $v + x_i$, and $v - x_i$ form a triangle (note that $v - x_i - x_i = v + x_i$ in $GF(3)$). These 7 edges that form a perfect matching are the only available edges in $N(v)$, and thus, the $\lambda = 1$ condition holds. By adding or subtracting two vectors from vertex $v$, $v \pm x_i \pm x_j$, we can reach any other vertex of the graph, and also the vertex $v \pm x_i \pm x_j$ share two neighbours with $v$, which are $v \pm x_i$ and $v \pm x_j$. Hence, the $\mu = 2$ condition is satisfied as well.

We have seen how beautifully a strongly regular graph was derived from a perfect code. More ways of building this graph can be found in \cite{berlekamp1973strongly}.

\section{Subgraphs and patterns}
This section aims to explore the possibilities of different subgraphs and patterns, in a possible Conway-99 graph. The goal is to reduce the search space we have to explore.

\subsection{Paley(9) pattern}
There is a special property associated with the Paley(9) graph. This can be the only possible subgraph that satisfies the $\lambda = 1$ and $\mu = 2$ conditions within the subgraph itself. This is because there are no other strongly regular graphs with $\lambda = 1$ and $\mu = 2$ with less than 99 vertices.

\begin{figure}
    \centering
    \includegraphics[width=0.43\textwidth]{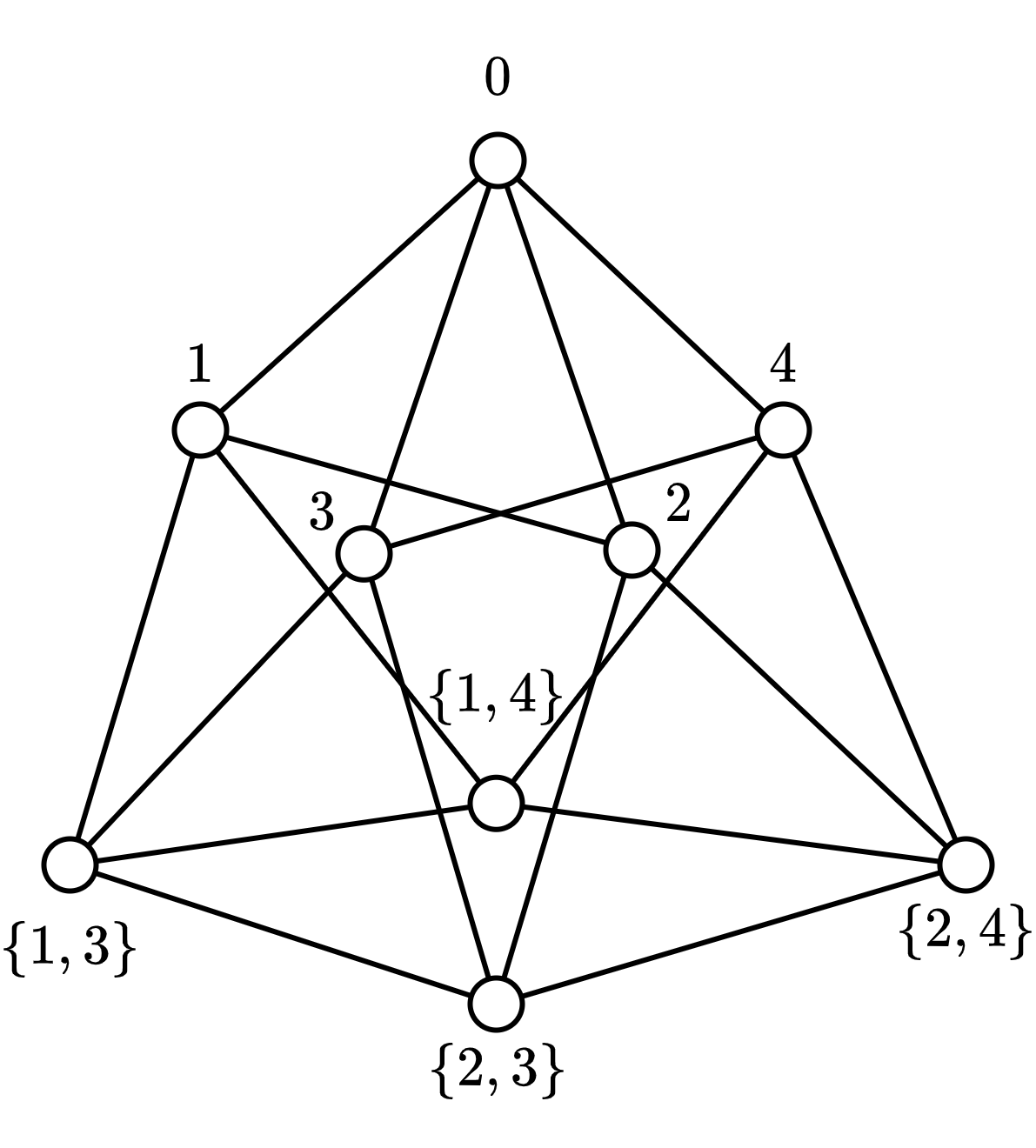}
    \caption{Paley(9) graph: A representation of the graph emphasising the quadrilateral formations in the structure of the graph.}
    \label{paley9quad}
\end{figure}

To make the writing simpler, in strongly regular graphs with $\lambda = 1$ and $\mu = 2$, for a vertex $v$, we sometimes refer to $v$'s non-neighbours as a pair of its neighbours since each non-neighbour is adjacent to two specific neighbours. For example, in figure \ref{paley9quad}, vertex $\{1,3\}_0$ is $0$'s non-neighbour which is adjacent to $1$ and $3$ in $N(0)$.
\begin{definition}[Paley(9) pattern]
The Paley(9) pattern is established in an $(n,k,1,2)$ strongly regular graph $G$ when for all vertices $v\in V$ and two edges $\{v_1,v_2\},\{v_3,v_4\} \in E(N(v))$, the induced subgraph by vertices
\begin{equation*}
    P_v = \{v,v_1,v_2,v_3,v_4,(v_1,v_3)_v,(v_1,v_4)_v,(v_2,v_3)_v,(v_2,v_4)_v\}
\end{equation*} would be a Paley(9) graph.
\end{definition}

\begin{lemma}
    The Paley(9) pattern is present in the Berlekamp-Van Lint-Seidel graph.
\end{lemma}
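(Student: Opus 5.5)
We will use the parity description of the Berlekamp--Van Lint--Seidel graph from Section~\ref{parityApproach243}. There the vertex set is $V(5,3)$, and two vertices are adjacent exactly when their difference is one of $\pm x_1,\dots,\pm x_{11}$. Every translation $w\mapsto w+c$ is an automorphism of this graph, so it suffices to check the pattern at the vertex $v=z$, the zero vector.

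\textbf{Step 1: identify the two edges.} As shown there, $N(z)=\{\pm x_i\}$, and the only edges inside $N(z)$ are the eleven matching edges $\{x_i,-x_i\}$. Hence two edges of $N(z)$ have the form $\{v_1,v_2\}=\{x_i,-x_i\}$ and $\{v_3,v_4\}=\{x_j,-x_j\}$ with $i\neq j$. (If both edges are the same, there are not enough distinct vertices, so we take $i\ne j$.)

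\textbf{Step 2: identify the four non-neighbours.} The non-neighbour of $z$ adjacent to $\epsilon x_i$ and $\delta x_j$ (with $\epsilon,\delta\in\{\pm1\}$) is $\epsilon x_i+\delta x_j$. This vertex is not $z$ and not a neighbour of $z$, because the $243$ vectors of $\theta$ are distinct.

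\textbf{Step 3: recognise Paley(9).} The nine vertices are $P_z=\{a x_i+b x_j : a,b\in GF(3)\}$. They are distinct by the independence of $x_i$ and $x_j$, which follows from the fact that any four columns of $H$ are independent. The map $(a,b)\mapsto a x_i+bx_j$ is therefore a bijection from $GF(3)^2$ onto $P_z$.

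We now determine the induced edges. Two vertices of $P_z$ differ by $(a-a')x_i+(b-b')x_j$. This difference lies in $\{\pm x_1,\dots,\pm x_{11}\}$ exactly when $(a-a',b-b')\in\{(\pm1,0),(0,\pm1)\}$. The key point, and the main thing to justify carefully, is that a vector $\alpha x_i+\beta x_j$ with $\alpha,\beta$ both nonzero cannot equal some $\pm x_l$. If $l\notin\{i,j\}$, such an equality is a dependence among three columns of $H$. If $l\in\{i,j\}$, it is a dependence among two columns of $H$. Both are impossible, since every four columns of $H$ are independent.

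So the induced graph is the Cayley graph of $GF(3)^2$ with connection set $\{\pm(1,0),\pm(0,1)\}$, that is, $K_3\square K_3$. This graph is exactly the Paley(9) graph as constructed in Section~\ref{paleyGraphs}. Indeed, identifying $(a,b)$ with $ax+b\in GF(9)$, the connection set $\{\pm1,\pm x\}$ is precisely $QR(9)$.

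\textbf{Step 4: conclude.} This holds for every choice of $i\ne j$, and translation invariance carries it to every vertex $v$. Hence the Paley(9) pattern is present in the Berlekamp--Van Lint--Seidel graph. The only nontrivial ingredient is the column-independence argument in Step 3, which rules out any extra edges inside $P_z$.
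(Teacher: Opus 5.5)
Your proof is correct and follows the same route as the paper: the paper takes a vertex $v$ and two columns $x_i,x_j$ from the parity description and simply asserts, via a figure, that $\{v+ax_i+bx_j : a,b\in GF(3)\}$ induces Paley(9). You supply what the paper leaves as ``clear'': translation invariance, the column-independence argument that excludes extra edges, and the explicit identification of the connection set $\{\pm(1,0),\pm(0,1)\}$ with $QR(9)=\{\pm1,\pm x\}$.
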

\begin{proof}
    In the parity approach \ref{parityApproach243}, we noted that a vector $v$ has 22 neighbouring vectors, which are $v \pm x_i$ for $1 \leq i \leq 11$. By considering a vector $v$ and two parity matrix columns, $x_i$ and $x_j$, it is clear that the subgraph shown in \ref{paley9parity} will be created.
    \begin{figure}
    \centering
    \includegraphics[width=0.7\textwidth]{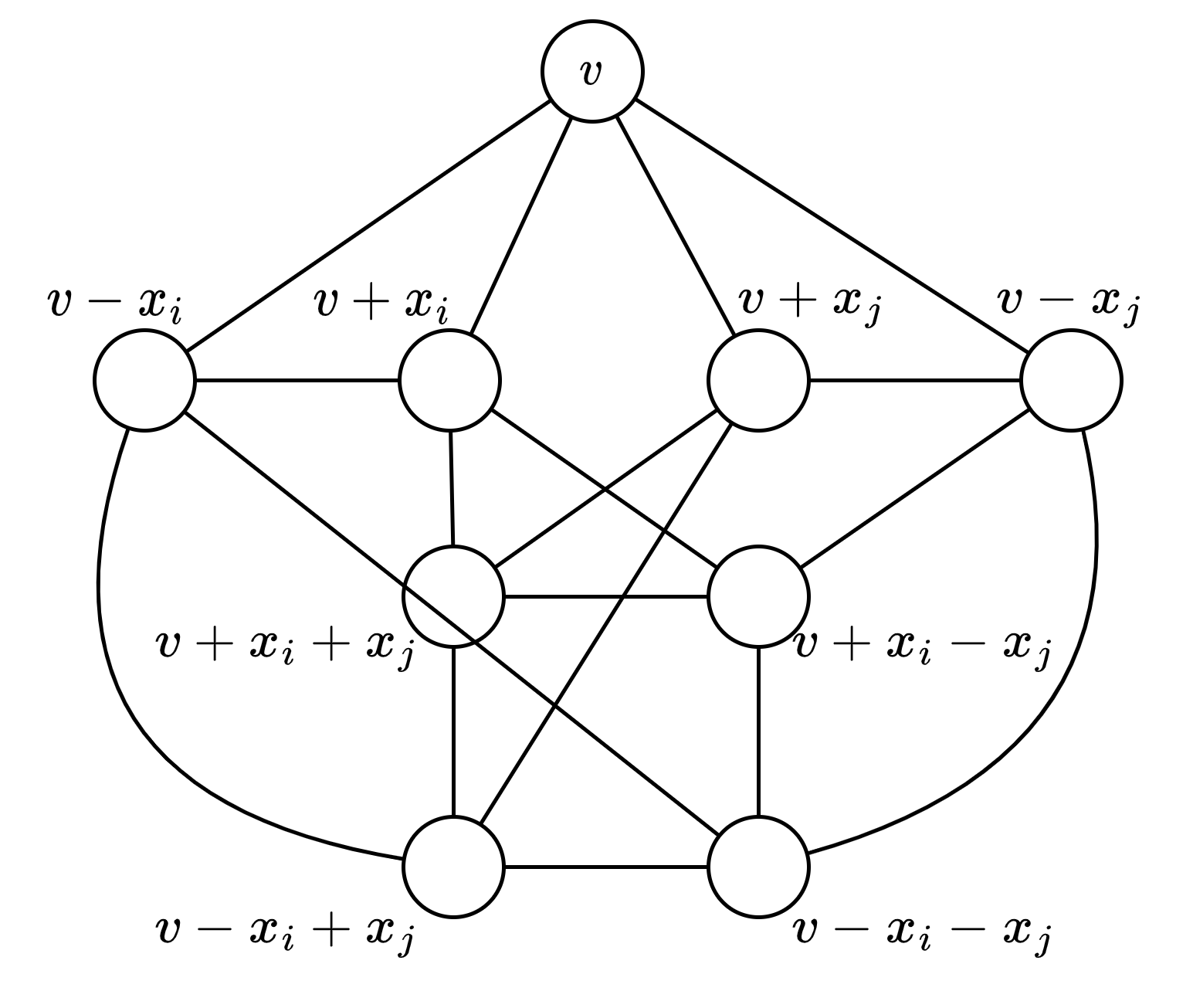}
    \caption{Paley(9) graph: A representation obtained from the Golay ternary code parity matrix.}
    \label{paley9parity}
    \end{figure}
\end{proof}

We now prove this pattern cannot appear in any possible $(99,14,1,2)$ strongly regular graphs.

\begin{theorem} \label{p9patternImpossible}
    If a $(99,14,1,2)$ strongly regular graph exists, it cannot follow the Paley(9) pattern.
\end{theorem}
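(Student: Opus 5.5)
The plan is to show that the Paley(9) pattern forces the graph to look locally like an affine space over $GF(3)$, just as in the parity construction of Section \ref{parityApproach243}. That in turn forces the number of vertices to be a power of $3$, and $99 = 9 \cdot 11$ is not one.

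\textbf{Setup.} First recall the local structure of a $(99,14,1,2)$ graph. Since $\lambda = 1$, for every vertex $v$ the neighbourhood $N(v)$ induces a perfect matching of $7$ edges, so $v$ lies on exactly $7$ triangles. Since $\mu = 2$, each of the $84 = \binom{14}{2} - 7$ non-neighbours of $v$ is adjacent to exactly two vertices of $N(v)$. These two vertices lie in different triangles through $v$, and every such pair occurs exactly once. The Paley(9) pattern says that any two triangles through $v$ span a $3 \times 3$ rook's graph $K_3 \square K_3$. In that grid the three rows and three columns are triangles, so there is a natural notion of \emph{parallel} triangles: opposite sides of a grid.

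\textbf{Step 1: parallel classes.} Using the grids, define parallelism between triangles through adjacent vertices. If $T = \{v, v_1, v_2\}$ is a triangle and $\{v, v_3, v_4\}$ is another triangle at $v$, the grid transports $T$ to the triangle $\{v_3, (v_1,v_3)_v, (v_2,v_3)_v\}$ through $v_3$. I would prove this transport is independent of the auxiliary triangle used, by comparing the grids spanned by three triangles $e, f, g$ at $v$ together with the grids at $v_1$ spanned by the transported triangles. This is exactly where the pattern must be applied at \emph{every} vertex, not just at $v$. The goal is to show that the triangles fall into $7$ global parallel classes, each class partitioning $V$ into $33$ triangles. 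Choose in each class $i$ a cyclic orientation of every triangle, consistently along the grids, and let $\sigma_i$ be the map sending each vertex to its successor in its class-$i$ triangle. Then $\sigma_i$ is a fixed-point-free permutation of order $3$.

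\textbf{Step 2: commutation.} Next show that the $\sigma_i$ are automorphisms and that they commute. Commutation of $\sigma_i$ and $\sigma_j$ at a vertex $v$ is precisely the statement that the grid spanned by the class-$i$ and class-$j$ triangles at $v$ closes up, which the Paley(9) pattern gives. Consequently the group $H = \langle \sigma_1, \ldots, \sigma_7 \rangle$ is elementary abelian of order dividing $3^7$.

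\textbf{Step 3: contradiction.} The graph has diameter $2$: every non-neighbour of $v$ is $v \pm x_i \pm x_j$ in the affine notation, i.e.\ $\sigma_i^{\pm1}\sigma_j^{\pm1}(v)$. Hence $H$ acts transitively on $V$. By the orbit--stabiliser theorem, $99 = |V|$ divides $|H|$, which is a power of $3$. This is impossible because $11 \mid 99$, so a $(99,14,1,2)$ graph cannot follow the Paley(9) pattern. Consistently, the count works exactly for $243 = 3^5$, where $H$ acts regularly.

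\textbf{Main obstacle.} The hard part is Step 1: proving that the grids at different vertices glue together into well-defined parallel classes with coherent orientations, so that the $\sigma_i$ are genuine permutations of $V$. My first attempt would be to verify the gluing on the $27$-vertex configuration generated by three triangles $e, f, g$ at $v$. Concretely, I would check that the vertex $(a,b)_{v_1}$, with $a = (v_1,v_3)_v$ and $b = (v_1,v_5)_v$, is the same vertex whichever of the grids at $v_1$, $v_3$ or $v_5$ is used to construct it. If full gluing proves too delicate, a fallback is a direct counting contradiction on such a configuration. Here $(a,b)_{v_1}$ would have to be one of the $84$ labelled non-neighbours of $v$ or a neighbour of $v$, and the labels $(p,q)_v$ combined with $\mu = 2$ should give inconsistent adjacency counts.
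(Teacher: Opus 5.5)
Your Steps 2 and 3 are fine once the $\sigma_i$ exist. Step 1, however, carries the whole content of the theorem, and it is only announced, not proved.

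\textbf{Step 1 is not a routine gluing check.} The Paley(9) pattern is a condition at one vertex at a time: two lines through $v$ span a $3\times 3$ grid. It says nothing about how the grid at $v_1$ spanned by transported lines relates to the grids at $v_3$ and $v_5$. For your transport to be well defined you need the cube identity
\[
((v_1,v_3)_v,(v_1,v_5)_v)_{v_1}=((v_1,v_3)_v,(v_3,v_5)_v)_{v_3}=((v_1,v_5)_v,(v_3,v_5)_v)_{v_5},
\]
and you give no argument for it. It is not a formal consequence of ``any two lines through a point span a grid''. For lines of size $2$ that condition holds automatically in every $(n,k,0,2)$ graph, for example the Gewirtz graph $(56,10,0,2)$. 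There the exact analogues of your Steps 2--3 would make $56$ a power of $2$, so the analogue of Step 1 must fail. Step 1 therefore has to use something specific, either lines of size $3$ or $n=99$, whereas your plan uses $99$ only in the final divisibility.

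\textbf{The cube identity would still not be enough.} The $27$-vertex check you propose would only give trivial transport around triangles and grid squares. Global parallel classes with coherent orientations need trivial holonomy around every cycle. In a diameter-$2$ graph, pentagons need not be null-homotopic in the complex of triangles and grid squares. In the $243$-vertex graph, a weight-$5$ Golay codeword gives a pentagon $0,\ c_1x_1,\ c_1x_1+c_2x_2,\dots$ whose lift to $H(11,3)$ is not closed. To finish your way you would have to replace the $\sigma_i$ by a covering argument: show the universal cover of the triangle/grid-square complex is $H(7,3)$, so that $99\mid 3^7$. That is a further theorem you have not proved. The counting fallback is not carried out either.

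\textbf{Comparison with the paper.} The paper never builds global structure; it argues locally.
\begin{itemize}
\item It works inside the grids at $0$, $1$ and $3$ (its ``parallelism'').
\item It takes the triangle through $5$ and a common neighbour $(1,3,x)$ of $5$ and $(1,3)$, and eliminates cases until the third vertex lies in $N_{(2,4)}$.
\item It then applies the pattern to $\{5,(1,3,x),(2,4,y)\}$ and $\{0,5,6\}$, forcing two vertices to share three common neighbours.
\end{itemize}
That elimination is exhaustive only because of a $99$-specific count, which the paper uses implicitly. By $\lambda=1,\mu=2$, no vertex outside a Paley(9) subgraph $P$ has two neighbours in $P$. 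Also $P$ sends out $9\cdot 10=90=99-9$ edges. Hence every outside vertex has exactly one neighbour in $P$.

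If you want to pursue your route, this domination fact is the natural tool for the cube identity. Take the grid at $v_1$ spanned by $\{v_1,(v_1,v_3)_v,(v_1,v_4)_v\}$ and $\{v_1,(v_1,v_5)_v,(v_1,v_6)_v\}$. The vertex $(v_3,v_5)_v$ must be adjacent to exactly one of its four $\mu$-vertices. You would then have to exclude the three ``twisted'' choices. That is a local case analysis of the same kind as the paper's, and you might as well derive the contradiction directly at that point.
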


\begin{figure}[h]
        \centering
        \includegraphics[width=0.7\textwidth]{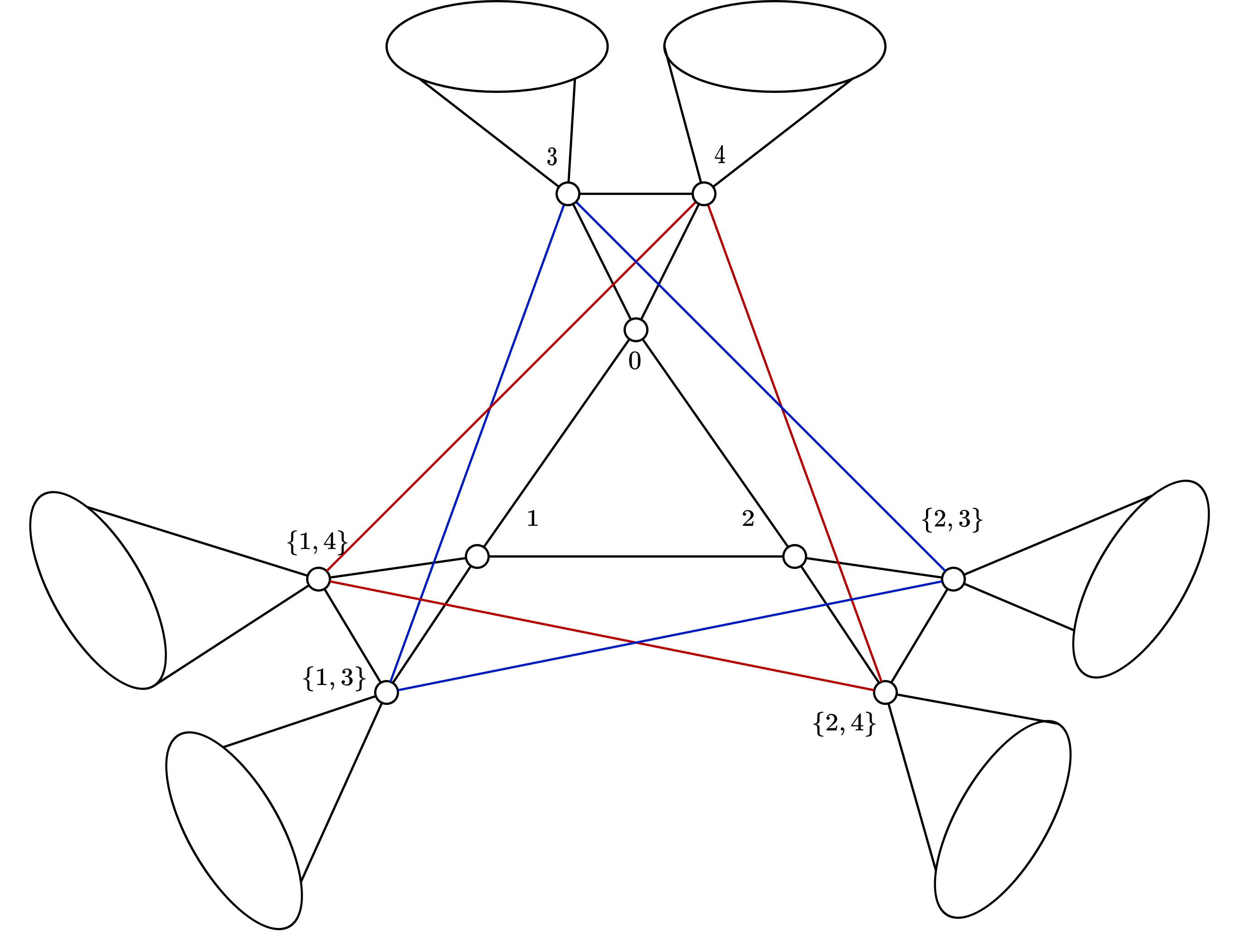}
        \caption{Building a $(99,14,1,2)$ strongly regular graph based on the Paley(9) pattern.}
        \label{srg_9_pattern}
\end{figure}

\begin{proof}
    We proceed by assuming the pattern holds and then derive a contradiction out of it.  We start off by considering a Paley(9) subgraph, as in Figure \ref{srg_9_pattern}. Name vertex \red{0}'s neighbours with numbers \red{1} to \red{14}, such that vertices $2i - 1$ for $1 \leq i \leq 7$ are adjacent to vertices $2i$.
    
    Consider vertex \red{5} in the graph. According to the $\mu = 2$ requirement, each of the sets $N(i)$ for $1 \leq i \leq 4$ must contain one of \red{5}'s neighbours, named (1,5), $\cdots$, (4,5). Considering two triangles $\{0,1,2\}$ and $\{0,5,6\}$, based on the Paley(9) pattern, we can infer edges
    \begin{equation*}
        \{\{(1,5),(1,6)\}, \{(2,5),(2,6)\}, \{(1,5),(2,5)\}, \{(1,6),(2,6)\}\}
    \end{equation*}
    exist, which form a $C_4$. By applying the same reasoning to triangles $\{0,3,4\}$ and $\{0,5,6\}$ we can obtain more edges. We call this the Paley(9) parallelism approach, as the two triangles $\{0,1,2\}$ and $\{5,(1,5),(2,5)\}$ are adjacent with each other.
    
    Next, we name the vertex adjacent to $(1,5)$ in the sets $N_{(1,3)}$ and $N_{(1,4)}$ as $(1,3,5)$ and $(1,4,5)$. We can now show these two vertices must be adjacent. To do this, consider two triangles $\{1,(1,3),(1,4)\}$ and $\{1,(1,5),(1,6)\}$. We can infer that the edges
    \begin{equation*}
        \{\{(1,3,5),(1,4,5)\}, \{(1,3,5),(1,3,6)\}, \{(1,4,5),(1,4,6)\}, \{(1,3,6),(1,4,6)\}\}
    \end{equation*} exist.

    If we apply the parallelism approach used for vertex \red{0} to other vertices and generalise it, we can deduce that the edge present among the sets $N_{(1,3)}$ and $N_{(2,3)}$ must create a triangle with vertices in $N_3$.
    
    Vertex \red{5} needs to have more neighbours and, therefore, more triangles. Thus far, we have inferred the existence of three triangles in which vertex \red{5} appears, which are
    \begin{equation*}
        \{0,5,6\}, \{5,(1,5),(2,5)\}, \{5,(3,5),(4,5)\}.
    \end{equation*}

    Vertex \red{5} must have two neighbours in $N_{1,3}$. Name one vertex as $(1,3,x)$. This neighbour must form a triangle with \red{5}. Let us consider the possibilities for the third vertex $v_3$ of this triangle:
    \begin{enumerate}
        \item Initially, it is not possible for vertex $v_3$ to be a part of $N_{1,3}$ because it would break the condition $\lambda = 1$, as the vertices \red{5} and $(1,3)$ would become common neighbors of $v_3$ and $(1,3,x)$.
        
        \item The edges connecting $N_{(1,3)}$ and $N_{(1,4)}$ create a triangle with vertices in $N_1$ (parallelism). Thus, the third vertex cannot belong to the set $N_{1,4}$.
        
        \item The edges among $N_{(1,3)}$ and $N_{(2,3)}$ form a triangle with vertex $3$ (parallelism), and cannot be form a triangle with vertex \red{5}.
    \end{enumerate}

    Therefore, the only possibility is to have a triangle between vertices $\{5, (1,3,x),(2,4,y)\}$, where $(2,4,y) \in N_{(2,4)}$.
    
    Consider two triangles $\{5, (1,3,x),(2,4,y)\}$ and $\{5, 0, 6\}$. Without the loss of generality, suppose $(1,3,x)$ is adjacent to vertex $7$. This can simply be generalised for all other vertices, but \red{6}, which will break the $\lambda = 1$ condition. These two triangles must also form a Paley(9) graph as shown in figure \ref{paley9triangular}, such that vertex \red{6} appears in triangle $\{6,u,v\}$. We now examine the possible sets to which vertices $v$ and $u$ can belong.
    \begin{enumerate}
        \item As a result of the $\lambda = 1$ condition, $v \notin N_{2,4}$ and $u \notin N_{1,3}$.
        
        \item As a result of the $\lambda = 1$ condition, $v,u \notin N_0$.
        
        \item As a result of the $\mu = 2$ condition, $v,u \notin N_1,N_2,N_3,N_4$.

        \item As a result of parallelism, $v \notin N_{2,3} \cup N_{1,4}$ and $u \notin N_{1,4} \cup N_{2,3}$

        \item As a result of the $\lambda = 1$ condition, $u \notin N_{1,3}$ and $v \notin N_{2,4}$.
    \end{enumerate}
    \begin{figure}
        \centering
        \includegraphics[width=0.7\textwidth]{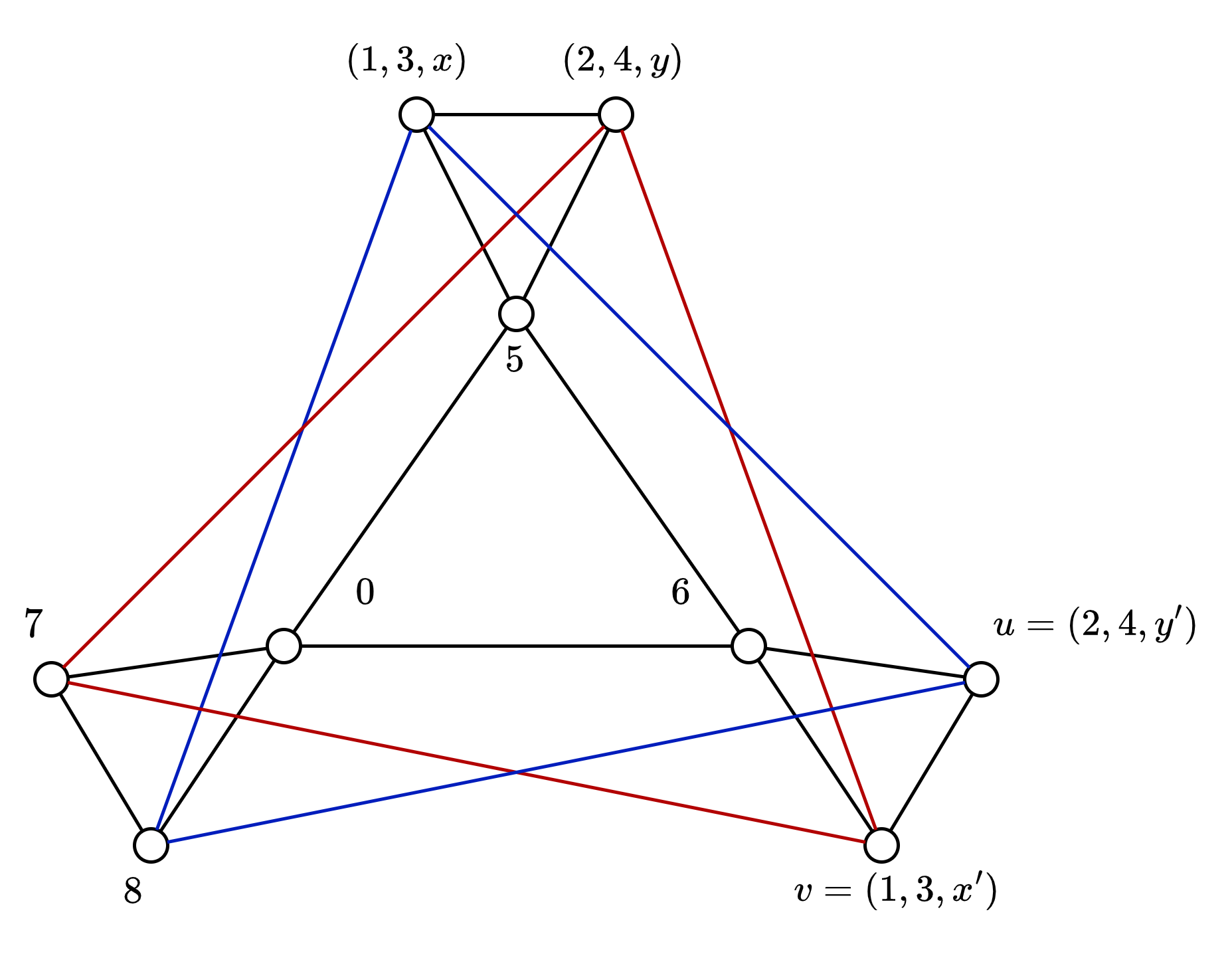}
        \caption{Examining potential neighbouring sets for $v$ and $u$ in a Paley(9) subgraph. This figure also depicts a triangular view of the Paley(9) graph.}
        \label{paley9triangular}
    \end{figure}

    Therefore, the only possibility that remains is that $u$ must belong to $N_{2,4}$ and $v$ must belong to $N_{1,3}$. This leads to a contradiction as $v = (1,3,x')$ and $(1,3,x)$ share three neighbours:
    \begin{equation*}
        \{(1,3), (2,4,y), (2,4,y')\}.
    \end{equation*}
    We can now infer that the Paley(9) pattern cannot hold in a $(99,14,1,2)$ strongly regular graph.
\end{proof}

\subsection{Paley(9) subgraph}

In Theorem \ref{p9patternImpossible}, we proved possible $(99,14,1,2)$ strongly regular graphs cannot follow the Paley(9) pattern. With a similar approach, we can prove a $(99,14,1,2)$ strongly regular graph cannot contain eleven independent Paley(9) subgraphs. We omit the proof here:
\begin{theorem} \label{noElevenPaley9}
    If a $(99,14,1,2)$ strongly regular graph exists, it cannot contain eleven independent Paley(9) subgraphs.
\end{theorem}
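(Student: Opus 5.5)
The plan is to first show that any eleven vertex-disjoint Paley(9) subgraphs $P_1,\dots,P_{11}$ must fit together in a very rigid way, and then derive a contradiction from that rigidity, in the spirit of the proof of Theorem \ref{p9patternImpossible}. Since $11\cdot 9 = 99$, the copies partition $V$.

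\textbf{Step 1: the rigid structure.} First I show each $P_i$ is an induced subgraph. If two vertices non-adjacent in $P_i$ were adjacent in $G$, they would already have two common neighbours inside $P_i$, contradicting $\lambda=1$. Next, every outside vertex $w\notin P_i$ has at most one neighbour in $P_i$. Suppose $w$ were adjacent to $a,b\in P_i$. If $a\sim b$, then $w$ would be a second common neighbour besides the third vertex of their triangle in $P_i$, violating $\lambda=1$. If $a\not\sim b$, then $w$ would be a third common neighbour, violating $\mu=2$. Each vertex of $P_i$ has $4$ neighbours inside $P_i$ and hence $10$ outside. These lie in $10$ other copies, at most one in each, so exactly one in each. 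Thus between any two copies $P_i,P_j$ the edges form a perfect matching, which I write as a bijection $f_{ij}:P_i\to P_j$. The partition is equitable with quotient matrix $3I+J$. This is consistent with the spectrum $\{14^1,3^{m_r},(-4)^{m_s}\}$ computed from Equations \ref{eigenvaluesR}--\ref{eigenvaluesS}, so a purely spectral argument will not suffice and a local argument is needed. The naive count of $2$-paths between two copies ($36+36+81=153=2\cdot 72+9$) is also consistent, so the contradiction must come from finer structure.

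\textbf{Step 2: triangles through a vertex.} Each $p\in P_i$ lies in $7$ triangles, and $2$ of them lie inside $P_i$, since Paley(9) has every vertex in exactly two triangles. The remaining $5$ triangles therefore pair up the $10$ outside neighbours of $p$. This gives, at every vertex, a perfect pairing of the ten other copies. Moreover, a triangle $\{p,f_{ij}(p),z\}$ with $z\in P_k$ shows that $f_{ij}$, $f_{jk}$ and $f_{ik}$ are compatible at $p$: namely $f_{jk}(f_{ij}(p))=f_{ik}(p)$.

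\textbf{Step 3: the matchings are isomorphisms.} I will show that each $f_{ij}$ is a graph isomorphism $P_i\to P_j$. I intend to do this by counting, for fixed $a\sim b$ in $P_i$, the common neighbours of $a$ and $f_{ij}(b)$. These two vertices are non-adjacent, so they have exactly $2$ common neighbours, and $b$ is one of them. I then use the Step 2 compatibilities to force the second common neighbour to be $f_{ij}(a)$ when the other options are excluded. If this works, the copies together with the matchings make $G$ look locally like a product of Paley(9) with a structure on the $11$ copies. Then for any vertex $v$ and two triangles through $v$ the nine vertices $P_v$ from the definition span a Paley(9) graph, that is, the Paley(9) pattern holds. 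This contradicts Theorem \ref{p9patternImpossible}.

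\textbf{Main obstacle.} I expect Step 3 to be the main obstacle. The second common neighbour of $a$ and $f_{ij}(b)$ could a priori lie in a third copy $P_k$, and excluding this needs a careful case analysis on the triangle pairings from Step 2, much like the set-by-set eliminations in the proof of Theorem \ref{p9patternImpossible}. If the full isomorphism claim resists, my fallback is weaker. I would aim only to show that the pattern is forced around a single vertex $0$, which is what that earlier proof actually uses, and then rerun its final contradiction, namely two vertices of $N_{(1,3)}$ sharing three neighbours.
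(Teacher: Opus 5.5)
Your Steps 1 and 2 are correct and give a useful rigid framework. Each $P_i$ is induced, the edges between two copies form a perfect matching $f_{ij}$, and at every vertex the five outside triangles pair up the other ten copies, with $f_{jk}\circ f_{ij}=f_{ik}$ holding at points of such triangles. The paper omits its own proof, saying only that an approach similar to Theorem~\ref{p9patternImpossible} works, so your reduction must carry the whole argument.

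\textbf{Gap 1: Step~3 has no mechanism.} It carries all the content. For $a\sim b$ in $P_i$, the second common neighbour of $a$ and $f_{ij}(b)$ is either $f_{ij}(a)$ or a vertex $z=f_{ik}(a)=f_{jk}(f_{ij}(b))$ in a third copy.
\begin{itemize}
\item The Step~2 compatibility only concerns the triangles on matching edges. It therefore excludes just the copies completing the triangles on $a f_{ij}(a)$ and on $b f_{ij}(b)$. Seven or eight copies remain open, and a quadrilateral $a,b,f_{ij}(b),z$ violates neither $\lambda=1$ nor $\mu=2$.
\item Counting cannot decide it either. Let $N_{ab}$ be the number of such third copies. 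The conditions on the non-adjacent pairs $\{a,f_{ij}(b)\}$, $a\neq b$, read $[a\sim b]+[f_{ij}(a)\sim f_{ij}(b)]+N_{ab}=2$. Each vertex of a third copy yields one such pair unless it completes one of the nine triangles on the $P_i$--$P_j$ matching, so $\sum_{a\neq b}N_{ab}=81-9=72$. Summing the conditions then gives $36+36+72=2\cdot 72$ identically, however many edges $f_{ij}$ preserves.
\end{itemize}
So neither the triangle compatibilities nor double counting between two copies forces $f_{ij}$ to be an isomorphism. You would need a genuinely global argument across many copies, which is essentially the theorem itself.

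\textbf{Gap 2: Step~3 would not give the Paley(9) pattern.} Theorem~\ref{p9patternImpossible} needs the pattern at every vertex and for every pair of triangles. Take two triangles at $v\in P_i$ that both leave $P_i$, with $v_1=f_{ij}(v)$, $v_2=f_{ik}(v)$, $v_3=f_{il}(v)$, $v_4=f_{im}(v)$.
\begin{itemize}
\item Given Step~3, the vertices $f_{jl}(v_1)$ and $f_{lj}(v_3)$ are either both or neither common neighbours of $v_1,v_3$, and both would give three. So $(v_1,v_3)_v$ lies outside $P_i\cup P_j\cup P_l$, and similarly $(v_2,v_3)_v$ lies outside $P_i\cup P_k\cup P_l$.
\item The pattern then demands that $v_3$, $(v_1,v_3)_v$ and $(v_2,v_3)_v$ form a triangle. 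This is a triangle made entirely of matching edges, so it is a condition on how the Step~2 pairings at different vertices interlock.
\item The property ``every $f_{ij}$ is an isomorphism'' concerns in-copy edges and does not address this. The phrase ``looks locally like a product'' hides exactly this step.
\end{itemize}

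\textbf{The fallback misreads the earlier proof.} That proof does not use the pattern only around vertex $0$. It also invokes parallelism at vertices $1$ and $3$. In its decisive step it applies the pattern at vertex $5$ to the triangles $\{5,(1,3,x),(2,4,y)\}$ and $\{5,0,6\}$. Establishing the pattern at a single vertex therefore does not let you rerun it.
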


By observing these, we conjecture that no Paley(9) subgraphs can exist in a possible $(99,14,1,2)$ strongly regular graph:
\begin{conjecture} \label{noPaley9Subgraph}
    In a $(99,14,1,2)$ strongly regular graph instance, there cannot be any Paley(9) subgraph.
\end{conjecture}
By demonstrating this, we establish the infeasibility of a subgraph that meets the neighbouring conditions \textit{within itself}.

\subsection{Triangular view} \label{triangular_view}
An interesting and possible way to approach the Conway-99 problem is to focus on its triangles. As we studied, the $\lambda = 1$ property makes triangles one of the fundamental units of this graph. In this part, we want to study how the triangles interact with each other and affect the overall graph's structure.

As can be seen in Figure \ref{paley9triangular}, the Paley(9) graph consists of six triangles. To construct the triangular graph, we will follow a two-step process. Firstly, we will assign a vertex to each triangle, and secondly, we connect the triangles that share a vertex.

\begin{lemma}
    The triangular graph of an $(n,k,1,2)$ strongly regular graph, consists of $\frac{nk}{6}$ vertices and is $\frac{3k - 6}{2}$-regular. Every two neighbours share $\frac{k}{2} - 2$ neighbours. Also, every two non-neighbour can share at most three neighbours.
\end{lemma}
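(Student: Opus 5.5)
The plan is to derive everything from two local facts implied by $\lambda = 1$. First, every edge lies in exactly one triangle. Second, for every vertex $v$, the induced subgraph on $N(v)$ is a perfect matching of $k/2$ edges: each $u \in N(v)$ has exactly $\lambda = 1$ neighbour inside $N(v)$. Two consequences follow directly. Each vertex lies in exactly $k/2$ triangles, namely $v$ together with each matching edge of $N(v)$. And two distinct triangles share at most one vertex, because sharing two vertices means sharing an edge, and that edge lies in a unique triangle.

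\textbf{Vertex count and regularity.} Double count pairs (vertex, triangle containing it). There are $n \cdot \frac{k}{2}$ such pairs, and each triangle is counted three times, so there are $\frac{nk}{6}$ triangles. For the degree, fix a triangle $T = \{a,b,c\}$. Each of $a,b,c$ lies in $\frac{k}{2}-1$ triangles other than $T$. A triangle through both $a$ and $b$ would contain the edge $ab$ and hence equal $T$. Therefore these $3(\frac{k}{2}-1)$ triangles are pairwise distinct, and the degree is $\frac{3k-6}{2}$.

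\textbf{Adjacent triangles.} Let $T = \{v,a,b\}$ and $T' = \{v,c,d\}$ share the vertex $v$. Every triangle through $v$ other than $T$ and $T'$ meets both, which gives $\frac{k}{2}-2$ common neighbours. I then show there are no others. Such a triangle would have to avoid $v$ and contain some $x \in \{a,b\}$ and some $y \in \{c,d\}$, so $xy$ would be an edge. But $a,b,c,d \in N(v)$, and inside $N(v)$ the vertex $a$ is adjacent only to $b$ (perfect matching), so no edge joins $\{a,b\}$ to $\{c,d\}$. Hence exactly $\frac{k}{2}-2$.

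\textbf{Non-adjacent triangles.} Let $T$ and $T'$ be vertex-disjoint. A common neighbour triangle $S$ must contain some $x \in T$ and some $y \in T'$. It cannot contain two vertices of $T$, since then it would share an edge with $T$ and equal $T$; likewise for $T'$. So $S$ contains exactly one vertex from each, these are adjacent, and $S$ is the unique triangle on the edge $xy$. This gives an injection from common neighbours to edges between $T$ and $T'$, and it suffices to show there are at most three such edges. I will show each $x \in T$ has at most one neighbour in $T'$. If $x$ were adjacent to $y, y' \in T'$, then $\{x,y,y'\}$ would be a triangle containing the edge $yy'$. That triangle would then equal $T'$, forcing $x \in T'$, a contradiction. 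So there are at most three edges and at most three common neighbours. The only delicate point is checking that this map is well defined and injective, as argued above.
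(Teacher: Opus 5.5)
Your proposal is correct and follows essentially the same route as the paper: the same double count giving $\frac{nk}{6}$ triangles, the same per-vertex count for the degree, the same ``no edges among the four outer vertices'' argument for two triangles sharing a vertex, and the same ``each vertex has at most one neighbour in the other triangle'' bound in the non-adjacent case. Your version is somewhat more careful than the paper's. You justify the distinctness of the counted triangles via $\lambda = 1$, where the paper cites $\mu = 2$, and you spell out the injection from common neighbours to cross edges, a step the paper leaves implicit.
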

\begin{proof}
    In the initial graph with $\lambda = 1$ and $\mu = 2$ parameters, each vertex $v$, has a neighbourhood of $\frac{k}{2}$ edge perfect matching, which means it is inside $\frac{k}{2}$ triangles. By adding up the number of triangles adjacent to each vertex, we get $n\frac{k}{2}$. Since each triangle is counted three times, the actual number of triangles is $\frac{nk}{6}$.
    
    Consider one triangle. Each vertex of it has $\frac{k}{2} - 1$ more triangles adjacent to it. Also, because of the $\mu = 2$ condition, no triangle is counted twice. Therefore, each triangle is adjacent to exactly $\frac{3k - 6}{2}$ triangles.

    Consider two adjacent triangles, that is, they share a vertex $v$. It becomes straightforward to infer that no edge can exist among the four other vertices as such a scenario would violate the $\lambda = 1$ condition. Therefore, the two adjacent triangles are only adjacent to the other $\frac{k}{2} - 2$ triangles of vertex $v$.
    
    % HEREEEEEEE
    Among two non-adjacent triangles of the initial graph, each vertex is adjacent to at most one vertex of the other triangle. Therefore, a maximum of three edges can exist between two non-adjacent triangles.  Hence, in the triangular graph, every non-neighbour pair has at most three shared neighbours.
\end{proof}
Although this is not the original description of the $\mu$ parameters, we use it somewhat differently here. In this graph, we have $\lambda = \frac{k}{2} - 2$, and $\mu \leq 3$. The notation $\mu \leq 3$ here shows that two non-neighbours have at most three common neighbours.

Therefore, for a $(99,14,1,2)$ strongly regular graph to exist, we first must have a graph $G$ with parameter set $(231,18,5, \leq 3)$. 

\begin{observation}
    The triangular graph of Paley(9) is $K_{3,3}$. This graph has 6 vertices, is 3-regular, and is triangle-free.
\end{observation}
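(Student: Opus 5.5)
The plan is to compute the triangular graph of Paley(9) explicitly and then check the three claimed properties. The computation has three parts: list the triangles, decide which pairs share a vertex, and identify the resulting graph.

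\emph{Step 1: the triangles.} By the preceding lemma with $n=9$, $k=4$, Paley(9) has $\frac{nk}{6}=6$ triangles. The triangular graph is therefore regular of degree $\frac{3k-6}{2}=3$, and two adjacent triangles share $\frac{k}{2}-2=0$ common neighbours. So the vertex count, the 3-regularity and triangle-freeness all follow at once from that lemma.

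\emph{Step 2: identifying the graph as $K_{3,3}$.} I would use the $GF(9)$ model with vertices $(i,j)\in GF(3)^2$, where $v$ is adjacent to $v\pm 1$ and $v\pm x$. Every edge lies in a unique triangle. The triangles are the cosets $\{v,v+1,v+2\}$ of the subgroup $\langle 1\rangle$, i.e.\ the three ``rows'', and the cosets of $\langle x\rangle$, i.e.\ the three ``columns''.
\begin{itemize}
\item Two distinct rows are disjoint, and so are two distinct columns.
\item A row and a column always meet in exactly one vertex.
\end{itemize}
Hence two triangles share a vertex exactly when one is a row and the other is a column. The triangular graph is then the complete bipartite graph between the three rows and the three columns, namely $K_{3,3}$. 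This gives 6 vertices, 3-regularity, and no triangles, since the graph is bipartite.

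The only point that needs care is Step 2, namely that the six triangles are exactly the row and column cosets. This holds because $\lambda=1$ gives each edge a unique triangle, and each coset of $\langle 1\rangle$ or $\langle x\rangle$ is a triangle. Together these account for all $6\cdot 3=18$ edges (each triangle has 3 edges), and Paley(9) has $9\cdot4/2=18$ edges, so no further triangles exist.
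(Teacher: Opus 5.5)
Your proof is correct and follows the paper's implicit argument: the paper writes no proof beyond specialising the preceding triangular-graph lemma to $n=9$, $k=4$ (your Step 1) and pointing to Figure~\ref{paley9triangular}, and your row/column coset computation in $GF(3)^2$ is a rigorous version of that picture. Either step would suffice on its own: Step 2 already gives all three properties, and Step 1 forces $K_{3,3}$ because the only other cubic graph on six vertices, the prism $K_3 \square K_2$, contains triangles.
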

This graph is a trivial, strongly regular graph.

\chapter{Computer search using SAT solvers} \label{SATSearch}
Computer search has been historically used to find instances of strongly regular graphs or prove they do not exist. Through the usage of backtracking algorithms, Coolsaet et al. \cite{coolsaet2006strongly} successfully discovered all potential non-isomorphic strongly regular graphs with a parameter set of $(45,12,3,3)$. Behbahani and Lam \cite{BEHBAHANI2011132} used matrices and strongly regular graphs' algebraic structures to study automorphism groups. They stated that they had used multiple years of computation using hundreds of Intel-based Linux machines until they were able to produce their results.

In 1989, Bussemaker et al. \cite{bussemaker198949} used computation power to prove the non-existence of a strongly regular graph with parameters $(49,16,3,6)$. They started by eliminating various possibilities through theoretical analysis and then used a computer to search the remaining space. Searching through that space, although not computationally intensive, was not doable by hand.

A fairly recent result was obtained by Gritsenko \cite{gritsenko2021strongly}, where a $(65, 32, 15, 16)$ strongly regular graph was discovered. The idea was to construct a matrix with blocks and then circulate through those blocks. More methods and results can be seen in \cite{spence2000} \cite{mckay2001classification}. In \cite{srgDatadabase}, Brouwer has compiled a great amount of data on strongly regular graphs.

In this chapter, we will delve into the SAT problem and explore how to encode the search problem for strongly regular graphs as a SAT formula. We will cover multiple types of SAT formulas for different encodings.

\section{SAT problem and SAT solvers} \label{satSolver}

SAT solvers are considered black-box tools in this research project. Although sufficient in many cases, it does not offer the level of information and efficiency needed to solve the Conway-99 problem.

The following is a basic definition of the 
\textit{Boolean satisfiability}, abbreviated as \textit{SAT}, problem:
\begin{definition}[SAT problem]
    Given a Boolean formula $F$ with variables $x_1, x_2, \dots, x_n$ and logical operators $\land$ (AND), $\lor$ (OR), and $\neg$ (NOT), is there an assignment of truth values ${0,1}$ to the variables such that $F$ evaluates to \syntactic{True}? If so, the formula $F$ is called satisfiable, and unsatisfiable otherwise.
\end{definition}

Boolean formulas can be represented in many forms; however, the most commonly used representation is the {\it conjunctive normal form} (CNF). 

\begin{definition}[CNF]
    A CNF formula is a conjunction (logical AND) of multiple clauses, such that each clause is a disjunction (logical OR) of literals. Literals are either variables or their negation.
\end{definition}

The use of CNF in SAT is simple and straightforward, providing a clear representation of the formula. In fact, that is why many SAT solvers take their inputs in CNF; It should also be noted that any SAT formula can be transformed into CNF using boolean algebra.

One simple formula can be the following:
\begin{equation*}
        \phi = (x_1 \land x_2) \lor \neg x_3,
\end{equation*}
which can be transformed into a CNF version:
\begin{equation*}
        \phi = (x_1 \lor \neg x_3) \land (x_2 \lor \neg x_3).
\end{equation*}

SAT solvers use different algorithms to determine the satisfiability of a formula.

In complexity theory, the SAT problem plays a pivotal role. Since NP-Hard problems are all reducible to a SAT instance in polynomial time, SAT solvers have numerous applications. The development and advancements in SAT solver algorithms have led to more efficient and faster solutions over the years. SAT solvers are a powerful tool widely used in various domains, such as hardware and software verification, planning and scheduling, cryptography, and artificial intelligence. In recent years, the development of parallel and distributed SAT solvers has enabled the solving of even larger and more complex instances of the SAT problem. A specific competition is held each year, known as the international SAT competition, to test, benchmark, compare, and suggest new SAT algorithms.

The development of the Davis-Putnam-Logemann-Loveland (DPLL) algorithm, which is a backtracking algorithm, was a key moment in the history of SAT solvers. Since then, many variations and heuristics have been applied to explore the space of possible assignments more efficiently. These improvements include conflict-driven clause learning (CDCL), non-chronological backtracking, and clause deletion.

One of the most significant improvements to SAT algorithms has been the use of preprocessing techniques to simplify the input formula before solving it \cite{biere2021preprocessing}. Preprocessing can involve various operations such as unit propagation (removing clauses that contain single literals), clause learning, variable elimination, and symmetry breaking. These techniques can reduce the size of the input formula and simplify the search space, leading to faster solving times and more efficient use of resources.

Conflict-Driven Clause Learning (CDCL) SAT solvers work by incrementally assigning values to boolean variables. These solvers maintain the constraints with another set of clauses derived from the previous set. When a conflict arises, meaning the algorithm can no longer continue assigning variables to satisfy the SAT formula, the algorithm backtracks and learns from the conflict. In the learning process, it adds a new clause that blocks the previous assignment causing the conflict. This continues until a solution is found or it is determined that no valid assignment exists. Algorithm \ref{CDCLPSEUDO} presents a simple pseudo-code of the process.

\begin{algorithm}[H]
  \SetKwFunction{CDCL}{CDCL}
  \SetKwProg{Fn}{Function}{:}{}
  \SetKwInOut{Input}{Input}
  \SetKwInOut{Output}{Output}

  \Fn{\CDCL{Formula \syntactic{F}}}{
    \While{\syntactic{F} variables are not entirely assigned}{
        \syntactic{state} $\gets$ save the state\\
        Select a variable in \syntactic{F} and assign it to \syntactic{True} or \syntactic{False}\\
        Simply do a unit propagation\\
        \syntactic{conflict} $\gets$ Find the conflicts in \syntactic{F}\\
        \If{\syntactic{conflict} = $\emptyset$}{
            continue \\
        }
        \syntactic{clause} $\gets$ derive a new clause from the conflict\\
        Add \syntactic{clause} to \syntactic{F} \\
        backtrack to \syntactic{state}
    }
  }
  
  \caption{The Conflict-Driven Clause Learning (CDCL) Algorithm}
  \label{CDCLPSEUDO}
\end{algorithm}

One great example of a practical SAT solver is CaDiCaL \cite{BiereFazekasFleuryHeisinger}, an open-source project also available on GitHub. It is a high-performance solver used in many applications.

The huge search space involving the Conway-99 problem is deemed impractical to search at the moment. With basic exhaustive approaches, there are a total of $2^{\binom{99}{2}}$ potential graphs; obtained by placing or not placing each edge out of the $\binom{99}{2}$ edges.

In this research, we have implemented tools and programs for encoding the Conway-99 problem and, more generally, encoding the search for all strongly regular graphs as SAT problems. Two approaches are explained in Sections \ref{cnfSection} and \ref{PBSection}. However, before getting into the problem's encoding, we explain in Section \ref{symmetrySection} why we are utilising SAT solvers and what makes them special.

\section{Symmetry-breaking techniques} \label{symmetrySection}

In simple words, symmetry-breaking tries to avoid considering multiple identical copies of the same object. It avoids redundant computation by ensuring the solver does not explore multiple solutions that are essentially the same due to their symmetry.

This approach is usually made by adding symmetry-breaking clauses to our SAT instance. The following example illustrates this and similar approaches:
\begin{example}
    Consider a SAT problem that involves three boolean variables, namely $x$, $y$, and $z$:
    \begin{equation*}
        (x \lor y) \land (x \lor \neg z) \land (y \lor \neg z) \land (\neg x \lor \neg y \lor z).
    \end{equation*}
    The symmetry between $x$ and $y$ is visible. By swapping these, the SAT problem remains unchanged. One way to break this symmetry is to add the clause $(\neg x \lor y)$. By considering this clause, the case $x = \syntactic{True}$ and $y = \syntactic{False}$ is not considered.
\end{example}

However, symmetry detection itself is not an easy task. This problem is essentially equivalent to the graph isomorphism problem. The graph isomorphism problem asks whether the two input graphs are isomorphic. This problem is classified as NP \cite{luks1996symmetry}.

In the realm of graph theory, SAT solvers have been employed to solve many problems. One common usage is in graph colouring problems, which contain many symmetrical answers to it.

When dealing with strongly regular graphs, the application of SAT solvers can prove to be valuable in breaking symmetries. Many of these graphs possess a great amount of symmetry, which can be eliminated through the addition of extra conditions. This process can effectively decrease the search area.

\begin{figure}
  \centering
  \includegraphics[width=\linewidth]{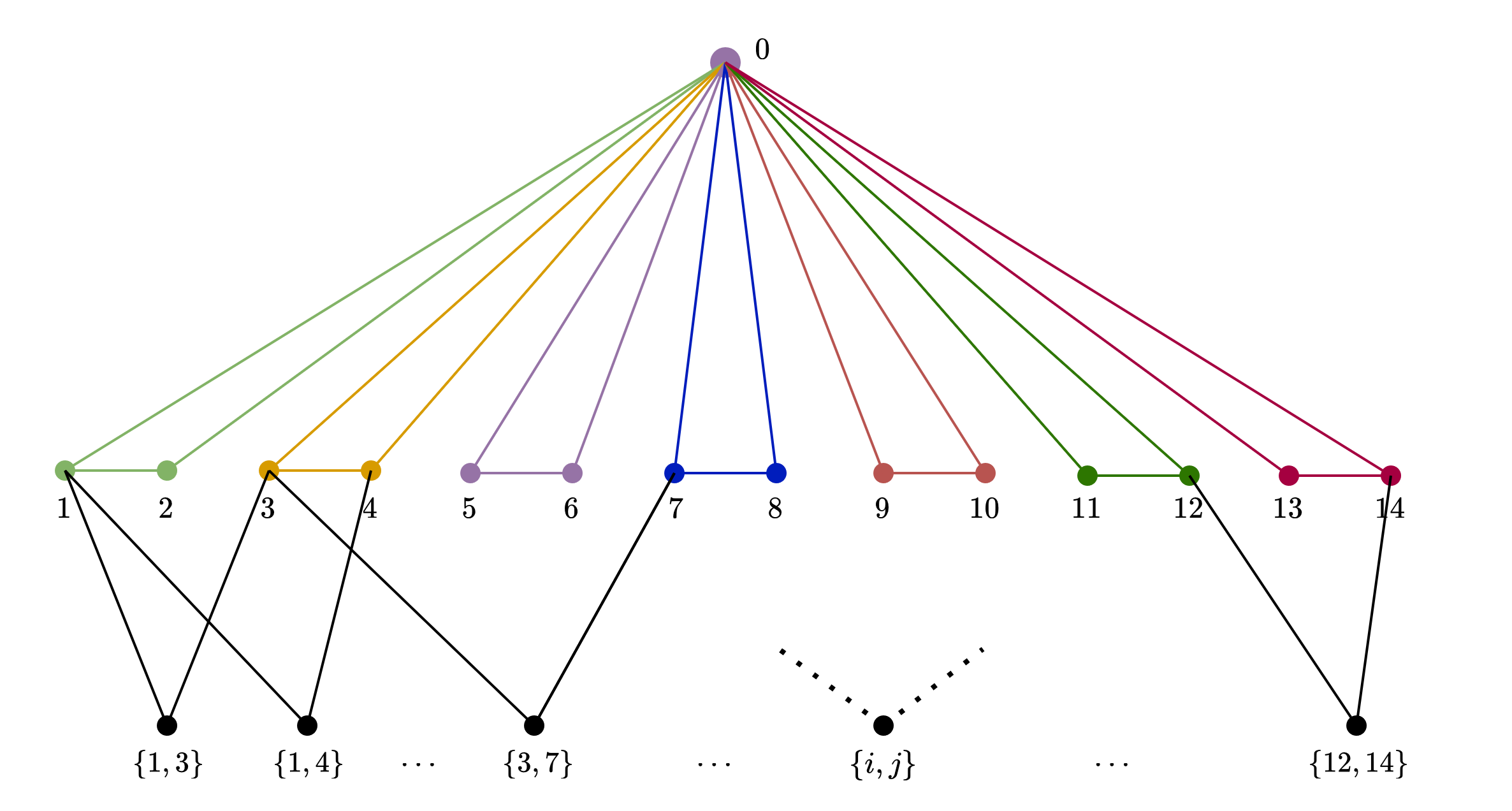}
  \caption{Conway99 graph's view from one vertex. Each non-neighbour can be represented as a set with cardinality two.}
  \label{oneVertex}
\end{figure}

A basic symmetry-breaking example for the Conway-99 problem can be the graph's view from one vertex, as shown in Figure~\ref{oneVertex}. Consider a possible solution as graph $G = (V,E)$. For a vertex $v \in V$, and two neighbours of it $i,j \in N(v)$ where $i$ and $j$ are not connected, there must exist a unique vertex in $V - N[v]$ connected to $i$ and $j$, so that the $\mu = 2$ condition between vertices $i$ and $j$ is satisfied. Also, again because of the $\mu = 2$ condition, this vertex cannot be connected to any other vertices in $N(v)$. Label this vertex as $\{i,j\}$. We can now label all vertices in $V - N[v]$ with an unordered pair of non-neighbour vertices in $N(v)$. By labelling these vertices as shown in Figure~\ref{oneVertex}, the problem reduces to finding the edges among the $\frac{14}{2} - 7 = 84$ unordered pairs.

\section{CNF Clauses} \label{cnfSection}
In this section, we encode our problem into a CNF instance.

\begin{lemma} \label{cnfMinimum}
    By employing $\binom{n}{\theta + 1}$ CNF clauses, we can generate a formula that is \syntactic{True} when at most $\theta$ \syntactic{True} instances exist among $x_1, x_2, \ldots, x_n$ boolean variables.
\end{lemma}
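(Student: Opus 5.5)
The plan is to use the naive ``forbid every large subset'' encoding. For each subset $S \subseteq \{1,\ldots,n\}$ with $|S| = \theta + 1$, we add the clause
\begin{equation*}
    C_S = \bigvee_{i \in S} \neg x_i .
\end{equation*}
The formula is $\Phi_\theta = \bigwedge_{|S| = \theta+1} C_S$. There is exactly one clause per $(\theta+1)$-element subset, so the number of clauses is $\binom{n}{\theta+1}$. No auxiliary variables are introduced, and every clause is a disjunction of literals, so $\Phi_\theta$ is in CNF as required.

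Next we check that $\Phi_\theta$ is \syntactic{True} exactly when at most $\theta$ of the variables are \syntactic{True}. Fix an assignment and let $T = \{i : x_i = \syntactic{True}\}$.

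First suppose $|T| \le \theta$. Every $S$ with $|S| = \theta+1$ then has some $i \in S \setminus T$. That $x_i$ is \syntactic{False}, so $\neg x_i$ satisfies $C_S$, and hence $\Phi_\theta$ holds.

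Conversely, suppose $|T| \ge \theta+1$. Pick any $S \subseteq T$ with $|S| = \theta+1$. All literals of $C_S$ are \syntactic{False}, so $C_S$, and therefore $\Phi_\theta$, is falsified.

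There is no genuine obstacle here. The only delicate points are the boundary cases. If $\theta \ge n$, then $\binom{n}{\theta+1} = 0$, and the empty conjunction is \syntactic{True}; this is correct, since the constraint is vacuous. If $\theta = 0$, the construction gives the $n$ unit clauses $\neg x_i$, as expected.

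The statement only asserts that this many clauses suffice; it says nothing about optimality. If the label ``Minimum'' is meant to claim that $\binom{n}{\theta+1}$ is the fewest clauses possible without auxiliary variables, I would add one short argument. Any CNF over $x_1,\ldots,x_n$ alone that is equivalent to the constraint must falsify every assignment whose true set is exactly some $S$ with $|S| = \theta+1$.

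A clause falsified by that assignment can contain only negative literals $\neg x_i$ with $i \in S$ and positive literals $x_j$ with $j \notin S$. If it had a positive literal $x_j$, it would also be falsified by an assignment with true set a $(\theta+1)$-subset of $S \cup \{j\}$ minus some element, which is fine. But if it omitted some $\neg x_i$ with $i\in S$, it would be falsified by an assignment with true set $S \setminus \{i\}$, which has only $\theta$ true variables and should satisfy the formula. Hence the clause must contain all of $\neg x_i$ for $i \in S$. Such a clause is falsified by the true set $S$ only, among the minimal violating sets.

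This gives a distinct clause for each $S$, which yields the matching lower bound. This optional step is the only place needing care.
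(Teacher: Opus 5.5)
Your construction, one all-negative clause $\bigvee_{i\in S}\neg x_i$ for each $(\theta+1)$-subset $S$, is exactly the paper's, and your two-direction check is correct (the paper states only the ``if'' direction, in one line). The optional minimality argument is a correct addition that the paper does not make, but drop the aside about a positive literal $x_j$: it is wrong as written, since any assignment with $x_j$ true satisfies such a clause, and the injectivity of $S\mapsto C$ does not use it.
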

\begin{proof}
    For each subset of size $\theta + 1$ in variables, we add the following clause (without the loss of generality, consider the subset as $\{x_1, \ldots, x_{\theta + 1}\}$):
    \begin{equation*}
        (\neg x_1 \lor \neg x_2 \lor \ldots \lor \neg x_{\theta + 1}),
    \end{equation*} to the final formula $F$. The formula validates to \syntactic{True} if we have $\theta$ or less \syntactic{True} variables.
\end{proof}

\begin{lemma} \label{exactlyCNFNumber}
    By employing $\binom{n}{\theta - 1} + \binom{n}{\theta + 1}$ CNF clauses, we can generate a formula that is \syntactic{True} when exactly $\theta$ \syntactic{True} instances exist among $x_1, x_2, \ldots, x_n$ boolean variables.
\end{lemma}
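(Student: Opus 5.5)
The plan is to split ``exactly $\theta$'' into an ``at most $\theta$'' part and an ``at least $\theta$'' part, and to encode each with one family of clauses. The ``at most $\theta$'' half is exactly Lemma \ref{cnfMinimum}: for every subset of size $\theta + 1$ of $\{x_1,\ldots,x_n\}$ we add the clause consisting of the negations of its members. This uses $\binom{n}{\theta+1}$ clauses and is satisfied precisely when at most $\theta$ variables are \syntactic{True}.

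For the ``at least $\theta$'' half I would use the dual construction. An assignment has at least $\theta$ true variables if and only if it has at most $n-\theta$ false variables. Equivalently, every subset of $n-\theta+1$ variables contains at least one true variable. So for each subset $S$ of size $n-\theta+1$ we add the positive clause $\bigvee_{x_i \in S} x_i$.

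The correctness argument is short. If at least $\theta$ variables are true, then at most $n-\theta$ are false, so no subset of size $n-\theta+1$ can consist only of false variables, and every clause is satisfied. Conversely, suppose fewer than $\theta$ variables are true. Then at least $n-\theta+1$ variables are false, and any $n-\theta+1$ of them form a subset whose clause is violated. Taking the conjunction of the two families therefore gives a formula that holds exactly when precisely $\theta$ variables are \syntactic{True}.

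The main obstacle is matching the count stated in the lemma. My construction uses $\binom{n}{n-\theta+1} = \binom{n}{\theta-1}$ clauses for the lower bound. Together with the upper bound this gives $\binom{n}{\theta-1} + \binom{n}{\theta+1}$ clauses in total, as claimed. The key step to state explicitly is therefore the symmetry identity $\binom{n}{n-\theta+1} = \binom{n}{\theta-1}$. I would also mention the degenerate cases: when $\theta = 0$, $\binom{n}{-1} = 0$ and no lower-bound clauses are needed; when $\theta = n$, $\binom{n}{n+1} = 0$ and no upper-bound clauses are needed. Both cases remain consistent with the formula.
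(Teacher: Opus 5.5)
Your proposal is correct and follows the same route as the paper: $\binom{n}{\theta+1}$ negative clauses enforce ``at most $\theta$ \syntactic{True}'', and $\binom{n}{n-\theta+1}$ positive clauses enforce ``at most $n-\theta$ \syntactic{False}''. You also spell out the identity $\binom{n}{n-\theta+1}=\binom{n}{\theta-1}$, the correctness check in both directions, and the degenerate cases, all of which the paper's proof leaves implicit.
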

\begin{proof}
    Similar to the Lemma \ref{cnfMinimum}, we will use $\binom{n}{\theta + 1}$ to ensure we have at most $\theta$ \syntactic{True} variables.

    To proceed, we utilise $\binom{n}{n - \theta + 1}$ more clauses to guarantee that only a maximum of $n - \theta$ variables are marked as \syntactic{False}. The CNF formula generated employs $\binom{n}{\theta - 1} + \binom{n}{\theta + 1}$, ensuring that precisely $\theta$ variables are identified as \syntactic{True}.
\end{proof}

We start by encoding the small, simple Paley(9) graph as a starting point for the encoding schemes of strongly regular graphs.

\subsection{Paley(9)}

In order to begin the encoding process, we at least need $\binom{9}{2} = 18$ variables; each variable represents an edge and is denoted by the notation $e_{i,j}$. These boolean variables determine whether an edge exists between vertices $i$ and $j$ (vertices are numbered $1$ to $n$), with a value of \syntactic{True} indicating its existence.

To ensure the $4$-regularity condition holds, for each vertex $v$, we need to ensure that exactly $4$ boolean variables out of the set $\{e_{1,v}, e_{2,v}, \ldots, e_{n,v}\} - \{e_{v,v}\}$ are \syntactic{True}. Using the approach in Lemma \ref{exactlyCNFNumber}, we can use $9 \cdot (\binom{13}{3} + \binom{13}{5})$ clauses to ensure regularity.

To encode shared neighbours, we use Cherry subgraphs (other names include $P_3$, claw, and angle). The boolean variables $c_{i,j,k}$, for $i, j, k \in \{1, 2, 3, \ldots, n\}$ and $i < j$, represent the existence of Cherry in Figure \ref{cherry_angle}.

\begin{figure}
    \centering
    \includegraphics[width=0.25\textwidth]{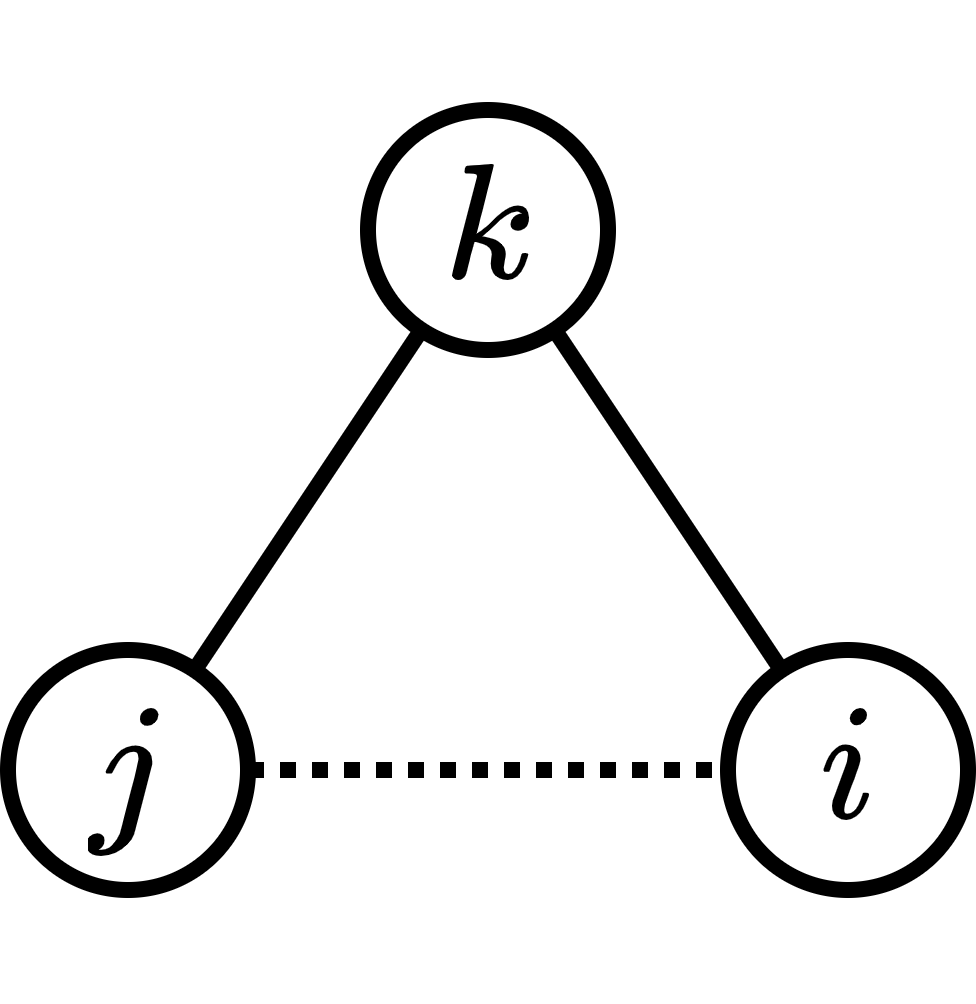}
    \caption{A shared neighbour $k$ for vertices $\{i,j\}$. The edge $\{i,j\}$ may or may not be present. }
    \label{cherry_angle}
\end{figure}

To encode Cherries' existence, we need $\binom{n}{2} (n - 2)$ number of variables. Each Cherry variable $c_{i,j,k}$ holds, if both edges $e_{i,k}$ and $e_{j,k}$ are \syntactic{True}, regardless of the variable $e_{i,j}$:
\begin{equation*}
    (e_{i,k} \land e_{j,k}) \longleftrightarrow c_{i,j,k} \equiv (\neg e_{i,k} \lor \neg e_{j,k} \lor c_{i,j,k}) \land (\neg c_{i,j,k} \lor e_{j,k}) \land (\neg c_{i,j,k} \lor e_{i,k}) .
\end{equation*}
If the edge $e_{i,j}$ holds, we need to satisfy the $\lambda = 1$ condition. To do this, we consider $\binom{n - 2}{\lambda + 1} + \binom{n - 2}{\lambda - 1}$ number of clauses to make sure exactly $\lambda = 1$ Cherries exist among the vertices $i$ and $j$. Then, by adding a $\neg e_{i,j}$ literal to these clauses, we make sure this only applies when the edge appears.

Similar to the $\lambda$ condition, to satisfying the $\mu = 2$ condition involves having $\binom{n - 2}{\mu - 1} + \binom{n - 2}{\mu + 1}$ clauses per edge. To ensure these conditions are met when $e_{i,j} = \syntactic{False}$, we simply include the literal $e_{i,j}$ in each clause.

Hence, in total for each edge, we have made:
\begin{equation*}
    \binom{n - 2}{\mu - 1} + \binom{n - 2}{\mu + 1} + \binom{n - 2}{\lambda + 1} + \binom{n - 2}{\lambda - 1}
\end{equation*}
number of clauses.

\subsection{Conway-99}
Considering this naive approach, the number of clauses needed is huge for this graph and practically cannot be generated. The number of clauses, just to ensure regularity for one vertex, is $\binom{98}{13} + \binom{98}{15}$ which is approximately $\approx 1.8 \times 10^{17}$. However, in Theorem \ref{removeKregular}, we prove that $k$-regularity is an excessive condition; hence, many clauses can be removed.

\begin{theorem} \label{remove14regular}
    Let $G = (V, E)$ denote a graph with 99 vertices. If all pairs of neighbouring vertices share $\lambda = 1$ neighbours, and every two non-neighbours have two common neighbours $\mu = 2$, then graph $G$ is 14-regular. Consequently, $G$ is a strongly regular graph.
\end{theorem}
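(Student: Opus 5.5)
The plan is a direct count from the point of view of a single vertex, mirroring the labelling of Figure~\ref{oneVertex}. It shows that every vertex's degree $d$ satisfies a quadratic equation whose only admissible root is $14$; regularity then follows immediately. Fix an arbitrary vertex $v$ and let $d = |N(v)|$.

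\textbf{Step 1: the structure of $N(v)$.} First I rule out $d = 0$: an isolated vertex $v$ would share no neighbour with any other vertex, contradicting $\mu = 2$ since $n = 99 > 1$. Next, for each $w \in N(v)$ the pair $\{v,w\}$ is adjacent, so by $\lambda = 1$ the vertex $w$ has exactly one neighbour inside $N(v)$. Hence the subgraph induced on $N(v)$ is a perfect matching, $d$ is even, and $N(v)$ contains exactly $\binom{d}{2} - \frac{d}{2}$ non-adjacent pairs.

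\textbf{Step 2: a bijection between the non-neighbours of $v$ and the non-adjacent pairs in $N(v)$.} Take any $x \in V - N[v]$. Since $x$ and $v$ are non-adjacent, $\mu = 2$ gives $x$ exactly two neighbours $i,j \in N(v)$. These two are not adjacent: otherwise $v$ and $x$ would both be common neighbours of the adjacent pair $\{i,j\}$, contradicting $\lambda = 1$. Conversely, let $\{i,j\}$ be a non-adjacent pair in $N(v)$. By $\mu = 2$ it has exactly two common neighbours, one of which is $v$. The other is not in $N(v)$, since each of $i,j$ has only its matched partner there and these partners differ. So the other common neighbour lies in $V - N[v]$, and it is unique. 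Together these facts show that $x \mapsto \{i,j\}$ is well defined, injective (two such $x$ with the same pair would give $\{i,j\}$ at least three common neighbours) and surjective.

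\textbf{Step 3: solve for $d$.} The bijection gives
\begin{equation*}
    98 - d = \binom{d}{2} - \frac{d}{2} = \frac{d(d-2)}{2},
\end{equation*}
that is, $d^2 = 196$, so $d = 14$. Since $v$ was arbitrary, $G$ is $14$-regular. The graph is neither complete nor empty, because $\mu = 2$ forces the existence of non-edges and $\lambda = 1$ (or $d = 14$) forces edges. Hence Definition~\ref{srgDefinition} is satisfied with parameters $(99,14,1,2)$.

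The only delicate point is Step 2, namely checking that the common-neighbour map is a genuine bijection. This requires using $\lambda$ and $\mu$ in both directions, especially to show that the second common neighbour of a non-adjacent pair in $N(v)$ lies outside $N[v]$. Steps 1 and 3 are routine.
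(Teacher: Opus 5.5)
Your proof is correct and follows the paper's argument. Both fix $v$, use $\lambda=1$ to see that $N(v)$ induces a perfect matching, identify $V-N[v]$ with the $\binom{d}{2}-\frac{d}{2}$ non-adjacent pairs in $N(v)$, and solve $98-d=\frac{d(d-2)}{2}$ to get $d=14$. Your version is more careful than the paper's: the paper only argues that distinct pairs receive distinct common neighbours, and leaves implicit both the surjectivity (each $x\in V-N[v]$ has exactly two neighbours in $N(v)$, and they are non-adjacent) and the fact that the second common neighbour lies outside $N[v]$.
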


\begin{proof}
Consider a vertex $v$ with $d_v$ neighbours. Since the induced subgraph of $N(v)$ has to be a complete matching, degree $d_v$ must be even. Also, each pair of non-neighbours in $N(v)$ needs to share a neighbour in $V - N[v]$. These neighbours must be distinct; if not, there would be a vertex $x \in V - N[v]$ that is connected to more than 2 neighbours of $N(v)$. Thus, the following equality must hold:
\begin{equation*}
    |V - N[v]| = \binom{d_v}{2} - \dfrac{d_v}{2} \Rightarrow 99 - d_v - 1 = \binom{d_v}{2} - \dfrac{d_v}{2} \Rightarrow d_v = 14,
\end{equation*}
and all vertices are of degree 14.
\end{proof}

If we limit our consideration to clauses only necessary for the $\lambda$ and $\mu$ parameters, as well as those needed for Cherries, we would still need $\approx 7e8$ number of clauses, which requires 22GB, a huge amount of storage.

To further limit the storage and the number of clauses needed, we decided to stabilise one vertex and its neighbourhood. In figure \ref{oneVertex}, by stabilising one vertex and its neighbours, we end up with exactly 84 vertices, such that each can be represented by an unordered pair $\{i,j\}$, where $i,j \in \{1 \ldots 14\}$ and $\{i,j\} \notin N[v]$. We can use this as a starting point to create a viable input for SAT solvers to handle. Unfortunately, the search space is still huge, and CNF SAT solvers cannot entirely explore it. 

\section{Pseudo-Boolean SAT clauses} \label{PBSection}

Unlike the CNF format, the pseudo-boolean (PB) representation is much more efficient in our problem. The pseudo-boolean form is a more generalised form of CNF, which allows for linear equations, equalities or inequalities. In these problems, the goal is to assign boolean values to the variables to satisfy all the equations. The problems in PB format are similar to constraint satisfaction problems (CSP) but with the condition that the domain of their variables is binary bits.

Another usefulness of this format is its optimisation capabilities, where the goal is, along with satisfying the assignments, to minimise or maximise certain linear objective functions.

When dealing with the Conway-99 problem, the regularity condition used to be a stopping point. However, we can simply overcome this problem with linear equations. To better understand this, let us consider the following example:
\begin{example} \label{linearExample}
    In order to express the $k$-regularity requirement of a vertex $v$, we can represent it as follows:
    \begin{equation*}
        +1\ e_{v,1} +1\ e_{v,2} +1\ e_{v,3} + \ldots +1\ e_{v,n} = k.
    \end{equation*}
    Applying this to all vertices, we can guarantee the $k$-regularity. 
\end{example}
Note that the variable $e_{v,v}$ does not exist due to the absence of self-loops.

In pseudo-boolean encoding, there are two different ways to encode the problem: linear and non-linear. A linear instance is similar to what we previously saw in Example \ref{linearExample}. In non-linear instances, it is possible to have the multiplication of variables in the constraints. In the binary form, this is the same as a conjunction. We now encode our problems in both methods.

\subsubsection{Non-linear encoding}
The minimum number of variables $\binom{n}{2}$ is enough for non-linear instances.

\begin{lemma} \label{nonLinearConditionsLemma}
    The equation
    \begin{equation} \label{nonLinearPBLambdaEquation}
        -\lambda\ e_{i,j} +1\ e_{i,j}e_{i,1} e_{1,j}  +1\ e_{i,j}e_{i,2} e_{2,j} + \ldots
        +1\ e_{i,j}e_{i,n} e_{n,j} = 0
    \end{equation} verifies that the $\lambda$ parameter condition for the pair $\{i,j\}$ holds. Also, the equation
    \begin{multline} \label{nonLinearPBMuEquation}
        +1\ e_{i,1}e_{1,j} +1\ e_{i,2}e_{2,j} + \ldots +1\ e_{i,n}e_{n,j}\\
        -1\ e_{i,j}e_{i,1} e_{1,j} -1\ e_{i,j}e_{i,2} e_{2,j} - \ldots -1\ e_{i,j}e_{i,n} e_{n,j} +\mu e_{i,j} = \mu
    \end{multline} verifies the $\mu$ parameter condition.
\end{lemma}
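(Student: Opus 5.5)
The plan is to check both equations by splitting on the value of the edge variable $e_{i,j}$. Since the $e$'s are $0/1$, each product $e_{i,m}e_{m,j}$ equals $1$ exactly when $m$ is a common neighbour of $i$ and $j$. Write
\begin{equation*}
C_{ij} = \sum_{m \neq i,j} e_{i,m}e_{m,j}
\end{equation*}
for the number of common neighbours. The vertices $m=i$ and $m=j$ have to be left out of the sums, or those terms treated as $0$, because there are no self-loop variables $e_{v,v}$. With this notation, Equation \ref{nonLinearPBLambdaEquation} factors as $e_{i,j}(C_{ij} - \lambda) = 0$.

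For the $\lambda$ condition there are two cases. If $e_{i,j}=1$, the equation reduces to $C_{ij}=\lambda$, which is exactly the requirement that adjacent vertices share $\lambda$ neighbours. If $e_{i,j}=0$, every term vanishes, so the equation holds trivially and places no constraint on non-adjacent pairs. Hence the equation is satisfied if and only if the $\lambda$ condition holds whenever $\{i,j\}$ is an edge.

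For the $\mu$ condition, Equation \ref{nonLinearPBMuEquation} rearranges to
\begin{equation*}
(1 - e_{i,j})\,C_{ij} + \mu e_{i,j} = \mu ,
\end{equation*}
which is equivalent to $(1-e_{i,j})(C_{ij}-\mu)=0$. Again there are two cases. If $e_{i,j}=1$, the left side is $\mu$, so the equation holds for any $C_{ij}$ and adjacent pairs are unconstrained. If $e_{i,j}=0$, the equation reduces to $C_{ij}=\mu$, which is exactly the $\mu$ condition for non-adjacent pairs.

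Imposing both equations for every pair $i<j$ therefore encodes the $\lambda$ and $\mu$ conditions of Definition \ref{srgDefinition}. Combined with Theorem \ref{remove14regular} in the Conway-99 case, or with the regularity constraint of Example \ref{linearExample} in general, this gives strong regularity.

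The argument is a direct two-case verification, so there is no real mathematical obstacle. The only care needed is in the bookkeeping: excluding the indices $m=i,j$ from the sums, and confirming that the cubic terms $e_{i,j}e_{i,m}e_{m,j}$ are meant to cancel the quadratic terms exactly when the edge is present. That cancellation is what makes the $\mu$ equation vacuous on edges.
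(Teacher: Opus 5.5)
Your proof is correct and takes the same route as the paper: a case split on the value of $e_{i,j}$. In that split, the $\lambda$ equation is vacuous on non-edges and forces $\lambda$ common neighbours on edges, while the $\mu$ equation is vacuous on edges and forces $\mu$ common neighbours on non-edges. Your factorisations $e_{i,j}(C_{ij}-\lambda)=0$ and $(1-e_{i,j})(C_{ij}-\mu)=0$ make the same argument more explicit. So does your exclusion of the indices $m=i,j$, which the paper leaves implicit through the absence of $e_{v,v}$ variables.
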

\begin{proof}
    The multiplication of $e_{i,k}$ and $e_{k,j}$ represents the existence of Cherry formed by vertex $k$ among vertices $\{i,j\}$. 
    
    In case the edge $e_{i,j}$ exists, the equation $\Sigma_k e_{i,k}e_{j,k} = \lambda$ must hold, that is, we need $\lambda$ Cherries. Otherwise, the equation $\Sigma_k e_{i,k}e_{j,k} = \mu$ needs to hold.

    Equation \ref{nonLinearPBLambdaEquation} validates to $0$ when $e_{i,j}$ is \syntactic{False}. However, when $e_{i,j}$ is \syntactic{True}, it ensures the $\Sigma_k e_{i,k}e_{j,k} = \lambda$ equality.

    Equation \ref{nonLinearPBMuEquation} holds whenever $e_{i,j}$ is \texttt{True}. Otherwise, the second part of the equation evaluates to $0$, and $\mu$ angles must exist between $i$ and $j$.
\end{proof}
\subsubsection{Linear encoding}
For a linear pseudo-boolean solver, we define two new sets of variables:
\begin{enumerate}
    \item Angles $a_{i,j,k}$ evaluates to \texttt{True}, when $e_{i,j}$ is \texttt{False}, but the Cherry $e_{i,k}e_{k,j}$ exists.
    \item Triangles $t_{i,j,k}$, which hold \texttt{True} when both the Cherry and the edge exist.
\end{enumerate}

\begin{lemma}
    The following inequalities define the angles we require:
    \begin{align*}
        &-1\ a_{i,j,k} -1\ e_{i,j} \geq -1, \\
        &-1\ a_{i,j,k} +1\ e_{i,k} \geq 0,\\
        &-1\ a_{i,j,k} +1\ e_{k,j} \geq 0, \\
        &+1\ a_{i,j,k} +1\ e_{i,j} -1\ e_{i,k} -1\ e_{k,j} \geq -1,
    \end{align*}
    and for triangles:
    \begin{align*}
        &-1\ t_{i,j,k} +1\ e_{i,j} \geq 0, \\
        &-1\ t_{i,j,k} +1\ e_{i,k} \geq 0, \\
        &-1\ t_{i,j,k} +1\ e_{k,j} \geq 0, \\
        &+1\ t_{i,j,k} -1\ e_{i,j} -1\ e_{i,k} -1\ e_{k,j} \geq -2.
    \end{align*}
\end{lemma}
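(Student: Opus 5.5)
The plan is to check, over all eight assignments of the boolean variables involved, that each system of four inequalities holds exactly when the new variable has its intended value. For angles, the claim is that the four inequalities hold if and only if
\begin{equation*}
a_{i,j,k} \leftrightarrow (\neg e_{i,j} \land e_{i,k} \land e_{k,j}).
\end{equation*}
For triangles, the claim is that they hold if and only if
\begin{equation*}
t_{i,j,k} \leftrightarrow (e_{i,j} \land e_{i,k} \land e_{k,j}).
\end{equation*}
The cleanest route is to read each linear inequality over $\{0,1\}$ as a CNF clause, reusing the Cherry encoding from Section \ref{cnfSection}. A constraint $\sum \pm x \geq d$ whose right-hand side equals $1$ minus the number of negative coefficients is the clause requiring at least one literal to be true, with each negatively weighted variable read as its negation.

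For angles, I would rewrite each inequality in clause form. The first, $-a_{i,j,k} - e_{i,j} \geq -1$, is the clause $(\neg a_{i,j,k} \lor \neg e_{i,j})$, which forces the edge to be absent whenever $a_{i,j,k}$ holds. The second and third are $(\neg a_{i,j,k} \lor e_{i,k})$ and $(\neg a_{i,j,k} \lor e_{k,j})$, which force both cherry edges to be present. Together these three clauses give the forward implication from $a_{i,j,k}$ to the conjunction. The fourth, substituting $e' = 1 - e$ for each negatively weighted variable, becomes $a_{i,j,k} + e_{i,j} + (1-e_{i,k}) + (1-e_{k,j}) \geq 1$. This is the clause $(a_{i,j,k} \lor e_{i,j} \lor \neg e_{i,k} \lor \neg e_{k,j})$, which gives the reverse implication. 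So the four inequalities are equivalent to the biconditional.

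Triangles are handled the same way. The first three inequalities say that $t_{i,j,k}$ implies each of the three edges. The last one, after the same substitution, is $t_{i,j,k} + (1-e_{i,j}) + (1-e_{i,k}) + (1-e_{k,j}) \geq 1$, that is, the clause $(t_{i,j,k} \lor \neg e_{i,j} \lor \neg e_{i,k} \lor \neg e_{k,j})$, which forces $t_{i,j,k}$ when all three edges exist.

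The only step needing care is the translation of the two four-variable inequalities. I would verify it directly: the left-hand side attains its minimum, $-1$ for angles or $-2$ for triangles, minus one more, exactly at the single falsifying assignment. That is the assignment with the new variable $0$ and the conjunction true, so the inequality excludes precisely that assignment and nothing else.

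I would close by noting a consequence for use in the PB encoding. With these variables, $\sum_k t_{i,j,k} = \lambda\, e_{i,j}$ and $\sum_k a_{i,j,k} = \mu(1-e_{i,j})$ become linear replacements for Equations \ref{nonLinearPBLambdaEquation} and \ref{nonLinearPBMuEquation}.
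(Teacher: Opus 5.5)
Your proposal is correct and takes the same approach as the paper: in each group, the first three inequalities force $a_{i,j,k}$ (resp.\ $t_{i,j,k}$) to be false unless the required edge pattern holds, and the fourth forces it to be true when the pattern does hold. Reading each inequality as a clause, and checking that each four-variable inequality excludes exactly its unique minimising assignment, makes the paper's brief argument fully explicit.
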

\begin{proof}
    The first three angles and triangles equations ensure that $a_{i,j,k}$ and $t_{i,j,k}$ can only be \syntactic{True} when the corresponding edges satisfy the needs. The fourth line ensures that the angle and the triangle must be \syntactic{True} when all three edges meet the needs. 
\end{proof}

The rest of the encoding is similar to Lemma \ref{nonLinearConditionsLemma}. For each pair of vertices $ij$, we need the equation
\begin{equation*}
    + \mu e_{ij} +1\ a_{i,j,1} +1\ a_{i,j,2} \ldots +1\ a_{i,j,n} = \mu,
\end{equation*}
when the pair is not adjacent, and the equation
\begin{equation*}
    - \lambda e_{ij} +1\ t_{i,j,1} +1\ t_{i,j,2} \ldots +1\ t_{i,j,n} = 0,
\end{equation*}
when the pair is adjacent.

\subsection{Triangular view}
In Section \ref{triangular_view}, we showed that the existence of a $(99,14,1,2)$ strongly regular graph depends on a graph with parameters $(231,18,5,\mu \leq 3)$. To find this graph, we can encode it using our previous approaches; however, for the $\mu \leq 3$ condition, Equation \ref{nonLinearPBMuEquation} must turn into inequality:
\begin{multline}
    +1\ e_{i,1}e_{1,j} +1\ e_{i,2}e_{2,j} + \ldots +1\ e_{i,n}e_{n,j}\\
    -1\ e_{i,j}e_{i,1} e_{1,j} -1\ e_{i,j}e_{i,2} e_{2,j} - \ldots -1\ e_{i,j}e_{i,n} e_{n,j} +\mu e_{i,j} \leq \mu.
\end{multline}

\subsection{Paley(9) subgraph}
In Theorem \ref{noElevenPaley9}, we showed there cannot be 11 separate Paley(9) subgraphs. To encode this, we simply divided vertices into 11 sets of 9 vertices and set the edge boolean variables to \syntactic{True} for edges and \syntactic{False} for non-edges. However, the SAT solver was unable to infer the unsatisfiability (after a 12-hour running time). This gives us a good example of SAT solvers' weaknesses when it comes down to unsatisfiable formulas, as they will be doing an exhaustive search with minor optimisations. We will go through our experiments more deeply in Section \ref{experiments}.

\begin{corollary}
Since we know the Paley(9) pattern cannot hold, we can adjust our initial configuration and assign edge values to ensure that there is at least one vertex that does not follow the Paley(9) pattern.
\end{corollary}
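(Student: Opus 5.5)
The corollary is essentially a contrapositive restatement of Theorem \ref{p9patternImpossible}, turned into a symmetry-breaking device. I would argue in three steps.

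\textbf{Step 1: negate the pattern.} By Theorem \ref{p9patternImpossible}, any $(99,14,1,2)$ strongly regular graph $G$ fails the Paley(9) pattern. Unfolding the definition, there exist a vertex $v$ and two edges $\{v_1,v_2\},\{v_3,v_4\}$ in the perfect matching of $N(v)$ such that the induced subgraph on
\[
P_v=\{v,v_1,v_2,v_3,v_4,(v_1,v_3)_v,(v_1,v_4)_v,(v_2,v_3)_v,(v_2,v_4)_v\}
\]
is not Paley(9).

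Since $\lambda=1$ and $\mu=2$ already force every edge of $P_v$ except those among the four labelled non-neighbours, the failure must occur there. The forced edges are the two triangles at $v$ and the edges joining each $(v_a,v_b)_v$ to $v_a$ and $v_b$. Among the four non-neighbours, Paley(9) would require exactly the 4-cycle
\[
(v_1,v_3)_v-(v_1,v_4)_v-(v_2,v_4)_v-(v_2,v_3)_v .
\]
So at least one of these four edges is absent. (The two diagonals $(v_1,v_3)_v(v_2,v_4)_v$ and $(v_1,v_4)_v(v_2,v_3)_v$ cannot both be present alongside the 4-cycle edges without violating $\lambda=1$.)

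\textbf{Step 2: relabel without loss of generality.} The set of graphs is closed under vertex permutations. So we may relabel any hypothetical solution so that the witness vertex is vertex $0$ and the witness matching edges are $\{1,2\}$ and $\{3,4\}$, using the one-vertex labelling of Figure \ref{oneVertex}. This is consistent with the stabilisation already used in the encoding, because the labelling of $N(0)$ by the pairs $\{2i-1,2i\}$ and of non-neighbours by unordered pairs is itself only fixed up to such permutations.

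\textbf{Step 3: add the constraint.} We then add to the SAT/PB instance the single clause
\[
\neg e_{\{1,3\},\{1,4\}}\lor\neg e_{\{1,4\},\{2,4\}}\lor\neg e_{\{2,4\},\{2,3\}}\lor\neg e_{\{2,3\},\{1,3\}} .
\]
Alternatively, we can branch over the finitely many admissible edge assignments on these four vertices and fix each one in turn. Either way, satisfiability is preserved: every solution, after relabelling, satisfies the clause.

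The main obstacle is Step 2. One must check that the initial fixing of $N[0]$ and of the pair-labels on $V-N[0]$ leaves enough freedom to move an arbitrary witness $(v;\{v_1,v_2\},\{v_3,v_4\})$ onto $(0;\{1,2\},\{3,4\})$. I would handle this by fixing vertex $0$ only after choosing the witness. The pair-labelling construction of Figure \ref{oneVertex} works for any choice of root vertex and any ordering of its matching edges, so this ordering is free.
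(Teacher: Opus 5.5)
Your argument is correct and is essentially the reasoning the paper leaves implicit (it gives no separate proof): Theorem~\ref{p9patternImpossible} supplies a witness $(v;\{v_1,v_2\},\{v_3,v_4\})$, the one-vertex normalisation of Figure~\ref{oneVertex} can be rooted at that witness, and the only unforced edges of $P_0$ are those among $\{1,3\},\{1,4\},\{2,3\},\{2,4\}$. Your Step~1 parenthetical should say that \emph{neither} diagonal can coexist with the $4$-cycle: a single diagonal $(v_1,v_3)_v(v_2,v_4)_v$ already gives the edge $(v_1,v_3)_v(v_1,v_4)_v$ the two common neighbours $v_1$ and $(v_2,v_4)_v$. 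Also, the residual relabelling symmetry ($v_1\leftrightarrow v_2$, $v_3\leftrightarrow v_4$, swapping the two matching edges) is transitive on the four cycle edges, so your clause can be replaced by the single preset value $e_{\{1,3\},\{1,4\}}=\syntactic{False}$, which matches the corollary's wording ``assign edge values''.
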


\section{Parameter redundancy} \label{parameterRedundancy}
In Theorem \ref{remove14regular}, we saw that the $14$-regular condition is redundant; hence, the Conway-99 problem based only on $\lambda$ and $\mu$ parameters is equivalent to the main problem. A question that arises is that can we generalise this?

The Petersen graph and its parameters $(10,3,0,1)$ serve us as a counterexample.
\begin{example}
    The graph depicted in \ref{star10Graph} satisfies both $\lambda = 0$ and $\mu = 1$ conditions. However, it is not a regular graph.
\end{example}

\begin{figure}
    \centering
    \includegraphics[width=0.4\textwidth]{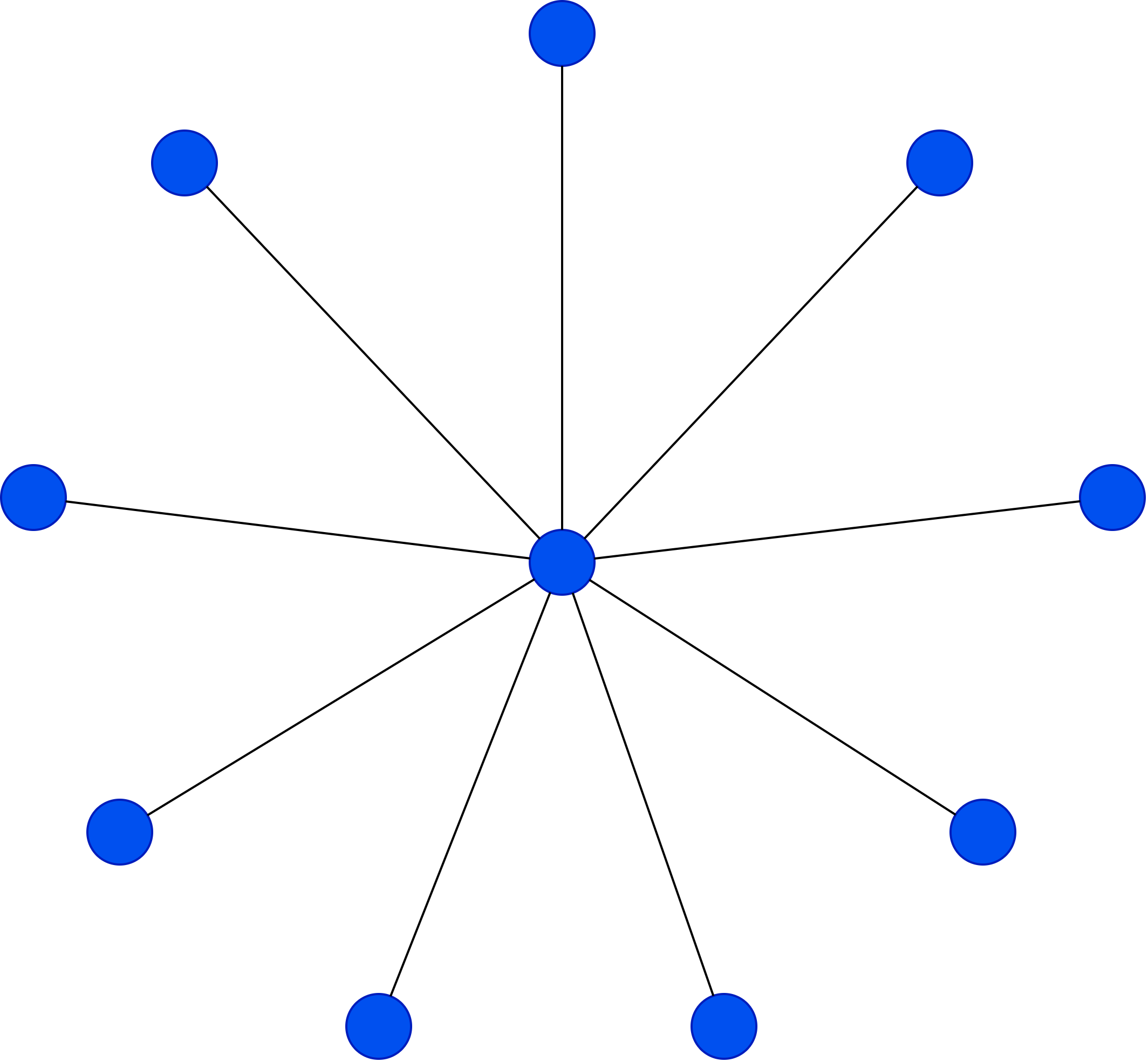}
    \caption{Star graph $K_{1,9}$, a graph with the same $\lambda$ and $\mu$ parameters as the Petersen graph but not regular.}
    \label{star10Graph}
\end{figure}
    
Although this is a counterexample, its nature is similar to trivial strongly regular graphs.
\begin{lemma}
    Let $G = (V,E)$ be a graph with 10 vertices with parameters $\lambda = 1$ and $\mu = 2$, and a vertex $v$ with degree $deg(v) = 3$. Then $G$ must be 3-regular.
\end{lemma}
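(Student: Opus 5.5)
The hypothesis as printed is contradictory. With $\lambda = 1$, the argument in the proof of Theorem \ref{remove14regular} applies: the induced subgraph on $N(v)$ is a perfect matching, so every degree must be even, and $deg(v) = 3$ cannot occur. In that literal reading the lemma holds vacuously, and I would state this in one line. Since the surrounding discussion concerns the Petersen parameters $(10,3,0,1)$ and the star $K_{1,9}$, I believe the intended hypothesis is $\lambda = 0$, $\mu = 1$, and the substantive plan below proves that version.

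\textbf{Step 1: non-adjacent vertices have equal degree.} Let $x,y$ be non-adjacent, and let $c$ be their unique common neighbour. For $z \in N(x)-\{c\}$, the vertex $z$ is not adjacent to $y$, since otherwise it would be a second common neighbour. So $z$ and $y$ have a unique common neighbour $w(z) \in N(y)$.
\begin{enumerate}
\item $w(z) \neq c$: if $w(z) = c$, then $x,z,c$ would form a triangle, contradicting $\lambda = 0$.
\item $z \mapsto w(z)$ is injective: if $w(z) = w(z')$ for $z \neq z'$, then $z$ and $z'$ are non-adjacent (as $\lambda = 0$ makes $N(x)$ independent) but share the two neighbours $x$ and $w(z)$. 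These are distinct because $w(z) \in N(y)$ and $x \notin N(y)$. This contradicts $\mu = 1$.
\end{enumerate}
Hence the map goes from $N(x)-\{c\}$ into $N(y)-\{c\}$ injectively, giving $deg(x) \le deg(y)$. By symmetry $deg(x) = deg(y)$.

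\textbf{Step 2: spread the degree from $v$.} The six vertices of $V - N[v]$ are non-adjacent to $v$, so each has degree $3$ by Step 1. Each such $u$ has exactly one neighbour in $N(v)$, by $\mu = 1$ applied to the pair $\{u,v\}$. It remains to treat each $w \in N(v)$. I want a vertex of known degree that is non-adjacent to $w$.
\begin{enumerate}
\item If some $u \in V - N[v]$ is not adjacent to $w$, Step 1 gives $deg(w) = 3$.
\item Otherwise $w$ is adjacent to all six outside vertices. Then, since each outside vertex has exactly one neighbour in $N(v)$, the other two neighbours $w', w''$ of $v$ have no neighbours outside $N[v]$. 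They are also non-adjacent to $w$ and to each other because $\lambda = 0$. So $deg(w') = 1$.
\end{enumerate}
In the second case, $w'$ is non-adjacent to any outside vertex $u$, which has degree $3$, so Step 1 forces $deg(w') = 3$, a contradiction. Hence only the first case occurs and $G$ is $3$-regular.

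The main obstacle is Step 1, specifically verifying that the map is well defined and injective using only $\lambda = 0$ and $\mu = 1$. Step 2 is just a short case check to make sure every vertex of $N(v)$ is non-adjacent to some vertex of already-known degree.
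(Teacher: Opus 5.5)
Your proof is correct and takes essentially the paper's route. The paper also reads the hypotheses as $\lambda = 0$, $\mu = 1$: its proof opens by saying non-adjacent vertices share one neighbour, so your remark that the literal $\lambda = 1$ version holds vacuously is also right. Its key step is the same matching between $N(x)-N(y)$ and $N(y)-N(x)$ for non-adjacent $x,y$, which you make rigorous as an injection. The only difference is the propagation. The paper handles $N(v)$ by noting its three pairwise non-adjacent vertices have equal degree and must split the $6$ edges from $V-N[v]$ evenly, while you first settle $V-N[v]$ and then use a short case split; both are valid.
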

\begin{proof}
    Every two non-adjacent vertices $x$ and $y$ share one neighbour. Since each vertex $z \in N(x) - N(y)$ must share exactly $\mu = 1$ neighbour with $y$, there must be a matching among their neighbours.

    There must be 6 edges among the two sets $V - N[v]$ and $N(v)$. For every pair of $v$'s neighbours, as they are non-adjacent, there must be a matching between their neighbours. Ergo, each of them is connected to two vertices in $V - N[v]$, and all of them have a degree of 3. By continuously applying this, we can infer the graph is 3-regular.
\end{proof}

Although the Petersen graph served as a counterexample, the statement holds for $\mu > 1$:
\begin{theorem} \label{removeKregular}
    Let $G = (V, E)$ denote a graph with $n$ vertices. If the graph has $\lambda$ and $\mu$ parameters, and $\mu > 1$, then the graph is regular and, consequently, strongly regular.
\end{theorem}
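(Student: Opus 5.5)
The plan is a double-counting argument showing that non-adjacent vertices have equal degrees, followed by a connectivity argument in the complement $\overline{G}$ that uses $\mu>1$ to rule out the exceptional configurations. Regularity then gives strong regularity directly from Definition \ref{srgDefinition}.

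\textbf{Step 1 (non-adjacent vertices have equal degree).} Fix non-adjacent $x,y$ and count ordered pairs $(z,w)$ with $z\in N(x)$, $w\in N(y)$ and $zw\in E$. For fixed $z\in N(x)$ we look at $|N(z)\cap N(y)|$. Since $y\notin N(x)$, we have $z\neq y$, so there are two cases:
\begin{itemize}
\item there are exactly $\mu$ vertices $z\in N(x)\cap N(y)$, each adjacent to $y$, and each contributes $\lambda$;
\item the remaining $d_x-\mu$ vertices $z$ are non-adjacent to $y$, and each contributes $\mu$.
\end{itemize}
The count is therefore $\mu\lambda+(d_x-\mu)\mu$. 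Counting from the side of $y$ gives $\mu\lambda+(d_y-\mu)\mu$. Since $\mu>0$, we get $d_x=d_y$. Consequently the degree is constant on each connected component of $\overline{G}$, and if $\overline{G}$ is connected we are done.

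\textbf{Step 2 (no dominating vertex unless $G$ is complete).} Suppose $x$ is adjacent to every other vertex. For any $u\neq x$, the common neighbours of $x$ and $u$ are exactly $N(u)-\{x\}$, so $\deg(u)=\lambda+1$. Hence $G-x$ is $\lambda$-regular. In $G-x$, adjacent pairs share $\lambda-1$ neighbours and non-adjacent pairs share $\mu-1$. A $\lambda$-regular graph in which every edge lies in $\lambda-1$ common neighbourhoods has each closed neighbourhood a clique, so it is a disjoint union of copies of $K_{\lambda+1}$. Then non-adjacent pairs in $G-x$ share $0$ neighbours, forcing $\mu=1$, a contradiction. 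The only alternative is that there are no non-adjacent pairs at all, so $G$ is complete and hence regular.

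\textbf{Step 3 (the disconnected-complement case, the main obstacle).} If $\overline{G}$ is disconnected, then $G$ is the join of two nonempty vertex sets $A$ and $B$, with every $A$–$B$ pair adjacent, where $A$ is one component of $\overline{G}$. By Step 2 we may assume $|A|,|B|\ge 2$. Since $\overline{G}[A]$ is connected, $A$ contains a non-adjacent pair, whose common neighbours include all of $B$; hence $|B|\le\mu$. Symmetrically, $|A|\le\mu$ when $\overline{G}[B]$ has an edge. I would then use the $\lambda$-condition on an edge $ab$ with $a\in A$, $b\in B$, whose common neighbours are $N_A(a)\cup N_B(b)$, so that $\deg_A(a)+\deg_B(b)=\lambda$ for all such pairs.
\begin{itemize}
\item Since every $a\in A$ is adjacent to every $b\in B$, this forces $\deg_A$ to be constant on $A$ and $\deg_B$ constant on $B$.
\item Combining this with the $\mu$-condition inside each part, I would aim to show $|A|=|B|$ or an outright contradiction.
\end{itemize}
This is where I expect most of the work to be. The fallback is to note that such joins are exactly the ``trivial'' strongly regular graphs excluded by the paper's convention, and to check directly that they are regular anyway.
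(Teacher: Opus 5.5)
Your Steps 1 and 2 are correct, but they do not finish the proof. Step 1 propagates equality of degrees only along non-edges, i.e.\ along edges of $\overline{G}$, and connectivity of $\overline{G}$ is not a hypothesis; the paper's standing convention is only that $G$ is connected. So Step 3 has to carry the whole remaining case, and you have not carried it out.

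The target you set there is also wrong. The octahedron $K_{2,2,2}$ is connected, has $\lambda=2$ and $\mu=4>1$, and $\overline{G}=3K_2$. Taking $A$ to be one antipodal pair gives $|A|=2\neq 4=|B|$ with no contradiction, yet the graph is regular. What actually has to be shown is $\deg_A(a)+|B|=\deg_B(b)+|A|$. Your fallback, ``check directly that they are regular anyway'', just restates the claim. Dismissing joins as ``trivial'' does not help either, since the hypothesis is only that $G$ has $\lambda$ and $\mu$ parameters.

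The missing idea is the companion count for \emph{adjacent} vertices. This is exactly the paper's proof, and it is where $\mu>1$ is genuinely used. For an edge $uv$, let $C=N(u)\cap N(v)$, $X=N(v)\setminus N[u]$ and $Y=N(u)\setminus N[v]$.
Each $x\in X$ is non-adjacent to $u$, and its common neighbours with $u$ are $v$ plus $\mu-1$ vertices of $C\cup Y$.
Each $c\in C$ has $\lambda-1$ neighbours in $C\cup X$ (from the pair $c,v$) and $\lambda-1$ in $C\cup Y$ (from the pair $c,u$). Hence $e(C,X)=e(C,Y)$.
It follows that $(\mu-1)(d_v-\lambda-1)=e(X,C)+e(X,Y)$ and $(\mu-1)(d_u-\lambda-1)=e(Y,C)+e(X,Y)$, so $(\mu-1)(d_u-d_v)=0$.

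The paper finishes with connectivity of $G$. Combined with your Step 1, however, this gives equal degrees for \emph{every} pair of distinct vertices, adjacent or not. You then get regularity with no connectivity assumption at all, and Steps 2--3 become unnecessary.

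If you prefer to keep Step 3, it can be closed as follows.
Let $k_A,k_B$ be the internal degrees, so $k_A+k_B=\lambda$, and set $p=|A|-k_A-1$ and $q=|B|-k_B-1$. Both are $\ge 1$ by Step 2.
The within-part double count gives $p(\mu-|B|)=k_Aq$ and $q(\mu-|A|)=k_Bp$.
Eliminating $\mu$, using $|A|=p+k_A+1$ and $|B|=q+k_B+1$, yields $(p-q)(1+k_A/p+k_B/q)=0$.
So $p=q$, which is exactly $\deg(a)=\deg(b)$.
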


\begin{proof}
Let $u$ and $v$ be two adjacent vertices. They share $\lambda$ neighbours. Let $e_v$ be the number of edges between $N(v) \cup N(u)$ and $N(v) - N(u)$, and let $e_u$ be the number of edges between $N(v) \cup N(u)$ and $N(u) - N(v)$. Also, $e_\lambda$ is the number of edges within the set $N(v) \cup N(u)$ and $e_\mu$ is the number of edges between $N(u) - N(v)$ and $N(v) - N(u)$. 

\begin{figure}
    \centering
    \includegraphics[width=0.7\textwidth]{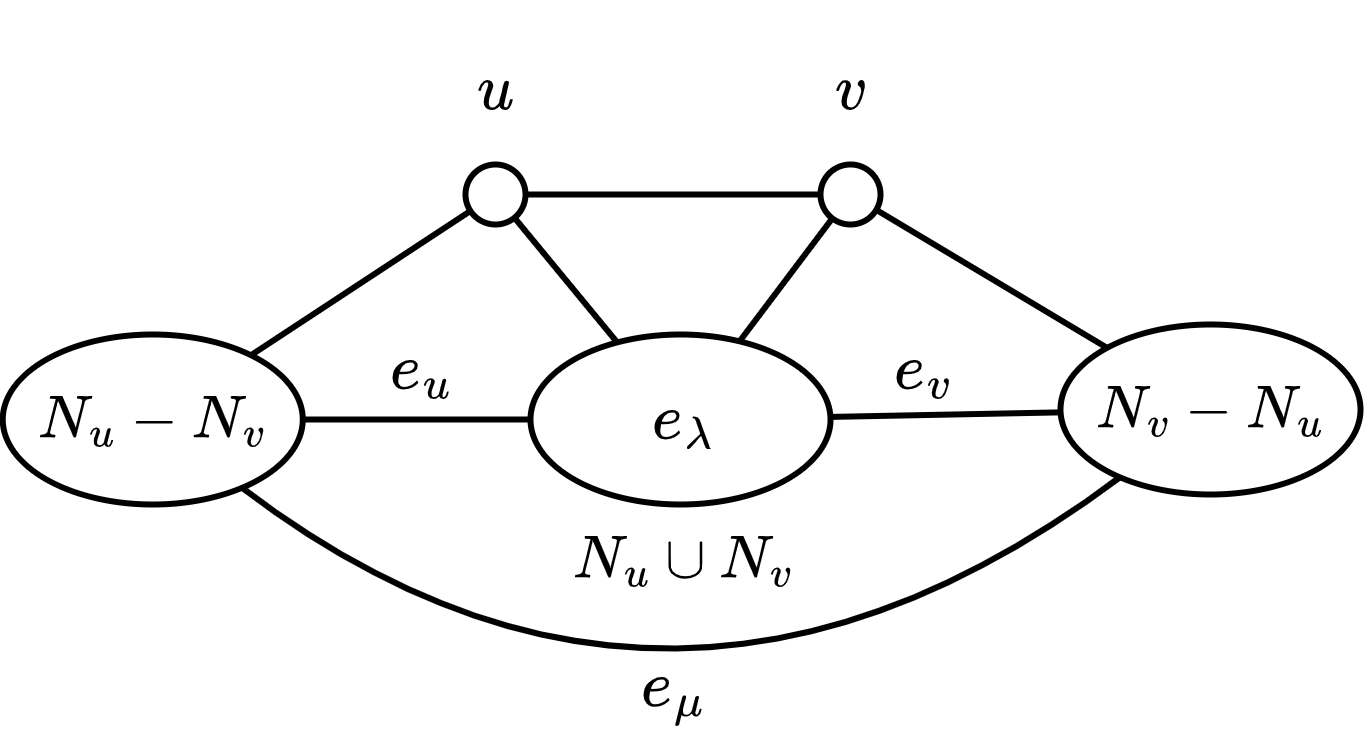}
    \caption{An adjacent pair of vertices $\{u,v\}$ in a graph with $\lambda$ and $\mu$ parameters.}
    \label{theorem_lambda_mu_parameters}
\end{figure}

Due to the $\lambda$ parameter, the induced subgraphs $N(u)$ and $N(v)$ form a $\lambda$-regular graph, and thus,
$$
e_v + 2e_{\lambda} = e_u + 2e_{\lambda} = \lambda(\lambda - 1).
$$ We can infer $e_u = e_v$. Also, because of the $\mu$ condition
\begin{equation*}
(\mu - 1)(deg(v) - \lambda - 1) = e_v + e_\mu \text{, and }(\mu - 1)(deg(u) - \lambda - 1) = e_u + e_\mu,
\end{equation*}
must hold. So,
\begin{equation*}
    (\mu - 1)(deg(u) - deg(v)) = 0;
\end{equation*}
hence $\mu > 1$, the degrees of adjacent vertices $u$ and $v$ are the same. Since the graph is connected, by continuing this, we infer all vertices must have the same degree, and this means the graph is regular.
\end{proof}

One question that arises is whether other parameters, $\lambda$ and $\mu$, are also redundant. That is, if we suppose graph $G$ is regular and every two neighbours share $\lambda$ neighbours, then all non-neighbours share the same number of neighbours $\mu$, and vice-versa.

The triangular view of the Berlekamp–Van Lint–Seidel graph is a counter-example to this. This graph has parameters $(891,30,9, \leq 3)$ and is not strongly regular. In general, for a 30-regular 891 vertex graph with $\lambda = 9$, we know no such strongly regular graph exists as no integer $\mu$ can satisfy the equation in Theorem \ref{basicDoubleCounting}.

\chapter{Studying the experiments} \label{experiments}
In this chapter, we will review the experiments we had, based on how we encoded the problems in Chapter \ref{SATSearch}. The full set of codes and experiments can be found on \href{https://github.com/AliKeramatipour/thesis}{GitHub}. The main SAT solver used was a pseudo-boolean solver named \syntactic{clasp} \cite{CLASP}.

Multiple ~12-hour tests were run on parameter sets of strongly regular graphs with less than 65 vertices. Meanwhile, multiple dedicated SAT solver processes also searched for a $(99,14,1,2)$ strongly regular graph. Considering the granted resources, a longer run was not possible.

To make the final comparison, we used pseudo-boolean encodings instead of CNF encodings, as the latter are too large and impractical. Also, non-linear instances were chosen over linear instances. SAT solvers linearise their non-linear inputs, and since the SAT solvers linearise inputs based on their needs, they will give us better and more efficient results. For example, the strongly regular graph with parameters $(26,10,3,4)$ was found $\approx 2.5$ times faster with a non-linear method.

During the execution of these tasks, in some instances, the SAT solver generated results for larger graphs but led to no results for smaller ones. We theoretically analyse these results.

\section{Automorphism}
The order of automorphism group $|Aut(G)|$ serves us as a measurement of a graph's symmetry. As we studied spectrum and automorphisms in Chapter \ref{preliminaries}, all strongly regular graphs with the same set of parameters are cospectral. However, non-isomorphic instances of these graphs exist, which implies not all cospectral graphs are isomorphic. These non-isomorphic instances lead to multiple automorphism groups for strongly regular graphs with the same parameter set.

Although in the realm of strongly regular graphs, spectrums do not provide us with useful information about isomorphisms and automorphism groups, they are a useful tool to determine non-isomorphic graphs in general.

We now study a few parameter sets for strongly regular graphs that were successfully found using SAT solvers. Our initial analysis focuses on smaller graphs \ref{smallerGraphs}, followed by an examination of bigger graphs \ref{largerGraphs}. At last, we study those that were not attainable using our methods \ref{notFound}.

\subsection{Small graphs} \label{smallerGraphs}
\subsubsection{Schläfli graph}
The Schläfli graph, named after Ludwig Schläfli, is a strongly regular graph with parameters $(27, 16, 10, 8)$. The building method of this graph is geometrical, but we won't delve into it here. Based on its construction method, it can be guessed that this graph has a considerable automorphism group order, specifically 51840. This is notably large when compared to other strongly regular graphs with a close number of vertices. For example, Paley(25) has an automorphism group of order 600.

\subsubsection{Paulus graphs}
There are two sets of non-isomorphic strongly regular graphs known as Paulus graphs. One set has parameters of $(25,12,5,6)$ and consists of 15 graphs, one of which is isomorphic to Paley(25), while the other set has parameters of $(26,10,3,4)$ and contains 10 graphs.

For parameters $(26,10,3,4)$, the largest automorphism group is of order 120.

\subsubsection{Parameters (28,9,0,4)}
Krein bounds, named after Mark Krein, consist of two inequalities that strongly regular graphs must meet in relation to their eigenvalues. They stem from Krein parameters.

\begin{theorem}[Krein bounds]\label{KreinTheorem}
    A strongly regular graph with eigenvalues $r$, $s$, and $k$, where $k$ is its regularity, must satisfy the following bounds:
    \begin{align*}
    (r + 1)(k + r + 2rs) &\leq (k + r)(s + 1)^2,\\
    (s + 1)(k + s + 2rs) &\leq (k + s)(r + 1)^2.
    \end{align*}
\end{theorem}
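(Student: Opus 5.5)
The plan is to derive both inequalities from the positivity of the Krein parameters. These come from the fact that the Schur (entrywise) product of two positive semidefinite matrices is positive semidefinite. We work in the three-dimensional algebra spanned by $I$, $A$, $J$. By Theorem \ref{srgAlgTh} this algebra is closed under ordinary multiplication, and it is trivially closed under the entrywise product $\circ$, since $A\circ A=A$, $A\circ I=0$, $I\circ I=I$ and $J\circ X=X$.

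\textbf{Step 1: the idempotents.} From Theorem \ref{threeEigenvalues} and Lemma \ref{orthogonalEigenvectors}, $A$ has an orthogonal spectral decomposition $A=kE_0+rE_1+sE_2$ with $I=E_0+E_1+E_2$ and $E_0=\frac1nJ$. Here $E_1,E_2$ are the orthogonal projections onto the $r$- and $s$-eigenspaces. Solving these linear relations gives
\begin{equation*}
E_1=\frac{1}{r-s}\Big(A-sI-\frac{k-s}{n}J\Big),\qquad E_2=\frac{1}{s-r}\Big(A-rI-\frac{k-r}{n}J\Big).
\end{equation*}
Both are positive semidefinite, being orthogonal projections.

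\textbf{Step 2: Schur products.} Write $c=(k-s)/n$. Using the $\circ$-rules above, we get
\begin{equation*}
(r-s)^2\,E_1\circ E_1=(1-2c)A+(s^2+2sc)I+c^2J.
\end{equation*}
By the Schur product theorem, $E_1\circ E_1$ is positive semidefinite. Take a unit eigenvector $x$ in the $r$-eigenspace. Then $x$ is orthogonal to $\mathbf 1$, so $Jx=0$, and $x^T(E_1\circ E_1)x\ge 0$ reads
\begin{equation*}
(1-2c)r+s^2+2sc\ \ge\ 0 .
\end{equation*}
This is the Krein condition $q_{11}^1\ge 0$. Applying the same computation to $E_2\circ E_2$ on the $s$-eigenspace, with $r$ and $s$ interchanged, gives $(1-2c')s+r^2+2rc'\ge0$ where $c'=(k-r)/n$.

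\textbf{Step 3: rewrite in $k,r,s$.} This is pure algebra, and I expect it to be the main obstacle: it is bookkeeping, with signs to watch, rather than a conceptual difficulty. First eliminate $n$ and $\mu$. From the quadratic in the proof of Theorem \ref{threeEigenvalues}, $r+s=\lambda-\mu$ and $rs=\mu-k$, so $\mu=k+rs$. Theorem \ref{basicDoubleCounting} then gives $n\mu=(k-r)(k-s)$, hence
\begin{equation*}
c=\frac{k+rs}{k-r},\qquad c'=\frac{k+rs}{k-s}.
\end{equation*}
Substitute $c$ into the first inequality and multiply through by $k-r>0$, which is positive by Theorem \ref{kRegularEigenvalueMultiplicity}. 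Expanding and collecting terms should give
\begin{equation*}
(k+r)(s+1)^2-(r+1)(k+r+2rs)\ \ge\ 0,
\end{equation*}
which is the first bound. The second bound follows symmetrically from the $E_2$ inequality.

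For the expansion I would first check the target identity on the Petersen graph, $(k,r,s)=(3,1,-2)$, to catch sign errors before doing the general computation. If a stray factor such as $(r-s)$ appears in the general computation, it is positive because $r>s$ (Theorem \ref{threeEigenvalues}), so dividing by it does not change the direction of the inequality.
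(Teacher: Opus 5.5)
Your argument is correct. The paper gives no proof of Theorem~\ref{KreinTheorem}: it only remarks that the bounds ``stem from Krein parameters'' and then uses them to exclude $(28,9,0,4)$. So the only comparison available is with that remark, and your proof is the standard derivation it alludes to. You obtain $q_{11}^1\ge 0$ and $q_{22}^2\ge 0$ by applying the Schur product theorem to the idempotents $E_1,E_2$ and evaluating the quadratic form on an eigenvector.

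Your Step~3 works out exactly. With $c(k-r)=\mu=k+rs$ one has the identity
\begin{equation*}
(k-r)\big[(1-2c)r+s^2+2sc\big]=(k+r)(s+1)^2-(r+1)(k+r+2rs).
\end{equation*}
Both sides expand to $ks^2+2ks+rs^2-kr-r^2-2r^2s$, so no stray factor such as $(r-s)$ appears. The second bound is the same identity with $r\leftrightarrow s$ and multiplier $k-s$. As a sanity check, for $(28,9,0,4)$ one has $(k,r,s)=(9,1,-5)$, and the second bound reads $24\le 16$, which fails, consistent with the paper.

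There are some small points to tidy up.
\begin{itemize}
\item The decomposition $I=E_0+E_1+E_2$ needs the spectral theorem for real symmetric matrices (diagonalisability). Lemma~\ref{orthogonalEigenvectors} alone only gives orthogonality between distinct eigenspaces.
\item Picking a unit $r$-eigenvector requires $m_r>0$, and likewise $m_s>0$ for the second bound. The paper's proof of Theorem~\ref{threeEigenvalues} only shows the eigenvalues lie in $\{k,r,s\}$. Positivity holds because a connected graph with only two distinct eigenvalues satisfies $A=\theta I+\frac{k-\theta}{n}J$, so it is complete.
\item Closure of $\mathrm{span}\{I,A,J\}$ under ordinary multiplication is never used, since you evaluate $x^T(E_1\circ E_1)x$ directly.
\item The Schur product theorem comes from outside the paper. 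If you want the proof self-contained, note that $E_1\circ E_1$ is a principal submatrix of the positive semidefinite matrix $E_1\otimes E_1$.
\end{itemize}
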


Parameters $(28,9,0,4)$ cannot form a strongly regular graph. Applying the same approach as we used in \ref{refRemoveParameter} gives us a spectrum of $\{9^1, 1^{21}, (-5)^6\}$. However, using the Krein bounds, it can be deduced that these values are not possible for a strongly regular graph.

\subsubsection{Chang graphs}
In \cite{chang1960association}, Chang showed four non-isomorphic strongly regular graphs with the parameter set $(28,12,6,4)$ exist. One of these graphs, named the Triangular graph $T(8)$, has an automorphism group of order 40320. We study Triangular graphs further in \ref{triangularGraphs}.

In the comparison chart shown in Table \ref{around27}, we have analysed the previous graphs and the time taken by the SAT solver to arrive at a solution. The parameter set $(28,9,0,4)$ failed to yield any result even after approximately 12 hours. This example highlights a scenario where we are aware that the parameters are not feasible, using a close Krein bound, and the SAT solver cannot reach a final state. We can also see that the Schläfli graph parameter set received an answer in less than two seconds while finding a Paulus graph with fewer vertices takes longer. The large automorphism group of the $T(8)$ graph is another example where our SAT solver performed fast.

\begin{table}[htbp]
\centering
\renewcommand{\arraystretch}{1.2}
\begin{tabular}{r|rrrr|crcr}
\hline
 & $n$ & $k$ & $\lambda$ & $\mu$ & time(s) & result\\
\hline
1  & 27 &  16 & 10 & 8 & 1.982 & $\exists$\\
2  & 26 &  10 & 3 & 4 & 10.584 & $\exists$\\
3  & 28 & 9 & 0 & 4 & - & No result\\
4  & 28 & 12 & 6 & 4 & 0.48 & $\exists$\\
\hline
\end{tabular}
\caption{Parameter sets with around 27 vertices. Based on the Krein bounds, the set $(28,9,0,4)$ is unsatisfiable, where SAT solvers cannot yield any results.}
\label{around27}
\end{table}

\subsection{Large graphs} \label{largerGraphs}
To further investigate SAT solver's capabilities, we consider its performance on larger graphs. Table \ref{testResultsLarge} contains larger graphs where our SAT solver was able to find a solution.
\begin{table}[htbp]
\centering
\renewcommand{\arraystretch}{1.2}
\begin{tabular}{r|rrrr|rl}
\hline
 & $n$ & $k$ & $\lambda$ & $\mu$ & time(s) & comment\\
\hline
1 & 36 & 10 & 4 & 2 & 10.271 & Rook\\
2 & 36 & 14 & 7 & 4 & 577.235 & Triangular\\
3 & 45 & 16 & 8 & 4 & 1749.487 & Triangular\\
4 & 49 & 12 & 5 & 2 & 2.994 & Rook\\
5  & 50 & 7 & 0 & 1 & 2.773 & Hoffman-Singleton\\
6  & 56 & 10 & 0 & 2 & 13473.1 & Sims-Gewirtz\\
7  & 64 & 14 & 6 & 2 & 19.04 & Rook\\
\hline
\end{tabular}
\caption{Larger strongly regular graphs found using SAT solver}
\label{testResultsLarge}
\end{table}

\subsubsection{Rook's graph}
Rook graphs are the Cartesian product of two complete graphs $K_m \square K_n$. In these graphs, edges represent the legal moves that can be made by a rook chess piece, that is, moving vertically or horizontally.

When $m = n$ and the order of both complete graphs is the same, the graph becomes strongly regular, with a parameter set of $(n^2, 2n - 2, n - 2, 2)$. Each vertex in this graph is connected to $n - 1$ vertices on two complete subgraphs on its coordinate, which means it is $2n - 2$-regular. 

We name the vertices based on their coordinates, i.e. $(i,j)$ is where the $i$-th horizontal and $j$-th vertical graphs meet. If two vertices share a coordinate, they are adjacent and share $n - 2$ neighbours. For two non-adjacent vertices $(i_1,j_1)$ and $(i_2,j_2)$, they share two neighbours, which are $(i_1,j_2)$ and $(i_2,j_1)$. This implies the graph's strong regularity with parameters $(n^2, 2n - 2, n - 2, 2)$.

As we can see, these graphs are highly symmetrical. The order of $Aut(G)$ for a Rook graph $G$ is $2(n!)^2$. Entries 1,4,and 7 of Table \ref{testResultsLarge} are Rook graphs. We can see that even for a very large order of 64, the SAT solver was able to obtain a result. Again, this is due to the SAT solver's symmetry-breaking techniques.

\subsubsection{Triangular graphs} \label{triangularGraphs}
These highly symmetrical graphs can be built in many different ways. One possible approach is using \textit{Line graphs}. Line graphs are similar to the triangular approach we employed in Subsection \ref{triangular_view}. A line graph of a graph $G$, denoted with $L(G)$, represents the adjacency of \textit{edges}. Two edges are adjacent if they share a vertex.

Triangular graphs can be defined as the Line graph of a complete graph, i.e. $L(K_n)$. Let us now focus on how this graph is strongly regular. This graph consists of $\binom{n}{2}$ vertices, which represent the edges of $K_n$. In $K_n$, each edge shares a vertex with $2n - 2$ edges, and every two edges that share a vertex are adjacent to $n - 2$ other edges. If two edges do not share a vertex, name them $uv$ and $xy$, which means they are non-adjacent, they share four neighbours. These neighbours are edges $xu$, $xv$, $yu$, and $yv$. Therefore, the graph obtained is a strongly regular graph with parameters $(\binom{n}{2}, 2n - 2, n - 2, 4)$. These graphs are also sometimes referred to as \textit{perfect line graphs}.

The way we have built this graph already shows the high order of its automorphism group. The $Aut(L(K_n))$ contains $n!$ permutations, which are based on the initial $K_n$ graph.

There are more ways to build these graphs as well. For example, the complement of $T(n) = L(K_n)$ can be represented by \textbf{Kneser graphs}. A Kneser graph $K(n,k)$ is a graph whose vertices are $k$-element subsets of a set with $n$ elements. Two vertices are connected if their respective subsets are disjoint. The Petersen graph is also the Kneser graph $K(5,2)$, as shown in Figure \ref{petersenKneser}.

\begin{figure}
  \centering
  \includegraphics[width=0.6\textwidth]{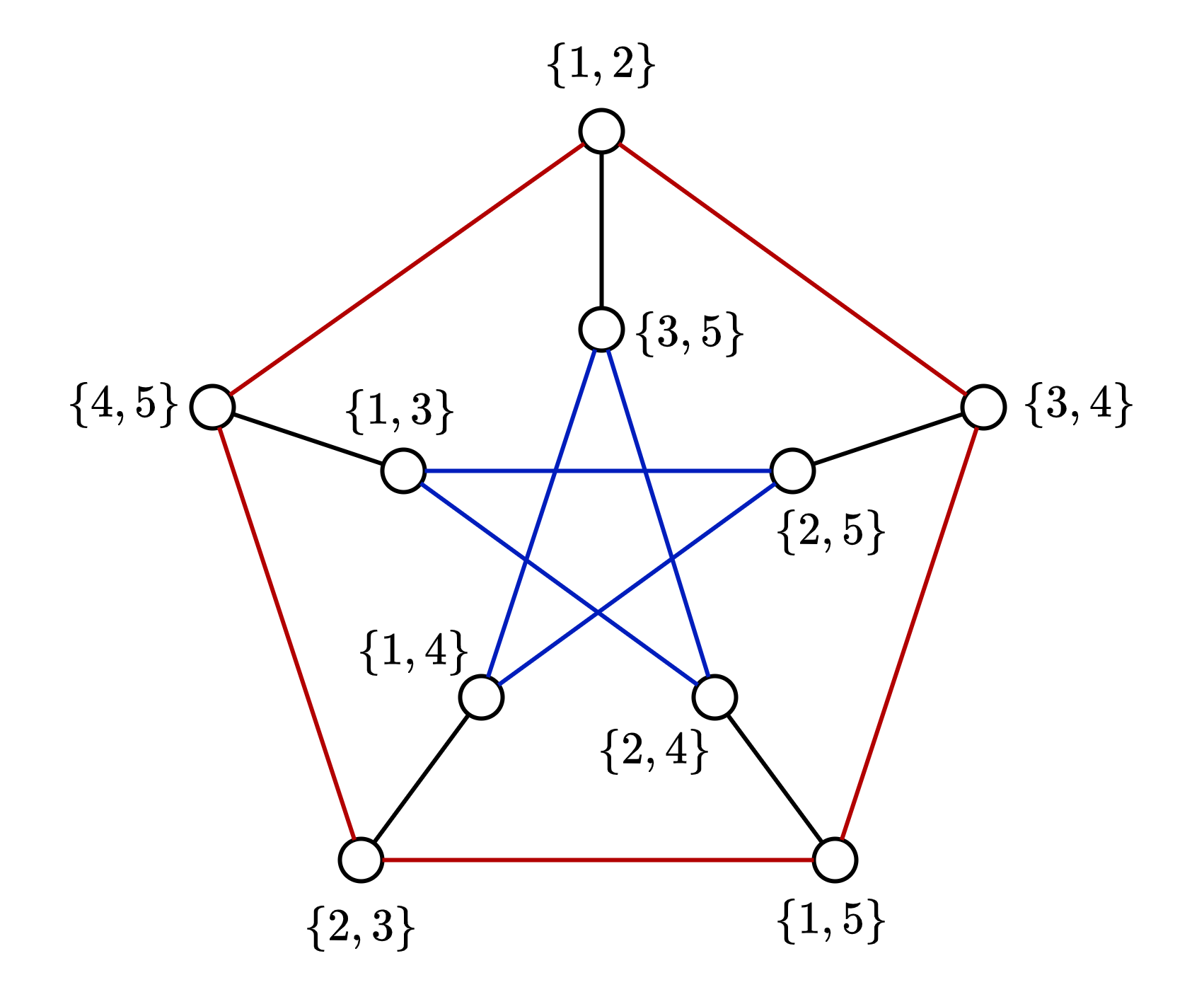}
  \caption{The Petersen or $K(5,2)$ graph. Two vertices are adjacent if their respective sets share no element.}
  \label{petersenKneser}
\end{figure}

The adjacency matrices of triangular graphs $T(n)$ are uniquely determined, up to isomorphism, by their spectrum and eigenvalues when $n$ is not 8. The other three Chang graphs have the same parameters as $T(8)$.

\subsubsection{Hoffman-Singleton}
Only one strongly regular graph with parameters $(50,7,0,1)$ exists. This graph is known as the Hoffman–Singleton graph. Its automorphism group is of order 252000.

\subsubsection{Sims-Gewirtz graph}
This is a unique, strongly regular graph with parameters $(56,10,0,2)$. Its automorphism group is of order 80640.

The results show that all the graphs that were found have large automorphism groups, which makes them highly symmetrical. SAT solvers, by applying their symmetry-breaking techniques, are able to reduce the search space.

\subsection{Graphs that were not found} \label{notFound}
We previously saw that for the set of parameters $(28,9,0,4)$, the SAT solver failed to yield any results while it was already unsatisfiable. We now study satisfiable parameters where the SAT solver still fails. Do they all have small automorphism groups?

\begin{table}[htbp]
\centering
\renewcommand{\arraystretch}{1.2}
\begin{tabular}{r|rrrr|rl}
\hline
 & $n$ & $k$ & $\lambda$ & $\mu$ & comment\\
\hline
1 & 29 & 14 & 6 & 7 & Paley(29)\\
2 & 35 & 16 & 6 & 8 & -\\
3 & 36 & 14 & 4 & 6 & -\\
4 & 36 & 16 & 6 & 6 & -\\
5 & 36 & 25 & 16 & 20 & $\overline{K_6 \square K_6}$\\
5 & 37 & 18 & 8 & 9 & Paley(37)\\
6 & 41 & 20 & 9 & 10 & Paley(41)\\
\hline
\end{tabular}
\caption{Parameter sets where SAT solver was unable to find a solution.}
\label{failedTests}
\end{table}

\subsubsection{Paley graphs}

We studied the construction method of Paley graphs in \ref{paleyGraphs}. Consider a Galois field $GF(q)$, where $q = p$, a prime number, and the set of its quadratic residues $QR(q)$. Define a function $f_{a,b}: GF(q) \rightarrow GF(q)$:
\begin{equation*}
    f_{a,b}(x) = ax + b,
\end{equation*}
where $a \in QR(q)$ and $b \in GF(q)$. There are $\frac{1}{2}(q - 1)$ quadratic residues; hence, coefficients $a$ and $b$ have $\frac{1}{2}(q - 1)q$ possible assignments to form linear functions. In a Paley graph $G = (GF(q), E)$, two vertices are connected if their difference is a quadratic residue. Since the product of two quadratic residues is itself a quadratic residue,
\begin{equation*}
    (x - y) \in QR(q) \Leftrightarrow (ax + b - ay - b) \in QR(q)
\end{equation*}
holds, as $a,(x-y) \in QR(q)$. This gives us $\frac{1}{2}(q - 1)q$ automorphisms. According to Carlitz's theorem \cite{carlitz1960theorem}, the Galois Field $GF(q)$ of a prime power $q = p^d$ possesses $d$ automorphisms.  Thus, by combining these automorphisms, an automorphism group of order $\frac{d}{2}(q - 1)q$ is found for Paley(q). It can also be proved no more automorphisms exist.

In Table \ref{failedTests}, we can see three sets of parameters that Paley graphs can satisfy. For parameter set $(29,14,6,7)$, a total of 41 non-isomorphic graphs have been found by computer search. A list of these graphs is compiled \href{http://www.maths.gla.ac.uk/~es/srgraphs.php}{here} by Ted Spence. Therefore, the Paley(41) graph was not the only answer to this SAT problem, and even more answers could have been found.

Based on these observations, we can see even for highly structured graphs like the Paley set, SAT solver fails to solve the problem. The two sets of graphs where SAT solver did not fail both have exponential automorphism orders, while the Paley graphs' automorphism set is polynomial.

\subsubsection{Rook graph's complement}
Rook graphs possessed a large automorphism group. However, SAT solver was unable to solve this instance $\overline{K_6 \square K_6}$, which is the Rook graph's complement, with high exponential symmetry.

\subsubsection{Graphs with 36 vertices}
McKay and Spence studied two parameter sets of strongly regular graphs in \cite{mckay2001classification}, which are $(36,15,6,6)$ and $(36,14,4,6)$.

For $(36,14,4,6)$, a total of 180 non-isomorphic instances were found. The parameter set $(36,15,6,6)$ has 32548 possible graphs. There are several possible solution instances available, however, they lack high symmetry, which poses a challenge for the SAT solver. The solver is unable to overcome this, and its performance is similar to an exhaustive search.

\subsubsection{The Conway-99 problem}
Based on our observation, we can infer that if $(99,14,1,2)$ strongly regular graphs exist, then each of these graphs will likely have a very small automorphism group.

\section{Alternative and additional methods}

\subsection{Additional constraints}
It is beneficial to have additional constraints to guide the SAT solver towards the correct solution. During our experiments to study parameter redundancy \ref{parameterRedundancy}, we ran multiple tests to see whether the SAT solver gives us a strongly regular graph based on only two parameters. For parameters $(28,12,6,4)$, based on $\lambda = 6, \mu = 4$ and $k= 12,\mu = 4$, and $k = 12,\lambda = 6$ the time it took was approximately $\approx 2.5$ minutes, $\approx 2.5$ hours, $\approx 9$ hours respectively.

After analysing this, it may be advantageous to include more restrictions in our SAT formula. A potential approach is to introduce constraints related to vertex colouring. It is common knowledge that all graphs can be coloured using $\Delta(G) + 1$ colours. Nevertheless, it may not be ideal to impose a strict limitation as the SAT solver will need to tackle both the chromatic number $\chi(G)$ problem and find the strongly regular graphs. In some instances, leaving the SAT solver to colour the graph with $2\Delta(G)$ colours sped up the process.

Also, more feasibility conditions exist for the class of strongly regular graphs as we studied them. Studying these conditions in more detail and encoding them into the SAT formula might make SAT solvers yield results faster.

\subsection{MAX-SAT and almost strongly regular graphs}
SAT solvers used in our studies are strict on satisfying all clauses. In order to obtain an almost strongly regular graph, we can remove some of the neighbouring or regularity constraints. We observed that even for small unsatisfiable constraints, they did not reach any terminal state. This could be the result for the Conway-99 parameters as well.

The maximum boolean satisfiability problem, \textbf{MAX-SAT}, asks for the maximum number of clauses that can be satisfied in the SAT formula. This problem is categorised as an NP-Hard problem. Although exact MAX-SAT solvers exist, approximate MAX-SAT solvers are of our interest here. While the solver is searching for the solution, we want the solver to provide us with the best solution it has found so far.

The combination of SAT and MAX-SAT problems gives us the \textbf{partial MAX-SAT problem}. In these formulas, some clauses are categorised as \textit{hard} clauses and others are categorised as \textit{relaxable} or \textit{soft} ones. The solver is forced to satisfy all hard clauses, while it looks for the maximum number of soft clauses it can satisfy.

By employing the MAX-SAT method, we can find solutions that are almost strongly regular. For example, we force the SAT solver to return a $k$-regular graph while it tries to satisfy the maximum number of neighbouring conditions. SAT solvers that can provide us with the best solution they have found so far can be great to use. Examining where they hit a dead-end and need to backtrack might give us new insights into this problem and let us come up with other general feasibility conditions.

The following theorem shows us another approach by removing the $\lambda$ conditions.
\begin{theorem}
    In a partial MAX-SAT formula, for a graph $G$ containing 99 vertices, if the 14-regularity and the $\mu = 2$ conditions are defined as hard clauses, and a $(99,14,1,2)$ strongly regular graph exists, then minimising the number of Claw subgraphs (as soft clauses) must satisfy the $\lambda = 1$ condition. If it does not, no such strongly regular graph exists.
\end{theorem}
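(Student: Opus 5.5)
The plan is a double counting argument. Under the hard constraints (14-regularity and $\mu = 2$), the total number of Cherries sitting over \emph{edges} is fixed. So the soft objective can only reach its ideal value when every edge carries exactly one Cherry, which is the $\lambda = 1$ condition. I read the ``Claw'' soft clauses as penalising each pair of distinct Cherries $c_{i,j,k}, c_{i,j,l}$ over the same adjacent pair $\{i,j\}$, i.e.\ edges with at least two common neighbours. The objective is then $\sum_{ij \in E} \binom{\lambda_{ij}}{2}$, where $\lambda_{ij}$ is the number of common neighbours of $i$ and $j$. The argument below only needs the objective to vanish exactly when every $\lambda_{ij} \le 1$, so it adapts to any soft encoding with that property.

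First I count all Cherries, that is, triples (apex $k$, unordered pair $\{i,j\} \subseteq N(k)$). Since the graph is 14-regular, each apex contributes $\binom{14}{2} = 91$, giving $99 \cdot 91 = 9009$ Cherries in total. Next I split them by whether $ij$ is an edge. The graph has $|E| = 99\cdot 14/2 = 693$ edges and hence $\binom{99}{2} - 693 = 4158$ non-adjacent pairs. By the hard $\mu = 2$ condition, each non-adjacent pair carries exactly two Cherries, for $8316$ in all. Therefore
\begin{equation*}
\sum_{ij \in E} \lambda_{ij} = 9009 - 8316 = 693 = |E|,
\end{equation*}
so the average of $\lambda_{ij}$ over edges is exactly $1$ in every feasible assignment.

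Now compare the two possibilities. If a $(99,14,1,2)$ strongly regular graph exists, it satisfies the hard clauses and every $\lambda_{ij} = 1$, so the optimum of the soft objective is $0$. Conversely, suppose an optimal assignment has objective $0$. Then every $\lambda_{ij} \le 1$. Combined with $\sum_{ij\in E}\lambda_{ij} = |E|$, this forces $\lambda_{ij} = 1$ for every edge, so $\lambda = 1$ holds and the graph is $(99,14,1,2)$ strongly regular by Definition~\ref{srgDefinition}. Hence if the minimum is positive, that is, the minimiser violates $\lambda = 1$, no such graph exists. This is the contrapositive claimed in the statement.

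The main obstacle is pinning down what ``minimising the number of Claw subgraphs'' should mean. Minimising the raw number of Cherries is useless, since by the count above it is constant ($9009$) on the feasible set. The real content of the proof is therefore the identity $\sum_{ij\in E}\lambda_{ij} = |E|$, and the choice of a soft objective that is convex in $\lambda_{ij}$, or at least zero exactly when $\lambda_{ij} \le 1$. With that choice, ``$\lambda_{ij}$ averages to $1$ and no $\lambda_{ij}$ exceeds $1$'' gives $\lambda = 1$ everywhere. I would state this interpretation explicitly at the start of the proof.
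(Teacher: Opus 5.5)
\textbf{Gap: your proof uses a different objective from the one in the statement.}

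Your double count is correct. The identity $\sum_{ij\in E}\lambda_{ij}=|E|=693$ is the global form of the paper's local fact that $N(v)$ spans exactly $7$ edges for every $v$, since $\sum_v e(N(v))=\sum_{ij\in E}\lambda_{ij}$.

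You are right that the raw number of cherries is constant on the feasible set. This matters, because the paper elsewhere lists ``claw'' as a synonym for cherry. But your repair, $\sum_{ij\in E}\binom{\lambda_{ij}}{2}$, counts pairs of triangles sharing an edge. That is not a claw count. It simply imposes $\lambda_{ij}\le 1$ softly, so only your averaging identity is left to prove.

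In the paper's proof the claws are induced $K_{1,3}$'s, i.e.\ triples of pairwise non-adjacent vertices inside some $N(v)$. That count never vanishes: under the hard clauses it is at least $280$ at every vertex. So your remark that the argument adapts to any encoding vanishing exactly when all $\lambda_{ij}\le 1$ does not cover it. The missing step is to show that the induced-claw count is minimised exactly when $\lambda=1$.

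\textbf{How to close it.} The hard clauses give $2e(N(v))=14\cdot 14-2\cdot 84-14$, so $e(N(v))=7$. A vertex $u\in N(v)$ has exactly $\lambda_{uv}$ neighbours in $N(v)$. Let $t_v$ be the number of triangles inside $N(v)$. Inclusion--exclusion over the triples of $N(v)$ gives
\begin{equation*}
\#\{\text{claws centred at } v\}=\binom{14}{3}-12\cdot 7+\sum_{u\in N(v)}\binom{\lambda_{uv}}{2}-t_v=280+\sum_{u\in N(v)}\binom{\lambda_{uv}}{2}-t_v .
\end{equation*}
Each triangle of $N(v)$ contains three cherries, so the sum is at least $3t_v$. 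Hence the count is at least $280$, with equality if and only if every $\lambda_{uv}\le 1$, i.e.\ $N(v)$ is a perfect matching.

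If the strongly regular graph exists, it attains the total $99\cdot 280$. Every minimiser must then meet the bound at every vertex, and so has $\lambda=1$. Summed over $v$, the total is $27720+2S-4K_4(G)$, where $S$ is your objective and $K_4(G)$ the number of $K_4$'s. So your quantity does appear, but only after this step.

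\textbf{Comparison with the paper.} The paper argues along the same local lines. It first asserts that $N(v)$ is a disjoint union of cliques, then evaluates $\binom{14}{3}-12c_2-34c_3-64c_4=280+2c_3+8c_4$. The inclusion--exclusion form reaches the same minimiser without that structural claim. This is worth having, because the paper's pigeonhole justification for the claim does not rule out a vertex of $V-N[v]$ adjacent to both ends of an edge of $N(v)$.
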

\begin{proof}
    In the graph $G$ obtained as a solution, consider an arbitrary vertex $v$. In our first step, we demonstrate $N(v)$ has a total of 7 edges. There are 84 vertices in $V - N[v]$, each connecting to exactly two vertices in $N(v)$ ($k = 14$ and $\mu = 2$ hard clauses). This implies $84 \times 2$ outgoing edges from $N[v]$ exist. Also, there are 14 edges among $v$ and $N(v)$. We can infer that
    \begin{equation*}
        \frac{14 \times 14 - 84 \times 2 - 14}{2} = 7
    \end{equation*}
    edges must be in $N(v)$.

    Each non-neighbour pair $\{u, w\} \subset N(v)$ must be connected to exactly one vertex in $V - N[v]$, outside $v$'s neighbourhood. Otherwise, there must exist a non-neighbouring pair of vertices sharing three neighbours, which breaks the $\mu = 2$ condition. Based on this, we can infer no Cherry induced subgraphs ($K_{1,2}$) can exist within $N(v)$. Therefore, in subgraph $N(v)$, all components must be complete ($K_i$ for $i \in \{1,2,3,4\}$). Figure \ref{N14neighbourhood} contains all possible $N(v)$ induced subgraphs.

    We now count the number of Claw subgraphs in possible subgraphs. Let $c_i$ be the number of $K_i$s in $N(v)$. The number of Claws is obtained by
    \begin{equation*}
        \binom{14}{3} - 12 \times c_2 - 11 \times 3c_3 - 10 \times 6c_4 - c_3 - 4c_4,
    \end{equation*}
    which would be minimised in case $a)$ of Figure \ref{N14neighbourhood}. In this case, graph $G$ has the $\lambda = 1$ parameter and is strongly regular.

\end{proof}

\begin{figure}
    \centering
    \includegraphics[width=\textwidth]{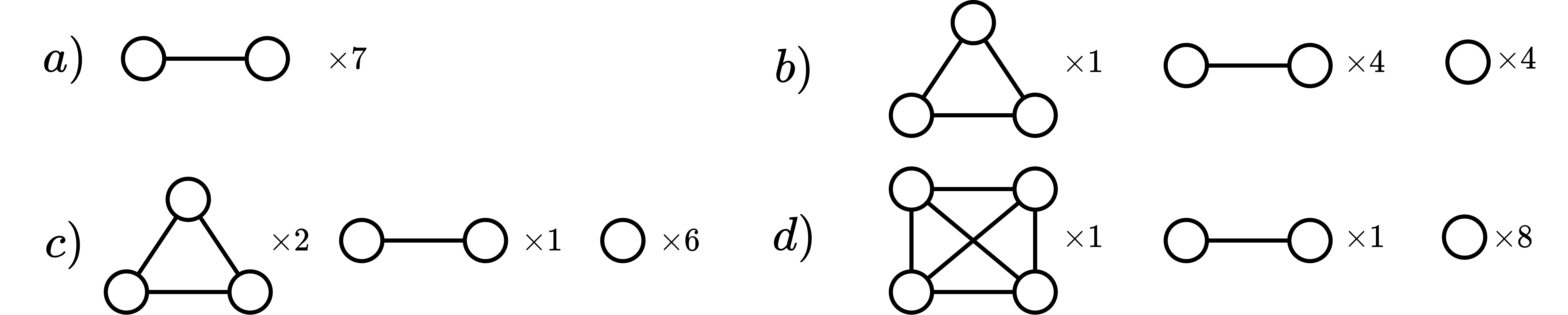}
    \caption{Possible $N(v)$ induced subgraphs}
    \label{N14neighbourhood}
\end{figure}

\chapter{Conclusions}
In ``Approaching the Conway-99 problem using SAT solvers", we studied strongly regular graphs both combinatorically and algebraically. We studied structures similar to the graph that the Conway-99 problem asks for and proved these patterns cannot hold in any solution for this problem.

To search for this problem, we decided to use SAT solvers to see whether they could find any solution to this problem. Therefore, we encoded this problem and transformed it into a SAT formula. As the SAT solver failed, we studied the reason behind this failure theoretically.

Countless ideas came to our minds on this beautiful research path. Unfortunately, we hit the wall of a deadline; otherwise, there is much to do research on. Some of these ideas will be listed here. The topics and ideas we gather from these sources will be valuable for our future research endeavours.

\section{Future work}
\begin{itemize}
    \item As we studied, the structure of strongly regular graphs could be applied to many areas. Can we use this structure to develop better SAT solvers? Are these structures useful for detecting symmetries faster? Are their irregular structures useful for solving asymmetries faster?
    \item While approaching this, because of the short research period, we considered SAT solvers as black boxes. In the future, by studying the methods used in SAT solvers and combining them with our problem-specific methods, we can generate a better search tool for this problem.
    \item In \ref{noPaley9Subgraph}, we conjectured that a Paley(9) subgraph cannot exist in a possible $(99,14,1,2)$ strongly regular graph. If a Paley(9) subgraph appears, it would be a dominating set of the whole graph. To approach this problem, we can consider dominating sets and see whether we can have a 9-vertex Paley(9) dominating set.
    \item In Table \ref{lambda1mu2Table}, we listed the possible instances of strongly regular graphs with properties $\lambda = 1$ and $\mu = 2$. The Berlekamp–Van Lint–Seidel graph was built based on $GF(3)$ and Paley(9). Another possible graph to be discovered is a $(6273,112,1,2)$ strongly regular graph. 
    
    Considering the prime factorisation of 6273
    \begin{equation*}
        6273 = 3^2 \times 17 \times 41,
    \end{equation*}
    we can see all three factors are congruent to 1 modulo 4, which means three Paley(9), Paley(17), and Paley(41) graphs exist. Can we use these three graphs to build a $(6273,112,1,2)$ strongly regular graph?
    \item In \ref{p9patternImpossible} we proved a $(99,14,1,2)$ strongly regular graph cannot follow the Paley(9) pattern. We used this in our SAT formula and preset the value of multiple edge boolean variables. It is highly likely we can prove a bound on the number of vertices following this pattern. Using this, we can reduce our search space by presetting more edge boolean variables.
    \item In 1990, Paul Seymour conjectured that in every directed graph, there is a vertex whose second neighbourhood, that is, the set of vertices of minimum distance 2, is as large as its first neighbourhood. Fisher, in \cite{fisher1996squaring}, partially proved this by proving it for only tournaments, a simple, directed, complete graph. The graph $K_n$ is a highly structured graph and is a trivial strongly regular graph.

    A question that arises is whether we can generalise this for all strongly regular graphs. This seems likely as they are highly structured. We proved this theorem for strongly regular graphs with $\lambda = 1$.
\end{itemize}

\section{Acknowledgements}
I am grateful for the guidance and support provided by Prof. Anuj Dawar throughout this project. I would like to express my heartfelt appreciation to my family, friends, and the esteemed members of Churchill College who motivated and assisted me on this journey.

\label{lastcontentpage} % end page count here
\label{lastpage}

\bibliographystyle{plain}
\bibliography{references}

\appendix
\chapter{Runtime test experiments}
{\small\begin{center}
		\begin{longtable}[htb]{rrrr|rl}
\hline
$n$ & $k$ & $\lambda$ & $\mu$ & time(s) & Comment\\
\hline
5& 2& 0& 1& 0.0 \\
9& 4& 1& 2& 0.001 \\
10& 3& 0& 1& 0.001 \\
10& 6& 3& 4& 0.002 \\
13& 6& 2& 3& 0.004 \\
15& 6& 1& 3& 0.007 \\ 
15& 8& 4& 4& 0.007 \\ 
16& 5& 0& 2& 0.006 \\
16& 6& 2& 2& 0.012 \\
16& 9& 4& 6& 0.012 \\
16& 10& 6& 6& 0.049 \\
17& 8& 3& 4& 0.407 \\
21& 10& 3& 6& 0.068 \\
21& 10& 4& 5& - & $\emptyset$ \\
21& 10& 5& 4& 0.32 \\
25& 8& 3& 2& 1.717 \\
25& 12& 5& 6& 67.719 \\
25& 16& 9& 12& 18.991 \\
26& 10& 3& 4& 10.584 \\
26& 15& 8& 9& 22.224 \\
27& 10& 1& 5& 0.587 \\
27& 16& 10& 8& 1.982 \\
28& 9& 0& 4& - & $\emptyset$ \\
28& 12& 6& 4& 0.48 \\
28& 15& 6& 10& 0.233 \\
28& 18& 12& 10& - & $\emptyset$\\
29& 14& 6& 7& - \\
33& 16& 7& 8& - & $\emptyset$ \\
35& 16& 6& 8& - \\
35& 18& 9& 9& - \\
36& 10& 4& 2& 10.271 \\
36& 14& 4& 6& - \\
36& 14& 7& 4& 577.235 \\
36& 15& 6& 6& - \\
36& 20& 10& 12& - \\
36& 21& 10& 15& - \\
36& 21& 12& 12& - \\
36& 25& 16& 20& - \\
37& 18& 8& 9& - \\
40& 12& 2& 4& - \\
40& 27& 18& 18& -\\
41& 20& 9& 10& - \\
45& 12& 3& 3& - \\
45& 16& 8& 4& 1749.487 \\
45& 22& 10& 11& - \\
45& 28& 15& 21& - \\
45& 32& 22& 24& - \\
49& 12& 5& 2& 2.994 \\
49& 16& 3& 6& - & $\emptyset$ \\
49& 18& 7& 6& - \\
49& 24& 11& 12& - \\
49& 30& 17& 20& - \\
49& 32& 21& 20& - & $\emptyset$ \\
49& 36& 25& 30& - \\
50& 7& 0& 1& 2.773 \\
50& 21& 4& 12& - & $\emptyset$ \\
50& 21& 8& 9& - \\
50& 28& 15& 16& - \\
50& 28& 18& 12& - & $\emptyset$ \\
50& 42& 35& 36& - \\
53& 26& 12& 13& - \\
55& 18& 9& 4& - \\
55& 36& 21& 28& - \\
56& 10& 0& 2& 13473.1 \\
56& 22& 3& 12& - & $\emptyset$ \\
56& 33& 22& 15& - & $\emptyset$ \\
56& 45& 36& 36& - \\
57& 14& 1& 4& - & $\emptyset$ \\
57& 24& 11& 9& - \\
57& 28& 13& 14& - & $\emptyset$ \\
57& 32& 16& 20& - \\
57& 42& 31& 30& - \\
61& 30& 14& 15& - \\
63& 22& 1& 11& - & $\emptyset$ \\
63& 30& 13& 15& - \\
63& 32& 16& 16& - \\
63& 40& 28& 20& - & $\emptyset$ \\
64& 14& 6& 2& 19.04 \\
64& 18& 2& 6& - \\
64& 21& 0& 10& - & $\emptyset$ \\
64& 21& 8& 6& - \\
64& 27& 10& 12& - \\
64& 42& 26& 30& - \\
64& 42& 30& 22& - & $\emptyset$ \\
64& 45& 32& 30& - \\
64& 49& 36& 42& - \\
\hline
\caption{All SAT tasks. SAT solver \red{clasp} was used. The character `-' means after $\approx 12$ hours, no result was obtained. Character $\emptyset$ denotes that no graphs with the respective parameters exist.}
	\label{allExperiments}
\end{longtable}
  \end{center}
}

\end{document}